\tikzstyle{main node} =[circle,fill=white!20,draw,font=\sffamily\Large\bfseries]
\tikzstyle{terminal}=[circle,fill=white!20,draw,font=\sffamily\Large\bfseries,color=purple,fill=none]
\definecolor{myurlcolor}{rgb}{0.6,0,0}
\definecolor{mycitecolor}{rgb}{0,0,0.8}
\definecolor{myrefcolor}{rgb}{0,0,0.8}
\newcommand{\CC}{\mathtt{C}}
\newcommand{\D}{\mathtt{D}}
\newcommand{\FinSet}{\mathtt{FinSet}}
\newcommand{\FinPopSet}{\mathtt{FinPopSet}}
\newcommand{\Set}{\mathtt{Set}}
\newcommand{\Mark}{\mathtt{Mark}}
\newcommand{\Circ}{\mathtt{Circ}}
\newcommand{\LinRel}{ \mathtt{LinRel}}
\newcommand{\LagRel}{ \mathtt{LagRel}}
\newcommand{\DetBalMark}{\mathtt{DetBalMark}}
\newcommand{\Cospan}{\mathtt{Cospan}}
\newcommand{\define}[1]{{\bf \boldmath{#1}}}
\newcommand{\maps}{\colon}
\newcommand{\R}{\mathbb{R}}
\newcommand{\Graph}{\mathrm{Graph}}
\newcommand{\beq}{\begin{equation}}
\newcommand{\eeq}{\end{equation}}
\theoremstyle{plain}
\newtheorem{thm}{Theorem}
\newtheorem{lem}[thm]{Lemma}
\newtheorem{prop}[thm]{Proposition}
\newtheorem{cor}[thm]{Corollary}
\newtheorem{defn}[thm]{Definition}
\theoremstyle{remark}
\begin{document}



\begin{center}   
  {\bf A Compositional Framework for Markov Processes \\}   
  \vspace{0.3cm}
  {\em John\ C.\ Baez \\}
  \vspace{0.3cm}
  {\small
 Department of Mathematics \\
    University of California \\
  Riverside CA, USA 92521 \\ and \\
 Centre for Quantum Technologies  \\
    National University of Singapore \\
    Singapore 117543  \\    } 
  \vspace{0.4cm}
  {\em Brendan Fong \\}
  \vspace{0.3cm}
  {\small Department of Computer Science  \\
    University of Oxford  \\
  United Kingdom OX1 3QD  \\ }
 \vspace{0.3cm}
{\em Blake S. Pollard \\ }
\vspace{0.3cm}
{\small Department of Physics and Astronomy \\
University of California \\
Riverside CA 92521 \\ }
  \vspace{0.3cm}   
  {\small email:  baez@math.ucr.edu, brendan.fong@cs.ox.ac.uk, bpoll002@ucr.edu\\} 
  \vspace{0.3cm}   
  {\small February 15, 2016}
  \vspace{0.3cm}   
\end{center}   

\begin{abstract}
\vskip 0.2em \noindent 
We define the concept of an `open' Markov process, or more precisely, continuous-time Markov chain, which is one where probability can flow in or out of certain states called `inputs' and `outputs'.  One can build up a Markov process from smaller open pieces.  This process is formalized by making open Markov processes into the morphisms of a dagger compact category.  We show that the behavior of a detailed balanced open Markov process is determined by a principle of minimum dissipation, closely related to Prigogine's principle of minimum entropy production. Using this fact, we set up a functor mapping open detailed balanced Markov processes to open circuits made of linear resistors.  We also describe how to `black box' an open Markov process, obtaining the linear relation between input and output data that holds in any steady state, including nonequilibrium steady states with a nonzero flow of probability through the system.  We prove that black boxing gives a symmetric monoidal dagger functor sending open detailed balanced Markov processes to Lagrangian relations between symplectic vector spaces.   This allows us to compute the steady state behavior of an open detailed balanced Markov process from the behaviors of smaller pieces from which it is built.   We relate this black box functor to a previously constructed black box functor for circuits.
\end{abstract}

\section{Introduction}\label{sec:intro}

A continuous-time Markov chain is a way to specify the dynamics of a population which is spread across some finite set of states. Population can flow between the states. The larger the population of a state, the more rapidly population flows out of the state.  Because of this property, under certain conditions the populations of the states tend toward an equilibrium where at any state the inflow of population is balanced by its outflow.  In applications to statistical mechanics, we are often interested in equilibria such that for any two states connected by an edge, say $i$ and $j$, the flow from $i$ to $j$ equals the flow from $j$ to $i$.  A continuous-time Markov chain with a chosen equilibrium of this form is called `detailed balanced'.  

In an electrical circuit made of linear resistors, charge can flow along wires.  In equilibrium, without any driving voltage from outside, the current along each wire will be zero, and the potential at each node will be equal.  This sets up an analogy between detailed balanced continuous-time Markov chains and electrical circuits made of linear resistors:

\vskip 1em
\begin{center}
\begin{tabular}[h]{|c|c|} \hline
\ \bf Circuits & \bf Detailed balanced Markov processes \\ \hline
Potential & Population \\ \hline
Current   & Flow \\ \hline
Conductance & Rate constant \\ \hline
Power & Dissipation \\ \hline
\end{tabular}
\end{center}

\vskip 1em 
\noindent
In this chart and all that follows, we call continuous-time Markov chains `Markov processes', just as an abbreviation.  They are really just an example of the more general concept of Markov process.  

This analogy is already well known \cite{Kelly,Kingman,Nash-Williams}, and Schnakenberg used it in network analysis of the master equation and biological systems \cite{SchnakenRev,SchnakenBook}, so our goal is to formalize and exploit it.  This analogy extends from systems in equilibrium to the more interesting case of nonequilibrium steady states, which are the topic of this paper.     

We have recently introduced a framework for `black boxing' a circuit and extracting the relation it determines between potential-current pairs at the input and output terminals \cite{BaezFong}.  This relation describes the circuit's external behavior as seen by an observer who can only perform measurements at the terminals.  An important fact is that black boxing is `compositional': if one builds a circuit from smaller pieces, the external behavior of the whole circuit can be determined from the external behaviors of the pieces.

Here we adapt this framework to detailed balanced Markov processes.  To do this
we consider `open' Markov processes.  In these, the total population is not
conserved: instead, population is allowed to flow in or out of certain
designated input and output states, or `terminals'.  We explain how to black
box any detailed balanced Markov process, obtaining a relation between
population--flow pairs at the terminals. By the `flow' at a state, we more
precisely mean the net population outflow. This relation holds not only in
equilibrium, but also in any nonequilibrium steady state.  Thus, black boxing an
open detailed balanced Markov process gives its steady state dynamics as seen by
an observer who can only measure populations and flows at the terminals.

At least since the work of Prigogine \cite{GP,Prigogine}, it is widely accepted that a large class of systems minimize entropy production in a nonequilibrium steady state.  However, the precise boundary of this class of systems, and even the meaning of this `principle of minimum entropy production', is much debated \cite{BMN,Landauer,Landauer2,MN}.  Lebon and Jou \cite{LJ} give an argument for it based on four conditions:
\begin{itemize}
\item 
 time-independent boundary conditions,
\item
linear phenomenological laws,
\item
constant phenomenological coefficients,
\item
symmetry of the phenomenological coefficients. 
\end{itemize}
The systems we consider are rather different from those often considered in nonequilibrium thermodynamics, where the `phenomenological laws' are often approximate descriptions of a macroscopic system that has a complicated microscopic structure.  However, the four conditions do apply to our systems. We only consider time-independent boundary conditions.  In our situation the `phenomenological laws' are the relations between flows and populations, and these are indeed linear.  The `phenomenological coefficients' are essentially the rate constants: the constants of proportionality between the flow from state $i$ to state $j$ and the population at state $i$.  These are not symmetric in a naive sense, but the detailed balance condition allows us to express them in terms of a symmetric matrix.  Thus, one should expect a principle of minimum entropy production to hold.  

In fact, we show that a quantity we call the `dissipation' is minimized in any steady state.  This is a quadratic function of the populations and flows, analogous to the power dissipation of a circuit made of resistors.  We make no claim that this quadratic function actually deserves to be called `entropy production'; indeed, Schnakenberg has convincingly argued that they are only approximately equal \cite{SchnakenRev}.  We plan to clarify this in future work, using the calculations in \cite{Pollard}.

This paper is organized as follows.  Section \ref{sec:overview} is an overview of 
the main ideas.  Section \ref{sec:markov} recalls continuous-time Markov chains, which
for short we call simply `Markov processes'.  Section \ref{sec:open_markov}
defines open Markov processes and the open master equation.  Section
\ref{sec:balance} introduces detailed balance for open Markov
processes.  Section \ref{sec:circuits} recalls the principle of minimum power
for open circuits made of linear resistors, and explains how to black box them.  
Section \ref{sec:dissipation} introduces the principle of minimum dissipation for 
open detailed balanced Markov processes, and describes how to black box these.  
Section \ref{sec:reduction} states the analogy between circuits and detailed balanced Markov processes in a formal way.   Section
\ref{sec:composing} describes how to compose open Markov processes, making them
into the morphisms of a category. Section \ref{sec:composing_detailed} does the
same for detailed balanced Markov processes.  Section \ref{sec:black_boxing}
describes the `black box functor' that sends any open detailed balanced Markov
process to the linear relation describing its external behavior, and recalls the
black box functor for circuits.   Section \ref{sec:reduction_2} makes the
analogy between between open detailed balanced Markov processes and open
circuits even more formal, by making it into a functor.  We prove that together
with the two black box functors, this forms a triangle that commutes up to
natural isomorphism.   In Section \ref{sec:symplectic} we prove that
the linear relations in the image of these black box functors are Lagrangian
relations between symplectic vector spaces, and show that the master equation
can be seen as a gradient flow equation.  We summarize our main findings in Section \ref{sec:conclusions}. A quick tutorial on decorated cospans, a key mathematical device 
in this paper, can be found in Appendix \ref{sec:decorated}.

\section{Overview}
\label{sec:overview}

This diagram summarizes our method of black boxing detailed balanced Markov processes: 
\[
   \xy
   (-20,20)*+{\DetBalMark}="1";
  (20,20)*+{\Circ}="2";
   (0,-10)*+{\LinRel}="5";
        {\ar^{K} "1";"2"};
        {\ar_{\square} "1";"5"};
        {\ar^{\blacksquare} "2";"5"};
\endxy
\]
Here $\DetBalMark$ is the main category of interest.  A morphism in this
category is a detailed balanced Markov process with specified `input' and
`output' states, which serve to define the source and target of the morphism:
\[
\begin{tikzpicture}[->,>=stealth',shorten >=1pt,thick,scale=1.1]
  \node[main node] (1) at (0,2.2) {$6$};
  \node[main node](2) at (0,-.2) {$\frac{1}{2}$};
  \node[main node](3) at (2.83,1)  {$1$};
  \node[main node](4) at (5.83,1) {$2$};
\node(input) at (-2,1) {\small{\textsf{inputs}}};
\node(output) at (7.83,1) {\small{\textsf{outputs}}};
  \path[every node/.style={font=\sffamily\small}, shorten >=1pt]
    (3) edge [bend left=12] node[above] {$4$} (4)
    (4) edge [bend left=12] node[below] {$2$} (3)
    (2) edge [bend left=12] node[above] {$2$} (3) 
    (3) edge [bend left=12] node[below] {$1$} (2)
    (1) edge [bend left=12] node[above] {$\frac{1}{2}$}(3) 
    (3) edge [bend left=12] node[below] {$3$} (1);
    
\path[color=gray, very thick, shorten >=10pt, ->, >=stealth] (output) edge (4);
\path[color=gray, very thick, shorten >=10pt, ->, >=stealth, bend left] (input) edge (1);
\path[color=gray, very thick, shorten >=10pt, ->, >=stealth, bend right]
(input) edge (2);
\end{tikzpicture}
\]
Call this morphism $M$.  In general each state may be specified as both an input and an output, or as inputs and outputs multiple times. The detailed balanced Markov process itself comprises a finite set of states together with a finite set of edges between them, with each state $i$ labelled by an equilibrium population $q_i >0$, and each edge $e$  labelled by a rate constant $r_e > 0$.  These populations and rate constants are required to obey the detailed balance condition. 

Note that we work with un-normalized probabilities, which we call `populations', rather than probabilities that must sum to 1.  The reason is that probability is not conserved: it can flow in or out at the inputs and outputs.  We allow it to flow both in and out at both the input states and the output states.

Composition in $\DetBalMark$ is achieved by identifying the output states of one
open detailed balanced Markov process with the input states of another. The
populations of identified states must match. For example, we may compose this
morphism $N$:
\[
\begin{tikzpicture}[->,>=stealth',shorten >=1pt,thick,scale=1.1]
  \node[main node] (4) at (0,1) {$2$};
  \node[main node](5) at (2,-1) {$4$};
  \node[main node](6) at (4,1) {$8$};
\node(input) at (-2,1) {\small{\textsf{inputs}}};
\node(output) at (6,1) {\small{\textsf{outputs}}};
  \path[every node/.style={font=\sffamily\small}, shorten >=1pt]
    (4) edge [bend left=12] node[above] {$2$} (5)
    (4) edge [bend left=12] node[above] {$12$} (6)
    (5) edge [bend left=12] node[below] {$1$} (4) 
    (5) edge [bend left=12] node[above] {$2$} (6)
    (6) edge [bend left=12] node[below] {$3$}(4) 
    (6) edge [bend left=12] node[below] {$1$} (5);
    
\path[color=gray, very thick, shorten >=10pt, ->, >=stealth] (output) edge (6);
\path[color=gray, very thick, shorten >=10pt, ->, >=stealth] (input) edge (4);
\end{tikzpicture}
\]
with the previously shown morphism $M$ to obtain this morphism $N \circ M$:
\[
\begin{tikzpicture}[->,>=stealth',shorten >=1pt,thick,scale=.9]
  \node[main node] (1) at (0,2.2) {$6$};
  \node[main node](2) at (0,-.6) {$\frac{1}{2}$};
  \node[main node](3) at (2.83,1)  {$1$};
  \node[main node](4) at (5.83,1) {$2$};
  \node[main node](5) at (7.83,-1) {$4$};
  \node[main node](6) at (9.83,1) {$8$};
\node(input) at (-2,1) {\small{\textsf{inputs}}};
\node(output) at (11.83,1) {\small{\textsf{outputs}}};
  \path[every node/.style={font=\sffamily\small}, shorten >=1pt]
    (3) edge [bend left=12] node[above] {$4$} (4)
    (4) edge [bend left=12] node[below] {$2$} (3)
    (2) edge [bend left=12] node[above] {$2$} (3) 
    (3) edge [bend left=12] node[below] {$1$} (2)
    (1) edge [bend left=12] node[above] {$\frac{1}{2}$}(3) 
    (3) edge [bend left=12] node[below] {$3$} (1)
    (4) edge [bend left=12] node[above] {$2$} (5)
    (4) edge [bend left=12] node[above] {$12$} (6)
    (5) edge [bend left=12] node[below] {$1$} (4) 
    (5) edge [bend left=12] node[above] {$2$} (6)
    (6) edge [bend left=12] node[below] {$3$}(4) 
    (6) edge [bend left=12] node[below] {$1$} (5);
    
\path[color=gray, very thick, shorten >=10pt, ->, >=stealth] (output) edge (6);
\path[color=gray, very thick, shorten >=10pt, ->, >=stealth, bend left] (input) edge (1);
\path[color=gray, very thick, shorten >=10pt, ->, >=stealth, bend right]
(input) edge (2);
\end{tikzpicture}
\]

Our category $\DetBalMark$ is actually a dagger compact category. This makes
other procedures on Markov processes available to us. An important one is
`tensoring', which allows us to model two Markov processes in parallel. This 
lets us take $M$ and $N$ above and `set them side by side', giving $M
\otimes N$:
\[
\begin{tikzpicture}[->,>=stealth',shorten >=1pt,thick,scale=1.1]
  \node[main node] (1) at (0,2.2) {$6$};
  \node[main node](2) at (0,-.2) {$\frac{1}{2}$};
  \node[main node](3) at (2.83,1)  {$1$};
  \node[main node](4) at (5.83,1) {$2$};
  \node[main node](4a) at (1,-1.2) {$2$};
  \node[main node](5) at (3,-3) {$4$};
  \node[main node](6) at (5,-1.2) {$8$};
\node(input) at (-2,-.2) {\small{\textsf{inputs}}};
\node(output) at (7.83,-.2) {\small{\textsf{outputs}}};
  \path[every node/.style={font=\sffamily\small}, shorten >=1pt]
    (3) edge [bend left=12] node[above] {$4$} (4)
    (4) edge [bend left=12] node[below] {$2$} (3)
    (2) edge [bend left=12] node[above] {$2$} (3) 
    (3) edge [bend left=12] node[below] {$1$} (2)
    (1) edge [bend left=12] node[above] {$\frac{1}{2}$}(3) 
    (3) edge [bend left=12] node[below] {$3$} (1)
    (4a) edge [bend left=12] node[above] {$2$} (5)
    (4a) edge [bend left=12] node[above] {$12$} (6)
    (5) edge [bend left=12] node[below] {$1$} (4a) 
    (5) edge [bend left=12] node[above] {$2$} (6)
    (6) edge [bend left=12] node[below] {$3$}(4a) 
    (6) edge [bend left=12] node[below] {$1$} (5);
    
\path[color=gray, very thick, shorten >=10pt, ->, >=stealth, bend right] (output) edge (4);
\path[color=gray, very thick, shorten >=10pt, ->, >=stealth, bend left] (input) edge (1);
\path[color=gray, very thick, shorten >=10pt, ->, >=stealth] (input) edge (2);
\path[color=gray, very thick, shorten >=10pt, ->, >=stealth, bend left] (output) edge (6);
\path[color=gray, very thick, shorten >=10pt, ->, >=stealth, bend right] (input) edge (4a);
\end{tikzpicture}
\]
Another important structure is the so-called compactness. This allows us to
take some inputs of an open Markov process and consider them instead as outputs, or
vice versa. For example, using the compactness of $\DetBalMark$ we may obtain
this open Markov process from $M$:
\[
\begin{tikzpicture}[->,>=stealth',shorten >=1pt,thick,scale=1.1]
  \node[main node] (1) at (0,2.2) {$6$};
  \node[main node](2) at (0,-.2) {$\frac{1}{2}$};
  \node[main node](3) at (2.83,1)  {$1$};
  \node[main node](4) at (5.83,1) {$2$};
\node(input) at (-2,1) {\small{\textsf{inputs}}};
\node(output) at (7.83,1) {\small{\textsf{outputs}}};
  \path[every node/.style={font=\sffamily\small}, shorten >=1pt]
    (3) edge [bend left=12] node[above] {$4$} (4)
    (4) edge [bend left=12] node[below] {$2$} (3)
    (2) edge [bend left=12] node[above] {$2$} (3) 
    (3) edge [bend left=12] node[below] {$1$} (2)
    (1) edge [bend left=12] node[above] {$\frac{1}{2}$}(3) 
    (3) edge [bend left=12] node[below] {$3$} (1);
    
\path[color=gray, very thick, shorten >=10pt, ->, >=stealth] (output) edge (4);
\path[color=gray, very thick, shorten >=10pt, ->, >=stealth, bend left] (input) edge (1);
\path[color=gray, very thick, shorten >=10pt, ->, >=stealth, bend left]
(output) edge (2);
\end{tikzpicture}
\]
In fact all our other categories are dagger compact categories too,
and our functors preserve this structure.  Dagger compact categories are a well-known
framework for describing systems with inputs and outputs \cite{AC,BaezStay,Se}. 

A morphism in the category $\Circ$ is an electrical circuit made of resistors: that is,
a (directed) graph with each edge labelled by a `conductance' $c_e > 0$, again with specified input and output nodes:
\[
\begin{tikzpicture}[circuit ee IEC, set resistor graphic=var resistor IEC graphic]
\node[contact] (I1) at (0,2) {};
\node[contact] (I2) at (0,0) {};
\node[contact] (O1) at (5.83,1) {};
\node(input) at (-2,1) {\small{\textsf{inputs}}};
\node(output) at (7.83,1) {\small{\textsf{outputs}}};
\draw (I1) 	to [resistor] node [label={[label distance=2pt]85:{$3$}}] {} (2.83,1);
\draw (I2)	to [resistor] node [label={[label distance=2pt]275:{$1$}}] {} (2.83,1)
				to [resistor] node [label={[label distance=3pt]90:{$4$}}] {} (O1);
\path[color=gray, very thick, shorten >=10pt, ->, >=stealth, bend left] (input) edge (I1);		\path[color=gray, very thick, shorten >=10pt, ->, >=stealth, bend right] (input) edge (I2);		
\path[color=gray, very thick, shorten >=10pt, ->, >=stealth] (output) edge (O1);
\end{tikzpicture}
\]

Finally, a morphism in the category $\LinRel$ is a linear relation $F \maps U \leadsto V$ between finite-dimensional real vector spaces $U$ and $V$; this is nothing but a linear subspace of $U \oplus V$.  In earlier work \cite{BaezFong} we introduced the category $\Circ$ and the `black box functor' 
\[    \blacksquare \maps \Circ \to \LinRel . \]
The idea is that any circuit determines a linear relation between the potentials and net current flows at the inputs and outputs.   This relation describes the behavior of a circuit of resistors as seen from outside.  

The functor $K$ converts a detailed balanced Markov process into an electrical circuit made of resistors.  This circuit is carefully chosen to reflect the steady-state behavior of the Markov process.   Its underlying graph is the same as that of the Markov process, so the `states' of the Markov process are the same as the `nodes' of the circuit.
Both the equilibrium populations at states of the Markov process and the rate constants labelling edges of the Markov process are used to compute the conductances of edges
of this circuit.  In the simple case where the Markov process has exactly one edge from
any state $i$ to any state $j$, the rule is
\[                 C_{i j} = H_{i j} q_j   \]
where:
\begin{itemize}
\item $q_j$ is the equilibrium population of the $j$th state of the Markov process,
\item  $H_{i j}$ is the rate constant for the edge from the $j$th state to the $i$th state of the Markov process, and
\item  $C_{i j}$ is the conductance (that is, the reciprocal of the resistance) of the wire from the $j$th node to the $i$th node
of the resulting circuit.
\end{itemize}
The detailed balance condition for Markov processes says precisely that
the matrix $C_{i j}$ is symmetric.  This is just right for an electrical circuit made of resistors, since it means that the resistance of the wire from node $i$ to node $j$ 
equals the resistance of the same wire in the reverse direction, from node $j$ to node $i$.

The functor $\square$ is the main new result of this paper.  It maps any detailed balanced Markov process to the linear relation obeyed by populations and flows at the inputs and outputs in a steady state.  In short, it describes the steady state behavior of the Markov process `as seen from outside'.  We draw this functor as a white box merely to
distinguish it from the other black box functor.

The triangle of functors thus constructed does not commute!   However, 
a general lesson of category theory is that we should only expect diagrams of
functors to commute \emph{up to natural isomorphism}, and this is what happens here:
\[
   \xy
   (-20,20)*+{\DetBalMark}="1";
  (20,20)*+{\Circ}="2";
   (0,-10)*+{\LinRel}="5";
        {\ar^{K} "1";"2"};
        {\ar_{\square} "1";"5"};
        {\ar^{\blacksquare} "2";"5"};
        {\ar@{=>}^<<{\scriptstyle \alpha} (1,11); (-3,8)};
\endxy
\]
This `corrects' the black box functor for resistors to give the one for detailed
balanced Markov processes.   The functors $\square$ and $\blacksquare
\circ K$ are equal on objects.  An object in $\DetBalMark$ is a finite set $X$
with each element $i \in X$ labelled by a positive population $q_i$; both these
functors map such an object to the vector space $\R^X \oplus \R^X$.    For the
functor $\square$, we think of this as a space of population-flow pairs.  For
the functor $\blacksquare \circ K$, we think of it as a space of
potential-current pairs, since $K$ converts Markov processes to circuits made of
resistors.  The natural transformation $\alpha$ then gives a linear relation 
\[ \alpha_{X,q} \maps \R^X \oplus \R^X \leadsto \R^X \oplus \R^X ,\]
in fact an isomorphism of vector spaces, which converts potential-current pairs
into population-flow pairs in a manner that depends on the $q_i$.  This
isomorphism maps any $n$-tuple of potentials and currents $(\phi_i, \iota_i)$
into the $n$-tuple of populations and flows $(p_i, j_i)$ given by
\[          p_i = \phi_i q_i,  \qquad   j_i = \iota_i  .\]

The naturality of $\alpha$ actually allows us to reduce the problem of computing
the functor $\square$ to the problem of computing $\blacksquare$.   Suppose $M
\maps (X,q) \to (Y,r)$ is any morphism in $\DetBalMark$.  The object $(X,q)$ is
some finite set $X$ labelled by populations $q$, and $(Y,r)$ is some finite set
$Y$ labelled by populations $r$.  Then the naturality of $\alpha$ means that
this square commutes:
\[
  \xymatrix{
    \R^X \oplus \R^X \ar[rr]^{\blacksquare K(M)} \ar[dd]_{\alpha_{X,q}} &&  \R^Y \oplus \R^Y 
    \ar[dd]^{\alpha_{Y,r}} \\ \\ 
    \R^X \oplus \R^X \ar[rr]^{\square(M)} &&  \R^Y \oplus \R^Y 
  }
\]
Since $\alpha_{X,q}$ and $\alpha_{Y,r}$ are isomorphisms, we can solve for the
functor $\square$:
\[   \square(M) = \alpha_{Y,r}\circ \blacksquare K(M) \circ \alpha_{X,q}^{-1}  .
\]
This equation has a clear intuitive meaning: it says that to compute the behavior of
a detailed balanced Markov process, namely $\square(M)$, we convert it into a circuit made of resistors and compute the behavior of that, namely $\blacksquare K(M)$.  This is
not \emph{equal} to the behavior of the Markov process, but we can compute that 
behavior by converting the input populations and flows into potentials and
currents, feeding them into our circuit, and then converting the outputs back into populations and flows. 

\section{Markov processes}       
\label{sec:markov}

We define a `Markov process' to be a graph with nodes labelled by `populations'
and edges labelled by `rate constants'.   More precisely:

\begin{defn}
A \define{Markov process} $M$ is a diagram
\[ \xymatrix{ (0,\infty) & E \ar[l]_-r \ar[r]<-.5ex>_t  \ar[r] <.5ex>^s & N  }  \]
where $N$ is a finite set of \define{nodes} or \define{states}, $E$ is a finite set of \define{edges}, $s,t \maps E \to N$ assign to each edge its \define{source} and \define{target}, and $r \maps E \to (0,\infty)$ assigns a \define{rate constant} $r_e$ to each edge $e \in E$. 

In this situation we call $M$ a \define{Markov process on} $N$.  If $e \in E$ has source $i$ and target $j$, we write $e \maps i \to j$.
\end{defn}

From a Markov process on $N$ we can construct a square matrix of real numbers, or more precisely a function $H \maps N \times N \to \R$, called its \define{Hamiltonian}.
If $i \ne j$ we define
\[ H_{i j} = \sum_{e \maps j \to i} r_e  \]
to be the sum of the rate constants of all edges from $j$ to $i$.   We choose the diagonal
entries in a more subtle way:
\[  H_{i i} = -\sum_{\substack{e \maps i \to j \\ j \ne i}} r_e . \]
Given a probability distribution $p$ on $N$, for $i \ne j$ we interpret $H_{ij}p_j $ as the rate at which population flows from state $j$ to state $i$, while the quantity $H_{ii}p_i$ is the outflow of population from state $i$.  We choose the diagonal entries $H_{ii}$ in a way that ensures total population is conserved.   This means that $H$ should be `infinitesimal stochastic'.

\begin{defn} An $n \times n$ matrix $T$ is \define{stochastic} if it maps probability 
distributions to probability distributions: i.e., if the vector $p \in \R^n$ has entries $p_i \ge 0$ and
$\sum_i p_i = 1$, then the same holds for the vector $T p$.
\end{defn}

\begin{defn} An $n \times n$ matrix $A$ is \define{infinitesimal stochastic} if
$\exp(t A)$ is stochastic for all $t \ge 0$.
\end{defn}

\begin{lem} \label{lem.infstoch}
  An $n \times n$ matrix $A$ is infinitesimal stochastic if and only if its off-diagonal entries are non-negative and the entries in each column sum to zero:
\[ A_{ij} \geq 0 \hspace{1em} \textrm{ if } i \neq j \ \\ \]
\[ \sum_i A_{ij} = 0.\] 
\end{lem}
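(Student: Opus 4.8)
The plan is to reduce everything to two elementary facts: a pointwise characterization of ordinary stochastic matrices, and the behaviour of the matrix exponential. First I would record that an $n \times n$ matrix $T$ is stochastic if and only if all its entries are non-negative and each of its columns sums to one. The forward implication follows by feeding the standard basis vectors $e_j$, each a probability distribution, into $T$: this forces $T_{ij} \ge 0$ and $\sum_i T_{ij} = 1$. The reverse implication is immediate, since non-negativity together with unit column sums guarantees that $Tp$ is again a probability distribution whenever $p$ is. This auxiliary fact lets me translate the condition ``$\exp(tA)$ is stochastic for all $t \ge 0$'' into two conditions on the entries and column sums of $\exp(tA)$.

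For the direction asserting that infinitesimal stochasticity implies the sign and column conditions on $A$, I would exploit that the conditions on $\exp(tA)$ hold for all $t \ge 0$ and differentiate at $t = 0$. Writing $\mathbf{1}$ for the all-ones row vector, stochasticity gives $\mathbf{1}^\top \exp(tA) = \mathbf{1}^\top$; differentiating at $t = 0$ yields $\mathbf{1}^\top A = 0$, which is exactly the statement that every column of $A$ sums to zero. For the off-diagonal signs I would use the first-order expansion $\exp(tA) = I + tA + O(t^2)$, so that for $i \ne j$ the entry $(\exp(tA))_{ij} = t\, A_{ij} + O(t^2)$ is non-negative for all small $t > 0$; dividing by $t$ and letting $t \to 0^+$ forces $A_{ij} \ge 0$.

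For the converse, I would verify the two defining properties of stochasticity for $\exp(tA)$ separately. The column-sum hypothesis $\mathbf{1}^\top A = 0$ gives $\mathbf{1}^\top A^k = 0$ for all $k \ge 1$, hence $\mathbf{1}^\top \exp(tA) = \mathbf{1}^\top$, so every column of $\exp(tA)$ sums to one. Non-negativity is the only delicate point: the off-diagonal entries of $A$ are non-negative but the diagonal entries need not be, so $\exp(tA)$ is not a priori a series with non-negative terms. The remedy is to shift: choose $c > 0$ with $c \ge -\min_i A_{ii}$ so that $B := A + cI$ has all entries non-negative. Then $\exp(tB)$ is a convergent power series in $tB$ with non-negative coefficients and hence has non-negative entries, and since $cI$ is central, $\exp(tA) = e^{-ct}\exp(tB)$ is a positive scalar times a non-negative matrix. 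Thus $\exp(tA)$ has non-negative entries and unit column sums for every $t \ge 0$, so it is stochastic and $A$ is infinitesimal stochastic.

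The main obstacle is exactly this non-negativity in the converse direction; everything else is bookkeeping with the exponential series and a differentiation at $t=0$. The shift $A \mapsto A + cI$ is what makes the argument go through, and it is worth emphasizing that this shift is precisely where the hypothesis on the off-diagonal signs is used.
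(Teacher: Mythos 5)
Your proof is correct and complete. Note, however, that the paper itself does not prove this lemma: it simply declares the result ``well known'' and cites the reference [BB], so there is no argument in the text to compare yours against. What you have written is the standard proof one would find in that reference --- the pointwise characterization of stochastic matrices via the basis vectors $e_j$, differentiation of $\mathbf{1}^\top \exp(tA)=\mathbf{1}^\top$ and of the first-order expansion at $t=0^+$ for necessity, and the shift $B=A+cI$ combined with $\exp(tA)=e^{-ct}\exp(tB)$ for sufficiency --- and you correctly identify the non-negativity of $\exp(tA)$ in the converse as the one genuinely non-trivial step. (A trivial slip: you call $\mathbf{1}$ a row vector and then write $\mathbf{1}^\top\exp(tA)$; take $\mathbf{1}$ to be the all-ones column vector so that $\mathbf{1}^\top$ is the row vector acting on the left.)
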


\begin{proof} This is well known \cite{BB}.
\end{proof}

\begin{lem}  If $M$ is a Markov process and $H$ is its Hamiltonian, then $H$ is
infinitesimal stochastic.
\end{lem}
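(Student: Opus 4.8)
The plan is to invoke the characterization of infinitesimal stochastic matrices from Lemma~\ref{lem.infstoch}, which reduces the claim to two bookkeeping checks on the entries of $H$: that each off-diagonal entry is non-negative, and that each column sums to zero. Both follow directly from the definition of the Hamiltonian, so the proof should be a short computation rather than anything structural.

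First I would handle the off-diagonal entries. For $i \neq j$ we have $H_{ij} = \sum_{e \maps j \to i} r_e$, a sum of rate constants. Since every rate constant satisfies $r_e > 0$ and the sum is empty (hence $0$) when there are no edges from $j$ to $i$, each such $H_{ij}$ is a sum of non-negative terms and is therefore $\geq 0$. This disposes of the first condition.

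The second condition---that $\sum_i H_{ij} = 0$ for every fixed column index $j$---is the only step requiring any care, since it is where the deliberately chosen diagonal entries earn their keep. I would split the column sum as $\sum_i H_{ij} = H_{jj} + \sum_{i \neq j} H_{ij}$. The off-diagonal part expands to $\sum_{i \neq j}\sum_{e \maps j \to i} r_e$, i.e.\ the sum of $r_e$ over all edges leaving $j$ whose target is not $j$. But this is exactly $-H_{jj}$ by the definition $H_{jj} = -\sum_{e \maps j \to k,\, k \neq j} r_e$, the two expressions differing only in the name of the dummy target index. Hence the diagonal term cancels the off-diagonal column sum and $\sum_i H_{ij} = 0$.

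With both conditions verified, Lemma~\ref{lem.infstoch} gives that $H$ is infinitesimal stochastic, completing the argument. I do not anticipate a genuine obstacle here; the one pitfall to stay alert to is the column-versus-row convention, since the defining sum for $H_{ij}$ ranges over edges $j \to i$ (source $j$, target $i$), so it is indeed the columns, not the rows, that must sum to zero---matching the convention adopted in Lemma~\ref{lem.infstoch}.
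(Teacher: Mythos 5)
Your proposal is correct and follows the same route as the paper: both verify the two conditions of Lemma~\ref{lem.infstoch}, noting that off-diagonal entries are sums of positive rate constants and that the column sum $H_{jj} + \sum_{i \neq j} H_{ij}$ vanishes because the diagonal entry was defined precisely as the negative of the off-diagonal column sum. Your remark about the column-versus-row convention is a sensible precaution but raises no issue here.
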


\begin{proof} This is well known \cite{BB}, but easy to show. By definition, the off-diagonal terms of $H$ are nonnegative.  We also have
\[\sum_i H_{ij} = H_{jj}+ \sum_{i \ne j} H_{ij} =- \sum_{\substack{e \maps j \to
i \\ i \ne j}} r_e + \sum_{\substack{e \maps j \to i \\ i \ne j}} r_e  = 0.  \qedhere \]
\end{proof}

\begin{defn} Given a Markov process, the \define{master equation} for a function $p \maps [0,\infty) \to \R^N$ is
\[ \frac{d}{dt}p(t) = Hp(t) \]
where $H$ is the Hamiltonian. 
\end{defn}

Solutions of the master equation are of the form
\[    p(t) = \exp(t H) p(0),
\]
where $p(0) \in \R^N$.
Since $\exp(t H)$ is stochastic for $t \ge 0$, if $p(t)$ is a probability distribution 
at time $t$ then it will be a probability distribution at all later times. This is not true for the `open' master equation discussed in the next section, and that is why we need to work with more general `populations' that do not obey $\sum_i p_i =1$.

\section{Open Markov processes}
\label{sec:open_markov}

We will be interested in `open' Markov processes, where population can flow in
or out of certain nodes.  One reason is that non-equilibrium steady states of
open Markov processes play a fundamental role in stochastic thermodynamics.  In
Section \ref{sec:composing}, we will show that open Markov processes can be seen as
morphisms of a category.  Composing these morphisms gives a way to build Markov
processes out of smaller open pieces.  

To create a category where morphisms are, roughly, open Markov processes, we use the formalism of decorated cospan categories \cite{Fong}. Decorated cospans provide a powerful way to describe the interfacing of systems with inputs and outputs. A cospan describes the inputs and outputs of a system. The `decoration' carries the complete description of this system.

Mathematically, a \define{cospan} of sets consists of a set $N$ together with functions
$i \maps X \to N$ and $o \maps Y \to N$.  We draw a cospan as follows:
\[ \xymatrix{ & N & \\ 
X \ar[ur]^{i} & &  Y \ar[ul]_{o} \\ } \]
The set $N$ describes the system, $X$ describes its inputs, and $Y$ its outputs.  
We shall be working with a number of examples, but for open Markov processes $N$ is the set of states of some Markov process, and the maps $i \maps X \to N$ and $o \maps Y \to N$ specify how the input and output states are included in $N$.   We do not require these maps to be one-to-one.

We then `decorate' the cospan with a complete description of the system. For example, to decorate the above cospan with a Markov process, we attach to it the extra data 
\[ \xymatrix{ (0,\infty) & E \ar[l]_-r \ar[r]<-.5ex>_t \ar[r] <.5ex>^s & N  } \] 
describing a Markov process with $N$ as its set of states.  Thus, we make the following definition:

\begin{defn}
Given finite sets $X$ and $Y$, an \define{open Markov process from} $X$ \define{to} $Y$ is a cospan of finite sets
\[ \xymatrix{ & N & \\
X \ar[ur]^{i} && Y \ar[ul]_{o} \\ } \]
together with a Markov process $M$ on $N$.  We often abbreviate such an open Markov process simply as $M \maps X \to Y$.  We say $X$ is the set of \define{inputs} of the open Markov process and $Y$ is its set of \define{outputs}. We define a \define{terminal} to 
be a node in $T = i(X) \cup o(Y)$, and call a node 
\define{internal} if it is not a terminal.
\end{defn}

The key new feature of an open Markov process is that population can flow in and out of its terminals. To describe these modified dynamics we introduce the `open master equation'.   Here the populations at the terminals are specified functions of time, while the populations at internal nodes obey the usual master equation:

\begin{defn}
Given an open Markov process $M \maps X \to Y$ consisting of a cospan of finite sets
\[ \xymatrix{ & N & \\
X \ar[ur]^{i} && Y \ar[ul]_{o} \\ } \]
together with a Markov process
\[ \xymatrix{ (0,\infty) & E \ar[l]_-r \ar[r]<-.5ex>_t \ar[r] <.5ex>^s & N }  \]
on $N$, we say
a time-dependent population $p(t) \maps N \to (0,\infty)$ (where $t \in
[0,\infty)$) 
is a solution of the \define{open master equation} with boundary conditions 
$f(t) \maps T \to [0,\infty)$ if 
\[ \begin{array}{ccll}\displaystyle{ \frac{d}{dt}p_i(t) } &=& \displaystyle{
\sum_j H_{ij} p_j(t)}, &  i \in N-T \\  \\
 p_i(t) &=& f_i(t), & i \in T. 
\end{array}\]
\end{defn}

We will be especially interested in `steady state' solutions of the open master equation:

\begin{defn}
A \define{steady state} solution of the open master equation is a solution $p(t) \maps N \to [0,\infty)$ such that $\frac{dp}{dt} = 0$. 
\end{defn}

In Section \ref{sec:balance} we turn to open Markov processes obeying the principle of detailed balance.  In Section \ref{sec:dissipation} we show that for these, steady state solutions of the open master equation minimize a certain quadratic form.  This is analogous to the minimization of power dissipation by circuits made of linear resistors.   In Section \ref{sec:reduction_2}, this analogy lets us reduce the black boxing problem for detailed balanced Markov processes to the analogous, and already solved, problem for circuits of resistors.

\section{Detailed balance} 
\label{sec:balance}

We are especially interested in Markov processes where the detailed balance 
condition holds for a specific population $q$.  This means that with this particular choice
of $q$, the flow of population from $i$ to $j$ equals the flow from $j$ to $i$.  This ensures that $q$ is an equilibrium, but it is a significantly stronger condition.

\begin{defn}
A \define{Markov process with populations} 
\[ \xymatrix{  (0,\infty) & E \ar[l]_-r \ar[r]<-.5ex>_t  \ar[r] <.5ex>^s & N 
\ar[r]^-q & (0,\infty)  }  \]
is a Markov process
\[ \xymatrix{   (0,\infty) & E \ar[l]_-r \ar[r]<-.5ex>_t  \ar[r] <.5ex>^s & N 
  }  \]
together with a map $q \maps N \to (0,\infty)$ assigning to each state a \define{population}. 
\end{defn}

\begin{defn}
We say a Markov process with populations is in \define{equilibrium} if $H q = 0$
where $H$ is the Hamiltonian associated to that Markov process.
\end{defn}

\begin{defn}
We say a Markov process with populations is a \define{detailed balanced Markov process} if 
\[     H_{ij} q_j = H_{ji} q_i 
\]
for all nodes $i, j \in N$. We call the above equation the \define{detailed balance condition}.
\end{defn}

In terms of the edges of the Markov process, we may rewrite the detailed balance
condition as
\[      
\sum_{e \maps i \to j} r_e q_i = \sum_{e \maps j \to i} r_e q_j 
\] 
for all $i,j \in N$.

\begin{prop} 
  If a Markov process with populations obeys the detailed balance condition, it
  is in equilibrium.  
\end{prop}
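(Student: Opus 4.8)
The plan is to show directly that the detailed balance condition forces $Hq = 0$, which is precisely the definition of equilibrium given above. The strategy is a short index manipulation: rewrite the $i$-th component of $Hq$ using detailed balance so that every term acquires a common factor of $q_i$, and then invoke the fact that $H$ is infinitesimal stochastic to annihilate the remaining sum.

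First I would write out the $i$-th entry of the vector $Hq$ as
\[ (Hq)_i = \sum_{j} H_{ij} q_j . \]
Next, to each summand I apply the detailed balance condition $H_{ij} q_j = H_{ji} q_i$, obtaining
\[ (Hq)_i = \sum_j H_{ji} q_i = q_i \sum_j H_{ji} . \]
At this point the sum $\sum_j H_{ji}$ is the sum of the entries in the $i$-th column of $H$, since the second index is held fixed at $i$ while the first index $j$ varies.

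Now I would recall that, by the earlier lemma identifying Markov-process Hamiltonians as infinitesimal stochastic matrices, the entries in each column of $H$ sum to zero; in the notation of Lemma \ref{lem.infstoch} this is the statement $\sum_k H_{ki} = 0$. Hence $\sum_j H_{ji} = 0$, so $(Hq)_i = q_i \cdot 0 = 0$ for every $i \in N$. Since this holds componentwise, $Hq = 0$, which is exactly the condition for the Markov process with populations to be in equilibrium, and the proof is complete.

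I do not expect any genuine obstacle here: the content is entirely a one-line computation. The only point demanding care is bookkeeping of the index convention, so that the sum produced after applying detailed balance really is a \emph{column} sum rather than a row sum, matching the column-sum-to-zero form of the infinitesimal stochastic condition. A row-sum reading would be incorrect, since the rows of $H$ need not sum to zero; it is essential that detailed balance transposes the indices of $H$ in exactly the way that converts the naive expansion of $(Hq)_i$ into a column sum.
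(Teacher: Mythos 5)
Your proof is correct and uses essentially the same ingredients as the paper's: the detailed balance condition together with the column-sum-to-zero property of the infinitesimal stochastic matrix $H$. The only difference is cosmetic — the paper first splits off the diagonal term via $H_{ii} = -\sum_{j\ne i}H_{ji}$ and then kills each paired term $H_{ij}q_j - H_{ji}q_i$ by detailed balance, whereas you apply detailed balance first to factor out $q_i$ and then invoke the column sum — and your index bookkeeping (that $\sum_j H_{ji}$ is indeed a column sum) is right.
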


\begin{proof}
  This is well known \cite{BB}. For the sake of completeness, however, recall
  that $H$ is infinitesimal stochastic, so by Lemma \ref{lem.infstoch} we have
  $H_{ii} = -\sum_{j \ne i} H_{ji}$. Then given the population $q$, at state $i$
  the master equation reads
  \[ 
    \frac{dq_i}{dt} = \sum_{j} H_{ij} q_j =  H_{ii}q_i + \sum_{j \neq i} H_{ij}q_j
    = \sum_{j \neq i} \Big(H_{ij}q_j-H_{ji}q_i\Big).  
  \]
  By detailed balance, this is zero.
\end{proof}

We are particularly interested in `open' detailed balanced Markov processes. These go between finite sets with populations. 

\begin{defn}
A \define{finite set with populations} is a finite set $X$ together with a map
$q\maps X \to (0,\infty)$.   We write $q_i \in (0,\infty)$ for the value of $q$
at the point $i \in X$, and call $q_i$ the \define{population} of $i$.
\end{defn}

\begin{defn}
\label{defn:FinPopSet}
Given two finite sets with populations $(X,q)$ and $(X',q')$, we define a \define{map} 
$f \maps (X,q) \to (X',q')$ to be a function $f \maps X \to X'$ preserving the populations: 
$q = q' f$.   We let $\FinPopSet$ be the category of finite sets with populations and maps 
between them.
\end{defn}

Note that any detailed balanced Markov process with populations $M$, say
\[ 
  \xymatrix{  (0,\infty) & E \ar[l]_-r \ar[r]<-.5ex>_t  \ar[r] <.5ex>^s & N 
\ar[r]^-q & (0,\infty) , }  
\]
has an underlying finite set with populations $(N,q)$.  We say $M$ is a detailed balanced Markov process \define{on} $(N,q)$.

\begin{defn}
Given finite sets with populations $(X,u)$ and $(Y,v)$, 
an \define{open detailed balanced Markov process from} $(X,u)$ \define{to} $(Y,v)$ 
is a cospan of finite sets with populations:
\[ \xymatrix{ & (N,q) & \\
(X,u) \ar[ur]^{i} && (Y,v) \ar[ul]_{o} \\ } \]
together with a detailed balanced Markov process $M$ on $(N,q)$.   We often abbreviate this as $M \maps (X,u) \to (Y,v)$.
\end{defn}

Note that as $i$ and $o$ are maps of finite sets with populations, we must have
$u=qi$ and $v=qo$. 

Here is an example:
\[
\begin{tikzpicture}[->,>=stealth',shorten >=1pt,auto,node distance=3.7cm,
  thick]
\node[main node](1) {$\frac{8}{3}$};
  \node[main node](2) [below=2.4cm of 1] {$2$};
  \node[main node](3) [below right=1cm and 2.7cm of 1]  {$\frac{1}{3}$};
  \node[main node](4) [right of=3] {$1$};
\node [terminal,below left=0.5cm and 1cm of 1](A) {$\frac{8}{3}$};
\node [terminal,above left=0.5 cm and 1cm of 2](B) {$2$};
\node [above left=1 cm and 2cm of 2,color =purple](X) {$X$};
\node [terminal,right =1 cm of 4](C) {$1$};
\node [right=0.3cm of C,color =purple](Y) {$Y$};
  \path[every node/.style={font=\sffamily\small}, shorten >=1pt]
    (3) edge [bend left=15] node[above] {$3$} (4)
    (4) edge [bend left=15] node[below] {$1$} (3)
    (2) edge [bend left=15] node[above,midway] {$2$} (3) 
    (3) edge [bend left=15] node[below] {$12$} (2)
   (1) edge [bend left=15] node[above] {$\frac{1}{2}$}(3) 
    (3) edge [bend left=15] node[below, near end] {$4$} (1);
    
\path[color=purple, very thick, shorten >=10pt, ->, >=stealth] (C) edge (4);
\path[color=purple, very thick, shorten >=10pt, ->, >=stealth] (A) edge (1);
\path[color=purple, very thick, shorten >=10pt, ->, >=stealth] (B) edge (2);
\end{tikzpicture}
\]

\section{Circuits}
\label{sec:circuits}

There is a long-established analogy between circuits and detailed balanced
Markov processes \cite{Kelly,Kingman,Nash-Williams,SchnakenRev,SchnakenBook}.
Exploiting this analogy lets us reduce the process of black boxing detailed balanced Markov processes to the analogous process for circuits.
First, however, we must introduce some of the fundamentals of circuit theory.  
In what follows we use `circuit' as shorthand for `circuit made of linear
resistors', though our previous work \cite{BaezFong} was more general.

\begin{defn}
A \define{circuit} $C$ is a graph with edges labelled by positive real numbers:
\[ \xymatrix{  (0,\infty) & E \ar[l]_-c \ar[r]<-.5ex>_t  \ar[r] <.5ex>^s & N }  \]
where $N$ is a finite set of \define{nodes}, $E$ a finite set of \define{edges}, $s,t \maps E \to N$ assign to each edge its \define{source} and \define{target}, and $c \maps E \to (0,\infty)$ assigns a \define{conductance} $c_e$ to each edge $e \in E$. 

In this situation we call $C$ a \define{circuit on} $N$.   If $e \in E$ has source $i$ and target $j$, we write $e \maps i \to j$. 
\end{defn}

In our previous work we labelled edges by resistances, which are the reciprocals
of conductances, but translating between circuits and Markov processes will be
easier if we use conductances, since these are more directly related to rate
constants. 

\begin{defn}
Given finite sets $X$ and $Y$, an \define{open circuit from} $X$ \define{to} $Y$ is a cospan of finite sets 
\[ \xymatrix{ & N & \\ X  \ar[ur] & &  Y \ar[ul] } \] 
together with a circuit $C$ on $N$.   We often abbreviate this as $C \maps X \to Y$.
Again we define a \define{terminal} to be a node in $T= i(X) \cup o(Y)$, and call a node in $N-T$ \define{internal}.
\end{defn}

Ulimately, an open circuit is of interest for the relationship it imposes
between the potentials at its inputs and outputs, the net current inflows at its
inputs, and the net current outflows at its outputs. We call this the
\define{behavior} of the open circuit.  To understand a circuit's behavior,
however, it is instructive to consider potentials and currents on all nodes.
These are governed by Ohm's law.

Consider a single edge $e \maps i \to j$ with conductance $c_e \in (0,\infty)$
and with potential $\phi_{s(e)}$ at its source and $\phi_{t(e)}$ at its target.
The voltage $V_e \in \R$ across an edge $e \in E$ is given by 
\[ 
  V_e = \phi_{s(e)}-\phi_{t(e)}. 
\]
If the voltage across an edge is nonzero, then current will flow along that edge:
Ohm's Law states that the current $I_e$ along an edge $e$ is related to the voltage across the edge $V_e$ via
\[ I_e = c_e V_e. \]

\begin{defn}
Given a circuit, we define the \define{extended power functional} $P(\phi) \maps
\R^N \to \R$ to map each potential to one-half the total power dissipated by the
circuit at that potential:
\[ P(\phi) = \frac12 \sum_{e \in E} I_e V_e = \frac{1}{2} \sum_{e \in E} c_e ( \phi_{s(e)} - \phi_{t(e)} )^2. \]
\end{defn}
\noindent The factor of $\frac{1}{2}$ makes certain formulas cleaner.

The extended power functional tells us the power dissipated when the potential at
each node of the circuit is known.  However, when we treat an open circuit as a `black box', we can only access the potentials at terminals and the currents flowing in or out of the terminals.  We call these the \define{boundary potentials} and \define{boundary currents}: they are functions from $T$ to $\R$, where $T$ is 
the set of terminals.

\begin{defn}
We say \define{Kirchhoff's current law} holds at $n \in N$ if 
\[ 
\sum_{e\maps i\to n} I_e \quad = \sum_{e \maps n \to i} I_e. 
\]
\end{defn}
Open circuits are required to obey Kirchhoff's current law at all internal
nodes, but not at the terminals.  To understand the behavior of a circuit, we must then answer the question: what boundary currents may exist given a fixed boundary potential?  

Here the extended power functional comes in handy.  The key property of the extended power functional is that its partial derivative with respect to the potential at a node is equal to the net current outflow at that node:

\begin{prop}
  Let $P\maps \R^N \to \R$ be the extended power functional for some circuit
  $C$. Then for all $n \in N$ we have
  \[
    \frac{\partial P}{\partial \phi_n} = \sum_{e \maps n \to i} I_e - \sum_{e\maps i\to n} I_e. 
  \]
\end{prop}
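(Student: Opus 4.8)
The plan is to compute $\partial P / \partial \phi_n$ directly by differentiating the explicit quadratic expression
\[
  P(\phi) = \frac12 \sum_{e \in E} c_e\,(\phi_{s(e)} - \phi_{t(e)})^2
\]
term by term, and then recognize the result in terms of the currents $I_e = c_e V_e = c_e(\phi_{s(e)} - \phi_{t(e)})$. The point is that the summand associated to an edge $e$ depends on the variable $\phi_n$ only through $\phi_{s(e)}$ and $\phi_{t(e)}$, so the chain rule localizes the computation to those edges incident to $n$.

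First I would fix a node $n \in N$ and apply $\partial/\partial\phi_n$ inside the sum. For a single edge $e$, the chain rule gives
\[
  \frac{\partial}{\partial \phi_n}\Big[\tfrac12 c_e (\phi_{s(e)} - \phi_{t(e)})^2\Big]
  = c_e (\phi_{s(e)} - \phi_{t(e)})\left(\frac{\partial \phi_{s(e)}}{\partial \phi_n} - \frac{\partial \phi_{t(e)}}{\partial \phi_n}\right)
  = I_e\,(\delta_{s(e),n} - \delta_{t(e),n}),
\]
where $\delta$ is the Kronecker delta. Thus each edge contributes $+I_e$ precisely when $n$ is its source, contributes $-I_e$ precisely when $n$ is its target, and contributes nothing otherwise.

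Next I would sum over all edges and reorganize the resulting sum according to which edges have $n$ as source versus as target. The edges with $s(e) = n$ are exactly those written $e \maps n \to i$, and the edges with $t(e) = n$ are exactly those written $e \maps i \to n$, so summing the per-edge contributions yields
\[
  \frac{\partial P}{\partial \phi_n} = \sum_{e \maps n \to i} I_e - \sum_{e \maps i \to n} I_e,
\]
which is the claimed identity.

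There is essentially no hard step here; the computation is a one-line application of the chain rule. The only point demanding care is bookkeeping: one must make sure the factor of $\tfrac12$ is absorbed correctly (the square produces the compensating $2$) and must treat any self-loop $e \maps n \to n$ consistently. For such an edge $\delta_{s(e),n} - \delta_{t(e),n} = 0$, so it contributes nothing to the derivative — in agreement both with the fact that its term $\tfrac12 c_e(\phi_n - \phi_n)^2$ vanishes identically in $P$, and with the fact that it would appear in both sums on the right-hand side and cancel. Hence the formula holds without exception.
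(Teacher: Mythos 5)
Your proposal is correct and follows essentially the same route as the paper's proof: differentiate the quadratic form edge by edge, split the contributions according to whether $n$ is the source or the target, and invoke Ohm's law $I_e = c_e(\phi_{s(e)} - \phi_{t(e)})$. The extra care you take with the Kronecker-delta bookkeeping and the self-loop case is a harmless elaboration of the same one-line computation.
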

\begin{proof}
This is a simple computation and application of Ohm's law:
\begin{align*}
\frac{\partial P(\phi)}{\partial \phi_n} &= \sum_{e \maps n \to i} c_e ( \phi_n - \phi_i ) - \sum_{e \maps i \to n} c_e ( \phi_i - \phi_n) \\
&= \sum_{e \maps n \to i} I_e - \sum_{e \maps i \to n} I_e. \qedhere
\end{align*}
\end{proof}

In particular, as open circuits must obey Kirchhoff's current law at internal
nodes, this implies that for a potential on an open circuit to be physically
realizable, the partial derivative of its extended power functional with respect
to the potential at each internal node must vanish. We leverage this fact to
study circuits and detailed balanced Markov processes.

\begin{defn}
We say $\phi \in \R^N$ obeys the \define{principle of minimum power} for some
boundary potential $\psi \in \R^T$ if $\phi$ minimizes the extended power
functional $P$ subject to the constraint that $\phi|_T = \psi$.
\end{defn}

The next result says that given a boundary potential, there is a unique compatible boundary current. This boundary current can be computed using the extended power functional.

\begin{prop}
  Let $P$ be the extended power functional for an open circuit. Then 
  \begin{align*}
    Q\maps \R^T &\longrightarrow \R \\
    \psi &\longmapsto \min_{\phi|_T=\psi }  P(\phi). 
  \end{align*}
  is a well-defined function. Moreover, given a boundary potential $\psi \in
  \R^T$, the gradient $\nabla Q_\psi \in \R^T$
  \begin{align*}
    \nabla Q_\psi: T &\longrightarrow \R \\
    n &\longmapsto \frac{\partial Q}{\partial \phi_n}\bigg\vert_{\phi =\psi}
  \end{align*}
  is the unique boundary current $\iota$ such that
  \begin{itemize}
    \item $\psi$ extends to a potential $\phi \in \R^N$,
    \item $\iota$ extends to a current $\overline\iota \in \R^N$,
    \item $\phi$, $\overline\iota$ obey Ohm's law,
    \item $\overline\iota$ obeys Kirchhoff's current law at all $n \in N-T$.
  \end{itemize}
  We call this function $Q$ the \define{power functional} of the open circuit.
\end{prop}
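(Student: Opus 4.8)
The plan is to treat $P$ as a nonnegative convex quadratic form and to exploit convexity throughout. For well-definedness, note that $P(\phi) = \frac12\sum_{e\in E} c_e(\phi_{s(e)}-\phi_{t(e)})^2 \ge 0$ has positive semidefinite Hessian, so its restriction to the affine subspace $A_\psi = \{\phi \in \R^N \maps \phi|_T = \psi\}$ is a convex quadratic bounded below by $0$. A convex quadratic that is bounded below attains its infimum, so the minimum defining $Q(\psi)$ exists; and although a minimizer need not be unique (adding a constant on a connected component containing no terminal leaves $P$ unchanged), the minimum \emph{value} is unique, so $Q$ is a well-defined function. Eliminating the internal coordinates is a linear operation in $\psi$, so $Q$ is itself a quadratic function of $\psi$ and in particular smooth, which justifies differentiating it.

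Next I would show that a minimizer is a physically realizable potential and read off the boundary current. Fix $\psi$ and let $\phi \in A_\psi$ attain the minimum. Because the internal coordinates $\phi_n$ with $n \in N - T$ are unconstrained on $A_\psi$, the first-order condition gives $\partial P/\partial\phi_n = 0$ for every internal node $n$. By the preceding Proposition this partial derivative is the net current outflow $\sum_{e\maps n\to i} I_e - \sum_{e\maps i\to n} I_e$ with $I_e = c_e(\phi_{s(e)}-\phi_{t(e)})$, so its vanishing is precisely Kirchhoff's current law at internal nodes. Thus $\phi$ extends $\psi$, the net outflows $\overline\iota$ determined by $\phi$ through Ohm's law obey Kirchhoff's law at all $n \in N-T$, and the four listed conditions hold for $\iota := \overline\iota|_T$.

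To identify $\iota$ with $\nabla Q_\psi$ I would use an envelope argument. Writing $\phi^*(\psi)$ for a choice of minimizer, linear in $\psi$, the chain rule gives $\partial Q/\partial\psi_n = \sum_m (\partial P/\partial\phi_m)(\partial\phi^*_m/\partial\psi_n)$; the internal terms vanish since $\partial P/\partial\phi_m = 0$ there, while for $m \in T$ the coordinate $\phi^*_m = \psi_m$ is held fixed, contributing only the term $m = n$. Hence $\partial Q/\partial\psi_n = (\partial P/\partial\phi_n)|_{\phi^*}$, which by the preceding Proposition is the net current outflow at the terminal $n$, i.e.\ the boundary current. For uniqueness, suppose $\iota'$ also satisfies the four conditions, arising from some $\phi'$ extending $\psi$ whose induced currents obey Kirchhoff's law at internal nodes; then $\phi'$ is a critical point of $P|_{A_\psi}$, hence a global minimizer by convexity. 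Any two minimizers differ by $\delta$ with $\delta|_T = 0$ lying in the kernel of the form, which forces $\delta_{s(e)} = \delta_{t(e)}$ on every edge, so the edge currents $I_e$ and therefore all net outflows---including the boundary current---agree. Thus $\iota' = \iota = \nabla Q_\psi$.

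I expect the main obstacle to be the degeneracy of $P$: its kernel contains potentials that are constant on connected components, so existence of the minimum must be deduced from boundedness below rather than strict convexity, and one must check that $Q$ is smooth before writing $\nabla Q_\psi$. The clean way around this is to pass to the edge currents $I_e$, which are insensitive to the kernel directions; once that observation is in hand, both the envelope computation and the uniqueness argument reduce to routine linear algebra.
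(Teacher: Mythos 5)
Your argument is correct and essentially complete. Note, however, that the paper does not prove this proposition at all: its ``proof'' is a citation to Propositions 2.6 and 2.10 of the earlier Baez--Fong paper on passive linear networks, so there is no in-paper argument to compare against. Your self-contained route --- existence of the minimum from boundedness below of a convex quadratic on an affine slice, the first-order conditions at internal nodes giving Kirchhoff's law via the preceding proposition, the envelope identity $\partial Q/\partial \psi_n = (\partial P/\partial \phi_n)|_{\phi^*}$, and uniqueness from the observation that any two minimizers differ by an element of the kernel of the form (hence induce identical edge currents) --- is the standard one and matches the spirit of the cited results. You also correctly identify and defuse the one genuine subtlety, the degeneracy of $P$ along potentials constant on components with no terminal. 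The only step I would tighten is the claim that ``eliminating the internal coordinates is a linear operation in $\psi$, so $Q$ is quadratic and smooth'': since the minimizer is not unique, you should say explicitly how you choose a minimizer $\phi^*(\psi)$ depending linearly on $\psi$ (for instance the minimum-norm solution of the normal equations for the internal coordinates, or any linear section of the affine solution set), after which the chain-rule computation is justified. With that sentence added the proof stands on its own.
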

\begin{proof}
  See \cite[Propositions 2.6 and 2.10]{BaezFong}.
\end{proof}

This implies that the set of physically realizable boundary potentials and boundary currents for the open circuit can be written as
\[
  \mathrm{Graph}(\nabla Q) = \big\{(\psi, \nabla Q_\psi) \, \vert \, \psi \in
  \R^T\big\} \subseteq \R^T \oplus \R^T.
\]

It remains to understand how boundary potentials and currents are related to
potentials and currents on the inputs and outputs of an open circuit.  This has
three key aspects. Recall that we think of the inputs and outputs as points at
which we can connect our open circuit with another open circuit. The first
principle is then that the potential at some input $x \in X$ or output $y \in Y$
is equal to the potential at its corresponding terminal, $i(x)$ or $o(y)$
respectively. Thus if $[i,o]\maps X+Y \to T$ is the function describing both the
input and output maps, then the boundary potential $\psi \in \R^T$ is realized
by the potential $\psi' = \psi \circ [i,o]$.

The second principle is that the current outflow at a terminal is split among
the inputs and outputs that connect to it. The final principle is that the
quantity associated to inputs is net current \emph{inflow}, which is the
negative of net current outflow. Thus if $\iota$ is a boundary current, then it
may be realized by any current $\iota' \in \R^X \oplus \R^Y$ such that
\[
  \iota(n) = \sum_{x \in o^{-1}(n)} \iota'(x)- \sum_{x \in i^{-1}(n)}\iota'(x).
\]

These principles define a linear relation 
\[
  S[i,o]\maps \R^T \oplus \R^T \leadsto \R^X \oplus \R^X \oplus \R^Y \oplus
  \R^Y
\]
by which we mean a linear subspace of $ \R^T \oplus \R^T \oplus \R^X \oplus \R^X \oplus \R^Y \oplus \R^Y$. Applying this relation pointwise to a subspace of $\R^T \oplus \R^T$ allows us to view $S[i,o]$ also as a function
\[
  \Big\{\textrm{subspaces of } \R^T \oplus \R^T\Big\} \longrightarrow 
  \Big\{\textrm{subspaces of }\R^X \oplus \R^X \oplus \R^Y \oplus \R^Y \Big\}.
\]
This function takes a subspace of $\R^T \oplus \R^T$, interpreted as a
collection of` boundary potential--boundary current pairs, to the subspace of
all compatible potentials and currents on the inputs and outputs. This lets us
give an explicit definition of the behavior of an open circuit:

\begin{defn} \label{defn:behavior}
The \define{behavior} of an open circuit with power functional $Q$ is given by
the subspace
\[
  S[i,o]\big(\mathrm{Graph}(\nabla Q)\big)
\]
of $\R^X \oplus \R^X \oplus \R^Y \oplus \R^Y$.
\end{defn}

\section{The principle of minimum dissipation}
\label{sec:dissipation}

In this section, we analyze an open detailed balanced Markov process $M \maps
(X,u) \to (Y,v)$ using ideas from circuit theory.  So, assume that we have a
cospan of finite sets with populations
\[ 
  \xymatrix{ 
    & (N,q) & \\
    (X,u) \ar[ur]^{i} && (Y,v) \ar[ul]_{o} 
  }
\]
together with a detailed balanced open Markov process on $N$:
\[ 
  \xymatrix{  (0,\infty) & E \ar[l]_-r \ar[r]<-.5ex>_t \ar[r] <.5ex>^s & N \ar[r]^-q & (0,\infty) } .
\]
The master equation is closely related to a quadratic form analogous to the
extended power functional for electrical circuits.

Just as we take potentials as our starting point for the study of circuits, we take
`deviations' as our starting point for detailed balanced Markov processes. The \define{deviation} of a population $p$ on a detailed balanced Markov process with 
chosen equilibrium population $q$ is the ratio between $p$ and $q$:
\[    x_i = \frac{p_i}{q_i}   .\]

\begin{defn} The \define{extended dissipation functional} of a detailed balanced open Markov process 
is the quadratic form $C \maps \R^N \to \R$ given by
\[ C(p) = \frac{1}{4} \sum_{e \in E} r_e q_{s(e)} \biggl( \frac{p_{s(e)}}{q_{s(e)}} - \frac{p_{t(e)}}{q_{t(e)}} \biggr)^2.
\]
\end{defn}

At the end of Section \ref{sec:symplectic} we shall see that the master equation describes
the gradient flow associated to this functional.  Thus, the population moves `downhill', toward lower values of $C$.  However, this gradient must be computed
using a suitable metric on $\R^N$.  For now we state this result in less geometric
language:

\begin{thm} \label{thm:main}
  Fix a state $n \in N$ and suppose that a time-dependent population $p(t) \maps N \to (0,\infty)$ obeys the master equation at $n$. Then
  \[
    \frac{dp_n}{dt} = -q_n \frac{\partial C}{\partial p_n}.
  \]
\end{thm}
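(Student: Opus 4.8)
The plan is to compute the partial derivative $\partial C/\partial p_n$ explicitly and show it reproduces the right-hand side of the master equation up to the factor $-q_n$. Throughout I would work with the deviations $x_i = p_i/q_i$, for which $\partial x_i/\partial p_n = \delta_{in}/q_i$, and I would keep careful track of which endpoint of each edge carries the factor $q$.

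First I would re-express the extended dissipation functional in terms of the Hamiltonian rather than the raw edge set. Grouping the edges in the defining sum by their source $a$ and target $b$, and noting that edges with $a=b$ contribute nothing since $(x_a-x_b)^2=0$, the formula for $C$ collapses to
\[ C(p) = \frac14 \sum_{a \ne b} H_{ba}\, q_a\, (x_a - x_b)^2, \]
using $\sum_{e : a \to b} r_e = H_{ba}$ for $a \ne b$. This disposes of the combinatorics of multiple edges at once and presents $C$ as a weighted graph-Laplacian-type form. (One could equally differentiate the edge sum directly; the $H$-reformulation simply makes the role of detailed balance transparent.)

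Next I would differentiate. Applying the chain rule and separating the $a=n$ and $b=n$ contributions produces two sums; invoking the detailed balance condition in the form $H_{na} q_a = H_{an} q_n$ collapses the $b=n$ sum onto the same shape as the $a=n$ sum, giving the clean expression
\[ \frac{\partial C}{\partial p_n} = \sum_{b \ne n} H_{bn}\,(x_n - x_b). \]
Finally I would multiply by $-q_n$, substitute $x_i = p_i/q_i$, and split the result into an off-diagonal piece and a piece proportional to $p_n$. Detailed balance again converts $q_n H_{bn}/q_b$ into $H_{nb}$, turning the off-diagonal piece into $\sum_{b \ne n} H_{nb}\, p_b$. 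For the remaining piece I would use that $H$ is infinitesimal stochastic, so its columns sum to zero by Lemma \ref{lem.infstoch}, whence $\sum_{b \ne n} H_{bn} = -H_{nn}$ and the piece equals $H_{nn} p_n$. Adding the two reconstitutes $\sum_b H_{nb} p_b$, which is $dp_n/dt$ by the master equation at $n$.

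The routine chain-rule differentiation is not the obstacle; the crux is the twofold use of detailed balance — once to symmetrize the quadratic form so that the two halves of the derivative merge, and once to restore the genuine Hamiltonian entries $H_{nb}$ — together with recognizing that the leftover diagonal contribution is exactly $H_{nn}p_n$ via the column-sum-zero property. The main hazard is bookkeeping: tracking whether a given edge sum carries $q$ at its source or target, so that each application of $H_{ij}q_j = H_{ji}q_i$ is made on the correct indices.
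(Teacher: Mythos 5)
Your proposal is correct and follows essentially the same route as the paper's proof: differentiate the extended dissipation functional with respect to $p_n$, use detailed balance to merge the source- and target-contributions of the edges at $n$ and to convert the resulting coefficients into the entries $H_{nb}$, and recognize the outcome as the right-hand side of the master equation. The only cosmetic difference is that you package the edge sums as Hamiltonian entries from the start and invoke the column-sum-zero property of Lemma \ref{lem.infstoch} explicitly for the diagonal term, where the paper keeps raw edge sums and absorbs that step into its final appeal to the master equation.
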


\begin{proof}
Our calculation considers how population flows between each state $m \in N$ and
our given state $n$. Differentiating $C$ with respect to $p_n$ yields
\[ 
  \frac{ \partial C}{\partial p_n} =  \sum_{m \in N}\left[ 
\sum_{e \maps n \to m} \frac{r_e}2  \left( \frac{p_n}{q_n}-\frac{p_m}{q_m} \right) - \sum_{e \maps m\to n} \frac{r_e q_m}{2q_n} \left( \frac{p_m}{q_m}-\frac{p_n}{q_n} \right) \right]. 
\]
If we then multiply by $-q_n$ and factor the result we arrive at
\[ 
  -q_n\frac{ \partial C}{\partial p_n} = \sum_{m \in N}\left[\frac12\bigg(
 \sum_{e \maps n \to m} r_eq_n + \sum_{e\maps m\to n} r_e q_m\bigg)  \left( \frac{p_m}{q_m}-\frac{p_n}{q_n}
\right)\right]. 
\]
We shall show that each term in this sum over $m$ represents the net flow
from $m$ to $n$, and hence that the sum represents the net inflow at $n$.

Detailed balance says that the two summands in the first parentheses are
equal. We may thus rewrite the coefficient of $\frac{p_m}{q_m}-\frac{p_n}{q_n}$
in either of two ways:
\[
\frac12\bigg(\sum_{e\maps m\to n} r_e q_m +\sum_{e \maps n \to m} r_eq_n\bigg) =
\sum_{e\maps m\to n} r_e q_m = \sum_{e \maps n \to m} r_eq_n.
\]
We thus obtain 
\[ 
  -q_n\frac{ \partial C}{\partial p_n} = \sum_{m \in N}\left[\bigg(\sum_{e\maps
  m\to n} r_e q_m\bigg) \frac{p_m}{q_m} - \bigg(\sum_{e \maps n \to m} r_eq_n\bigg) \frac{p_n}{q_n} \right]. 
\]
or simply
\[ 
  -q_n\frac{ \partial C}{\partial p_n} = \sum_{m \in N}\left[\sum_{e\maps
  m\to n} r_e p_m -\sum_{e \maps n \to m} r_ep_n\right] = \frac{dp_n}{dt}
\]
where in the last step we use the master equation.
\end{proof}

In particular, note that $\frac{dp_i}{dt} = 0$ if and only if $\frac{\partial C}{\partial p_i}
=0$. Thus a population $p$ is a steady state solution of the open master
equation if and only if $\frac{\partial C}{\partial p_i} = 0$ at all internal 
states $i$. As the dissipation is a positive definite quadratic form, this implies that
$p$ is a steady state if and only if it minimizes $C$ subject to the constraint
that the boundary population is equal to the restriction of $p$ to the boundary.

\begin{defn}
We say a population $p \in \R^N$ obeys the \define{principle of minimum
dissipation} for some boundary population $b \in \R^T$ if $p$ minimizes the
extended dissipation functional $C(p)$ subject to the constraint that $p|_T = b$.
\end{defn}

This lets us state a corollary of Theorem \ref{thm:main}:

\begin{cor}
Given an open detailed balanced Markov process $M \maps X \to Y$ with chosen
equilibrium $q\in \R^N$, a population $p \in \R^N$ obeys the principle of
minimum dissipation for some boundary population $b \in \R^T$ if and only if
$p$ satisfies the open master equation with boundary
conditions $b$.
\end{cor}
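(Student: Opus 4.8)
The plan is to recognize the principle of minimum dissipation as a convex minimization problem over an affine subspace of $\R^N$, and to match its first-order (critical point) condition with the steady-state open master equation by means of Theorem~\ref{thm:main}. The first step is to extract from the proof of that theorem the purely algebraic identity
\[ -q_n \frac{\partial C}{\partial p_n} = \sum_{m \in N}\Bigl[\sum_{e \maps m \to n} r_e p_m - \sum_{e \maps n \to m} r_e p_n\Bigr] = \sum_j H_{nj}\, p_j, \]
valid for every $p \in \R^N$ and every $n \in N$: only the final identification with $\frac{dp_n}{dt}$ invoked the master equation, so the identity itself holds unconditionally. Since $q_n > 0$, this yields the pointwise equivalence
\[ \frac{\partial C}{\partial p_n} = 0 \quad\Longleftrightarrow\quad (Hp)_n = 0. \]

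Next I would set up the optimization. The constraint $p|_T = b$ cuts out an affine subspace of $\R^N$ whose tangent directions are exactly the coordinate directions $e_i$ for $i \in N - T$, so the feasible variations are precisely the internal populations. Because $C$ is a sum of squares it is a positive semidefinite quadratic form, hence convex; and for a convex differentiable function a feasible point is a global minimizer over an affine subspace if and only if its gradient annihilates every tangent direction, i.e. $\frac{\partial C}{\partial p_i} = 0$ for all $i \in N - T$. Combining this with the pointwise equivalence above, $p$ obeys the principle of minimum dissipation for $b$ if and only if $p|_T = b$ and $(Hp)_i = 0$ for every internal $i$, which is exactly the statement that the (constant) population $p$ solves the open master equation with boundary conditions $b$. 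The two implications then separate cleanly: the first-order necessary condition gives ``minimizer $\Rightarrow$ master equation,'' while convexity upgrades ``critical point $\Rightarrow$ global minimizer'' for the converse.

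The step needing the most care is the convexity argument, and in particular the fact that $C$ is only positive semidefinite, not positive definite: whenever the deviation $x_i = p_i/q_i$ is constant on a connected component the functional $C$ vanishes, so minimizers need not be unique if some component contains no terminal. This does not affect the stated equivalence, since convexity alone makes vanishing of the constrained gradient both necessary and sufficient for a global minimum; but any uniqueness one wishes to attach to ``the'' minimum-dissipation population requires the extra hypothesis that every connected component meets $T$, in which case the boundary-pinned Laplacian restricted to $N-T$ is strictly positive definite. A final, purely interpretive point: for a time-independent $p$ one has $\frac{dp}{dt} = 0$ automatically, so ``satisfies the open master equation with boundary conditions $b$'' must be read as the steady-state condition $(Hp)_i = 0$ at internal nodes together with $p|_T = b$, exactly as in the remark preceding the corollary.
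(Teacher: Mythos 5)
Your proof is correct and follows essentially the same route as the paper: Theorem \ref{thm:main} identifies $-q_n\,\partial C/\partial p_n$ with $(Hp)_n$ for arbitrary $p$, so vanishing of the constrained gradient at the internal nodes is equivalent to the steady-state open master equation, and convexity of $C$ turns critical points of the constrained problem into global minimizers. Your remark that $C$ is only positive semidefinite (the paper asserts it is positive definite) is a worthwhile refinement: convexity alone suffices for the stated equivalence, with definiteness needed only for uniqueness of the minimizer, e.g.\ when every connected component meets $T$.
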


\begin{defn}
Given an open detailed balanced Markov process with terminals $T$, the \define{dissipation functional} $D \maps \R^T \to \R$ is given by
\[ D(b) = \min_{p|_T = b} C(p)
\]
where we minimize over all populations $p \in \R^N$ that restrict to 
$b$ on the terminals.
\end{defn}

Like the power functional for circuits, this functional lets us compute the
relation an open detailed balanced Markov process imposes on population--flow
pairs at the terminals. We can then compute the allowed population--flow pairs
on the inputs and outputs with the help of the map $S[i,o]$, which we already
used for the same purpose in the case of circuits.  This results in the
following expression of the behavior of an open detailed balanced Markov
process:

\begin{defn}
The \define{behavior} of an open detailed balanced Markov process with
dissipation functional $D$ is given by the subspace
\[
  S[i,o]\big(\mathrm{Graph}(\nabla D)\big)
\]
of $\R^X \oplus \R^X \oplus \R^Y \oplus \R^Y$.
\end{defn}

\section{From detailed balanced Markov processes to circuits} 
\label{sec:reduction}

Comparing the dissipation of a detailed balanced open
Markov process:
\[ 
  C(p) = \frac{1}{4} \sum_{e \in E} r_e q_{s(e)} \bigg(\frac{p_{s(e)}}{q_{s(e)}} -
  \frac{p_{t(e)}}{q_{t(e)}} \bigg)^2
\] 
to the extended power functional of a circuit:
\[ P(\phi) = \frac{1}{2} \sum_{e \in E} c_e ( \phi_{s(e)} - \phi_{t(e)} )^2 \]
suggests the following correspondence:
\[ \begin{array}{ccc} 
\displaystyle{ \frac{p_i}{q_i} } &\leftrightarrow & \phi_i \\ \\
\displaystyle{ \frac{1}{2} r_e q_{s(e)}} &\leftrightarrow & c_e . 
\end{array} \]
This sharpens this analogy between detailed balanced Markov processes and circuits made of resistors.   In this analogy, the population of a state 
is \emph{roughly} analogous to the electric potential at a node.  However, it is really the `deviation', the ratio of the population to the equilibrium population, that 
is analogous to the electric potential.  Similarly, the rate constant of an edge is \emph{roughly}  analogous to the conductance of an edge.  However, it is really half the rate constant  times the equilibrium population at the source of that edge that is analogous to the conductance.  The factor of $1/2$ compensates for the fact that in a circuit each edge allows for flow in both directions, while in a Markov process each edge allows for flows in only one direction.

We thus shall convert an open detailed balanced Markov process $M \maps X \to Y$, namely:
\[ 
  \xymatrix{ 
  && X \ar[d]^{i} \\
(0,\infty) & E \ar[l]_-r \ar[r]<-.5ex>_t \ar[r] <.5ex>^s & N \ar[r] <.5ex>^q &
(0,\infty) \\ 
&& Y \ar[u]_{o}} 
\]
into an open circuit $K(M) \maps X \to Y$, namely:
\[ 
  \xymatrix{  
    && X \ar[d]^{i} \\
    (0,\infty) & E \ar[l]_-{c} \ar[r]<-.5ex>_t  \ar[r] <.5ex>^s & N  \\
    && Y \ar[u]_{o}
  }
\]
where 
\[  c_e = \frac{1}{2} r_e q_{s(e)}. \]
For the open detailed balanced Markov process with two states depicted below this
map $K$ has the following effect:
\[ 
\begin{tikzpicture}[circuit ee IEC, set resistor graphic=var resistor IEC graphic,->,>=stealth',shorten >=1pt,]
  \node[main node](1) {$2$};
  \node[main node](2) [right=1.5cm of 1] {$1$};

\node(A) [terminal, left=1cm of 1] {$2$};
\node(B) [terminal, above right=0.1cm and 1cm of 2] {$1$};
\node(C) [terminal, below right=0.1cm and 1cm of 2] {$1$};
\node[right=1.7 cm of 2,color=purple] {$Y$};
\node[left=0.4 cm of A,color=purple] {$X$};
  \path[every node/.style={font=\sffamily\small}, shorten >=1pt]
    (1) edge [bend left=15] node[above] {$3$}(2) 
    (2) edge [bend left=15] node[below] {$6$} (1);
    
\path[color=purple, very thick, shorten >=10pt, ->, >=stealth] (A) edge (1);
\path[color=purple, very thick, shorten >=10pt, ->, >=stealth] (B) edge (2);
\path[color=purple, very thick, shorten >=10pt, ->, >=stealth] (C) edge (2);

\node(D) [below right=.3 cm and 0.75cm of 1] {}; 
\node(E) [below=1cm of D] {};
  \path[every node/.style={font=\sffamily\small}, shorten >=1pt]
  (D) edge node[left] {$K$} (E);
\end{tikzpicture} 
\]
\[
\hskip 2em 
\begin{tikzpicture}[circuit ee IEC, set resistor graphic=var resistor IEC
    graphic,scale=.9]
    \node[contact]         (A) at (0,0) {};
    \node[contact]         (B) at (3,0) {};
    
        \coordinate         (ua) at (.5,.25) {};
    \coordinate         (ub) at (2.5,.25) {};
    \coordinate         (la) at (.5,-.25) {};
    \coordinate         (lb) at (2.5,-.25) {};
    \path (A) edge (ua);
    \path (A) edge (la);
    \path (B) edge (ub);
    \path (B) edge (lb);
    \path (ua) edge  [resistor] node[label={[label distance=1pt]90:{$3$}}] {} (ub);
    \path (la) edge  [resistor] node[label={[label distance=1pt]270:{$3$}}] {} (lb);
    
    \node(L) [left=1cm of A,circle,draw,inner sep=1pt,fill=gray,color=purple] {};     
\path[color=purple, very thick, shorten >=10pt, ->, >=stealth] (L) edge (A);  
\node[left=1.15 cm of A,color=purple] {$X$};
\node(R) [above right=0.2cm and 1cm of B,circle,draw,inner sep=1pt,fill=gray,color=purple] {};
\node(R2) [below right=0.2cm and 1cm of B,circle,draw,inner sep=1pt,fill=gray,color=purple] {};
\path[color=purple, very thick, shorten >=10pt, ->, >=stealth] (R) edge (B);
\path[color=purple, very thick, shorten >=10pt, ->, >=stealth] (R2) edge (B);
\node[right=1.25 cm of B,color=purple] {$Y$};
   \end{tikzpicture} 
\]

This analogy is stronger than a mere visual resemblance. The behavior for the Markov process $M$ is easily obtained from the behavior of the circuit $K(M)$.  Indeed, write $C_M$ for the extended dissipation functional of the open detailed balanced Markov process $M$, and $P_{K(M)}$ for the extended power functional of the open circuit $K(M)$. Then
\[
  C_M(p) = P_{K(M)}(\tfrac{p}{q}).
\]
Minimizing over the interior, we also have the equivalent fact for the
dissipation functional $D_M$ and power functional $Q_{K(M)}$:
\begin{lem}
  Let $M$ be an open detailed balanced Markov process, and let $K(M)$ be the
  corresponding open circuit. Then
\[
  D_M(p) = Q_{K(M)}(\tfrac{p}{q}),
\]
where $D_M$ is the dissipation functional for $M$ and $Q_{K(M)}$ is the power
functional for $K(M)$.
\end{lem}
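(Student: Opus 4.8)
The plan is to reduce the statement to the pointwise identity $C_M(p) = P_{K(M)}(p/q)$ that is already in hand, by observing that the two minimizations defining $D_M$ and $Q_{K(M)}$ differ only by a single linear change of variables. Recall that $D_M(b) = \min_{p|_T = b} C_M(p)$ and $Q_{K(M)}(\psi) = \min_{\phi|_T = \psi} P_{K(M)}(\phi)$, both taken over the fibers of restriction to the terminals $T$. Once I know the extended functionals agree under the substitution $\phi = p/q$, it remains only to check that this substitution carries one family of constraint sets bijectively onto the other.

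Concretely, I would introduce the deviation map $\Phi \maps \R^N \to \R^N$ sending a population $p$ to $(\Phi p)_i = p_i / q_i$. Since each $q_i$ is strictly positive, $\Phi$ is a linear isomorphism, with inverse given by multiplication by $q_i$. The crucial point is that $\Phi$ intertwines the two restriction maps: for any $p \in \R^N$ we have $(\Phi p)|_T = (p|_T)/(q|_T)$. Hence $\Phi$ restricts to a bijection from the affine subspace $\{p \in \R^N : p|_T = b\}$ onto $\{\phi \in \R^N : \phi|_T = b/(q|_T)\}$, where $b/(q|_T)$ is the terminal-wise quotient written simply as $p/q$ in the statement.

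The two facts then combine at once. Writing $b \in \R^T$ for the boundary population (the $p$ of the statement),
\[
  D_M(b) = \min_{p|_T = b} C_M(p) = \min_{p|_T = b} P_{K(M)}(\Phi p) = \min_{\phi|_T = b/(q|_T)} P_{K(M)}(\phi) = Q_{K(M)}(b/(q|_T)),
\]
where the second equality is the pointwise identity $C_M = P_{K(M)} \circ \Phi$, and the third is the change of variables $\phi = \Phi p$, legitimate precisely because $\Phi$ restricts to a bijection of the two constraint sets. This is exactly the asserted equation $D_M(p) = Q_{K(M)}(p/q)$.

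I do not anticipate a real obstacle: all the content sits in the previously established pointwise identity together with the strict positivity of $q$, which is what makes $\Phi$ an isomorphism. The only thing needing care is bookkeeping, namely confirming that the constraint $p|_T = b$ translates exactly to $\phi|_T = b/(q|_T)$ and that $p/q$ in the statement is read as this terminal-wise quotient. In particular, the equality of the two minima does not require them to be attained—it follows purely from $\Phi$ being a bijection of the constraint sets under which the functionals agree—so no analytic subtlety arises; where attainment is needed elsewhere it is supplied by the cited proposition for $Q_{K(M)}$ and by the positivity of $C_M$ noted after Theorem \ref{thm:main}.
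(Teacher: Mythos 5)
Your proposal is correct and follows exactly the route the paper intends: the paper derives the lemma from the pointwise identity $C_M(p) = P_{K(M)}(p/q)$ simply by ``minimizing over the interior,'' and your argument is a careful elaboration of that step, with the deviation map $\Phi$ supplying the bijection between the constraint sets $\{p : p|_T = b\}$ and $\{\phi : \phi|_T = b/(q|_T)\}$. The bookkeeping about the terminal-wise quotient and the non-reliance on attainment of the minima are both handled correctly.
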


Consider now $\Graph(\nabla D_M)$ and $\Graph(\nabla Q_{K(M)})$. These are both
subspaces of $\R^T \oplus \R^T$. For any set with populations $(T,q)$, define
the function
\begin{align*}  
\alpha_{T,q}\maps \R^T \oplus \R^T &\longrightarrow \R^T \oplus \R^T; \\
(\phi,\iota) &\longmapsto (q\phi,\iota).
\end{align*}
Then if $T$ is the set of terminals and $q\maps T \to (0,\infty)$ is the
restriction of the population function $q\maps N \to (0,\infty)$ to the
terminals, we see that 
\[
  \Graph(\nabla D_M) = \alpha_{T,q}(\Graph(\nabla Q_{K(M)})).
\]
Note here that we are applying $\alpha_{T,q}$ pointwise to the subspace
$\Graph(\nabla Q_{K(M)})$ to arrive at the subspace $\Graph(\nabla D_M)$.

Observe that $\alpha_{T,q}$ acts as the identity on the `current' or `flow'
summand, while $S[i,o]$ acts simply as precomposition by $[i,o]$ on the
`potential' or `population' summand. This implies the equality of the composite
relations
\[
  S[i,o] \alpha_{T,q} = (\alpha_{X,qi} \oplus \alpha_{Y,qo})S[i,o]
\]
as relations 
\[
  \R^T \oplus \R^T \leadsto \R^X \oplus \R^X \oplus \R^Y \oplus \R^Y.
\]
In summary, we have arrived at the following theorem.
\begin{thm} \label{thm.behaviors}
  Let $M$ be an open detailed balanced Markov process, and let $K(M)$ be the
  corresponding open circuit. Then
\[
  S[i,o]\Graph(\nabla D_M) = \big(\alpha_{X,qi} \oplus
  \alpha_{Y,qo}\big)S[i,o]\big(\Graph(\nabla Q_{K(M)})\big).
\]
where $D_M$ is the dissipation functional for $M$ and $Q_{K(M)}$ is the power
functional for $K(M)$.
\end{thm}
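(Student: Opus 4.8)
The plan is to obtain the theorem purely by chaining together the two displayed identities established immediately before its statement, so that no fresh computation is required. The first ingredient is the equality of terminal behaviors
\[ \Graph(\nabla D_M) = \alpha_{T,q}\big(\Graph(\nabla Q_{K(M)})\big), \]
and the second is the intertwining relation
\[ S[i,o]\,\alpha_{T,q} = \big(\alpha_{X,qi} \oplus \alpha_{Y,qo}\big)\,S[i,o] \]
of linear relations $\R^T \oplus \R^T \leadsto \R^X \oplus \R^X \oplus \R^Y \oplus \R^Y$. Granting these, the argument runs as follows.

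First I would substitute the first identity into the left-hand side $S[i,o]\,\Graph(\nabla D_M)$, rewriting it as $S[i,o]\big(\alpha_{T,q}(\Graph(\nabla Q_{K(M)}))\big)$. Here I am applying the linear relation $S[i,o]$ to the subspace obtained by pushing $\Graph(\nabla Q_{K(M)})$ through the linear isomorphism $\alpha_{T,q}$; because $\alpha_{T,q}$ is a bijection, this equals the composite relation $S[i,o]\,\alpha_{T,q}$ evaluated on the subspace $\Graph(\nabla Q_{K(M)})$. Next I would invoke the intertwining relation to replace $S[i,o]\,\alpha_{T,q}$ by $(\alpha_{X,qi}\oplus\alpha_{Y,qo})\,S[i,o]$, turning the expression into $\big(\alpha_{X,qi}\oplus\alpha_{Y,qo}\big)\big(S[i,o](\Graph(\nabla Q_{K(M)}))\big)$, which is exactly the right-hand side. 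The only point requiring care is that the composite of the (bijective) linear maps $\alpha$ with the linear relation $S[i,o]$ behaves associatively when applied to a subspace, i.e.\ that $R(\alpha(V)) = (R\,\alpha)(V)$ and $\alpha'(R(V)) = (\alpha'\,R)(V)$ for $R = S[i,o]$; this is immediate since each $\alpha$ is an isomorphism, so they compose with relations without the domain-restriction subtleties that can arise for merely partial relations.

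Because both ingredients are already in hand, I expect no genuine obstacle in the combination step: it is a one-line chain of equalities of subspaces, matching the paper's framing that we have merely ``arrived at'' a summary. If instead one wished to prove the theorem from scratch, the real work would lie in re-establishing the two ingredients. The first rests on the lemma $D_M(p) = Q_{K(M)}(p/q)$ together with the understanding that $\nabla D_M$ is the gradient for the $q$-weighted metric on $\R^N$: the factor $q_n^{-1}$ produced by the chain rule, giving $\partial D_M/\partial b_n = q_n^{-1}\,\partial Q_{K(M)}/\partial \psi_n$ at $b = q\psi$, is then exactly cancelled by the metric weight $q_n$, so that $\alpha_{T,q}$ may act as the identity on the flow/current summand while matching populations $b = q\psi$ to potentials. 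The second is the elementary observation that $\alpha_{T,q}$ scales the population/potential coordinates by $q$ and fixes the flow/current coordinates, whereas $S[i,o]$ touches only the population/potential summand, by precomposition with $[i,o]$, and redistributes flows additively; hence the two operations commute in the stated sense. Tracking this commutation carefully — in particular how $S[i,o]$, being a relation rather than a function, interacts with the pointwise scalings $\alpha$ — is the step I would treat most attentively.
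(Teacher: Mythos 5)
Your proposal is correct and takes exactly the paper's route: the paper states Theorem \ref{thm.behaviors} as an immediate summary of the two displayed identities $\Graph(\nabla D_M) = \alpha_{T,q}\big(\Graph(\nabla Q_{K(M)})\big)$ and $S[i,o]\,\alpha_{T,q} = (\alpha_{X,qi}\oplus\alpha_{Y,qo})\,S[i,o]$, with no further argument beyond chaining them. Your additional remarks --- that $\nabla D_M$ must be read against the $q$-weighted metric so the chain-rule factor $q_n^{-1}$ cancels, and that the isomorphisms $\alpha$ compose with the relation $S[i,o]$ without domain-restriction issues --- correctly fill in what the paper leaves implicit.
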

This makes precise the relationship between the two behaviors. Observe that
population deviation is analogous to electric potential, and population
flow is analogous to electric current.

In fact, this analogy holds not just for open detailed balanced Markov processes
and open circuits in isolation, but continues to hold when we build up
these Markov processes and circuits from subsystems. For this we need to discuss
what it means to compose open systems.

\section{Composing open Markov processes} 
\label{sec:composing}

If the outputs of one open system match the inputs of another we should be able to glue them together, or `compose' them, and obtain a new open system.  For open Markov processes, composition is fairly intuitive.  Suppose we have an open Markov process $M \maps X \to Y$:

\[
\begin{tikzpicture}[->,>=stealth',shorten >=1pt,auto,node distance=3.7cm,
  thick,main node/.style={circle,fill=white!20,draw,font=\sffamily\Large\bfseries},terminal/.style={circle,fill=blue!20,draw,font=\sffamily\Large\bfseries}]]
  \node[main node](1) {$$};
  \node[main node](2) [right=2.4cm of 1] {$$};

\node(A) [left=1cm of 1,circle,draw,inner sep=1pt,fill=gray,color=purple] {};
\node(B) [above right=0.1cm and 1cm of 2,circle,draw,inner sep=1pt,fill=gray,color=purple] {};
\node(C) [below right=0.1cm and 1cm of 2,circle,draw,inner sep=1pt,fill=gray,color=purple] {};
\node[right=1.05 cm of 2,color=purple] {$Y$};
\node[left=1.15 cm of 1,color=purple] {$X$};
  \path[every node/.style={font=\sffamily\small}, shorten >=1pt]
    (1) edge [bend left=15] node[above] {$7$}(2) 
    (2) edge [bend left=15] node[below] {$3$} (1);
    
\path[color=purple, very thick, shorten >=10pt, ->, >=stealth] (A) edge (1);
\path[color=purple, very thick, shorten >=10pt, ->, >=stealth] (B) edge (2);
\path[color=purple, very thick, shorten >=10pt, ->, >=stealth] (C) edge (2);
\end{tikzpicture}
\]
and an open Markov process $M' \maps Y \to Z$:
\[
\begin{tikzpicture}[->,>=stealth',shorten >=1pt,auto,node distance=3.7cm,
  thick,main node/.style={circle,fill=white!20,draw,font=\sffamily\Large\bfseries},terminal/.style={circle,fill=blue!20,draw,font=\sffamily\Large\bfseries}]]
  \node[main node](1) {$$};
  \node[main node](2) [below=2.4cm of 1] {$$};
  \node[main node](3) [below right=1cm and 2.7cm of 1]  {$$};
  \node[main node](4) [right of=3] {$$};
\node(A) [below left=1cm and 1cm of 1,circle,draw,inner sep=1pt,fill=gray,color=purple] {};
\node(B) [above left=1cm and 1cm of 2,circle,draw,inner sep=1pt,fill=gray,color=purple] {};
\node(C) [right=1cm of 4,circle,draw,inner sep=1pt,fill=gray,color=purple] {};
\node[below left=1.05 cm and 1.25 cm of 1,color=purple] {$Y$};
\node[right=1.25 cm of 4,color=purple] {$Z$};
  \path[every node/.style={font=\sffamily\small}, shorten >=1pt]
    (3) edge [bend left =15] node[above] {$3$} (4)
    (2) edge [bend left=15] node[above,midway] {$2$} (3) 
    (3) edge [bend left =15] node[below] {$1$} (2)
    (3) edge [bend right =15] node[above] {$4$} (1);
    
\path[color=purple, very thick, shorten >=10pt, ->, >=stealth] (C) edge (4);
\path[color=purple, very thick, shorten >=10pt, ->, >=stealth] (A) edge (1);
\path[color=purple, very thick, shorten >=10pt, ->, >=stealth] (B) edge (2);
\end{tikzpicture}
\]
Then we can compose them by gluing the outputs of the first to the inputs of
the second:
\[
\begin{tikzpicture}[->,>=stealth',shorten >=1pt,auto,node distance=3.7cm,
  thick,main node/.style={circle,fill=white!20,draw,font=\sffamily\Large\bfseries},terminal/.style={circle,fill=blue!20,draw,font=\sffamily\Large\bfseries}]]

  \node[main node](1) {$$};
  \node[main node](2) [right=2.6cm of 1] {$$};
  \node[main node](3) [right=2.6cm of 2] {$$};
  \node[main node](4) [right=2.6cm of 3] {$$};
  
    \path[every node/.style={font=\sffamily\small}, shorten >=1pt]
    (1) edge [bend left=15] node[above] {$7$}(2) 
    (2) edge [bend left=15] node[below] {$3$} (1)
        (3) edge [bend left =15] node[above] {$3$} (4)
    (2) edge [bend left=5] node[above,midway] {$2$} (3) 
    (3) edge [bend left =15] node[below] {$1$} (2)
    (3) edge [bend right =45] node[above] {$4$} (2);
 
\node(A) [left=1cm of 1,circle,draw,inner sep=1pt,fill=gray,color=purple] {};     
\path[color=purple, very thick, shorten >=10pt, ->, >=stealth] (A) edge (1);  
\node[left=1.15 cm of 1,color=purple] {$X$};
\node(C) [right=1cm of 4,circle,draw,inner sep=1pt,fill=gray,color=purple] {};
\path[color=purple, very thick, shorten >=10pt, ->, >=stealth] (C) edge (4);
\node[right=1.25 cm of 4,color=purple] {$Z$};

  \end{tikzpicture} \]
Note that in this example the output map of $M$ sends both outputs to the same state. This is why when we compose the two Markov processes we identify the two inputs of $M'$.

Decorated cospans let us formalize this process of composition, not only for
open Markov process but for two other kinds of open systems we need in this paper: detailed balanced Markov processes and circuits made of resistors.  Recall that an open Markov process $M \maps X \to Y$ is a cospan of finite sets:
\[ \xymatrix{ & N & \\
X \ar[ur]^{i} && Y \ar[ul]_{o} \\ } \]
together with a Markov process $M$ on $N$.  We say that the cospan is \define{decorated}
with the Markov process $M$.  To compose open Markov processes, we need to 
compose the cospans and also compose these decorations.
  
Composing two cospans
\[ \xymatrix{ & N & & N' & \\ 
X \ar[ur]^{i} & &  Y \ar[ul]_{o} \ar[ur]^{i'} & & Z \ar[ul]_{o'} \\ } \] 
yields a new cospan
\[ \xymatrix { & N+_{Y}N' & \\
 X \ar[ur]^{j i}  & & Z. \ar[ul]_{j' o'} \\ }\]
Here $N+_{Y}N'$ is the `pushout' of $N$ and $N'$ over $Y$, which comes with canonical
maps written $j\maps N \to N+_{Y}N'$ and $j'\maps N' \to N+_{Y}N'$.  When $N$
and $N'$ are sets, we construct this pushout by taking the disjoint union of $N$
and $N'$ and quotienting by the equivalence relation where $n \sim n'$ if $o(y)
= n$ and $i'(y) = n'$ for some $y \in Y$.  This identifies the outputs of the
first open Markov process with the inputs of the second.  

Besides composing the cospans, we need to compose the Markov processes decorating them. We do this with the help of two constructions.  The first  says how to `push forward' a Markov process on $N$ to a Markov process on $N'$ given a map $f \maps N \to N'$.   To explain this, it is useful to let $F(N)$ denote the set of all Markov processes on $N$.  

\begin{lem} \label{functor}
Given any function $f \maps N \to N'$ there is a function $F(f) \maps F(N) \to F(N')$ that maps this Markov process on $N$:
\[ \xymatrix{ (0,\infty) & E \ar[l]_-r \ar[r]<-.5ex>_t \ar[r] <.5ex>^s & N  } \]
to this Markov process on $N'$:
\[ \xymatrix{ (0,\infty) & E \ar[l]_-r \ar[r]<-.5ex>_{f t} \ar[r] <.5ex>^{f s} & N'  }.\]
Moreover, we have 
\[          F(fg) = F(f) F(g)  \]
and 
\[          F(1_X) = 1_{F{X}}  \]
when $1_X \maps X \to X$ is the identity map.
\end{lem}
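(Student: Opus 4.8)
The plan is to read the recipe for $F(f)$ directly off the statement, verify that it genuinely produces a Markov process on $N'$, and then confirm the two functoriality equations by unwinding the relevant composites. The argument is bookkeeping rather than insight: the content is simply that $F$ is a covariant functor from $\FinSet$ to $\Set$, where the edge set and rate map are carried along untouched and only the source and target are post-composed with $f$.

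First I would confirm that $F(f)$ actually lands in $F(N')$. A Markov process on $N$ is the data of a finite set $E$, maps $s,t \maps E \to N$, and a rate map $r \maps E \to (0,\infty)$. The prescribed image keeps $E$ and $r$ fixed and replaces $s,t$ by $fs, ft \maps E \to N'$. Since $E$ is still finite, $fs$ and $ft$ are honest functions into $N'$, and $r$ still takes values in $(0,\infty)$, the result is a bona fide Markov process on $N'$. The one point worth flagging is that when $f$ is not injective the push-forward may collapse an edge $e \maps i \to j$ with $f(i)=f(j)$ into a self-loop, or may send formerly distinct edges to the same source--target pair, creating parallel edges; but the definition of a Markov process places no constraints on $s$ and $t$ (and self-loops contribute nothing to the Hamiltonian, since both sums defining it exclude the case of equal indices), so $F(f)$ is well-defined on all of $F(N)$.

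Next I would verify $F(fg)=F(f)F(g)$ for composable $g \maps M \to N$ and $f \maps N \to N'$. Applied to a Markov process with source and target $s,t \maps E \to M$, the map $F(fg)$ produces the pair $\bigl((fg)s,\,(fg)t\bigr)$, whereas $F(f)F(g)$ first yields $(gs,gt)$ and then post-composes with $f$ to give $\bigl(f(gs),\,f(gt)\bigr)$. These pairs agree by associativity of composition of functions, and since both operations leave $E$ and $r$ fixed the two Markov processes on $N'$ are literally equal. The identity law $F(1_X)=1_{F(X)}$ is then immediate, as $1_X s = s$ and $1_X t = t$ force $F(1_X)$ to return its input unchanged.

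The only thing resembling an obstacle is the well-definedness check, specifically making sure the formalism tolerates the self-loops and parallel edges that a non-injective $f$ can create. Once one observes that the definition of a Markov process imposes no restriction whatsoever on the source and target maps, the remaining equalities collapse to associativity and the unit law for composition, so no genuine difficulty remains.
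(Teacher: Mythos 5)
Your proof is correct and takes essentially the same route as the paper, which simply declares these verifications ``easy to check''; you have carried them out explicitly, including the worthwhile observation that a non-injective $f$ may create self-loops and parallel edges, which the definition of a Markov process (and of the Hamiltonian) tolerates. The one point the paper's proof adds that you omit is set-theoretic: $F(N)$ is in general a proper class, so to speak of a function $F(f)\maps F(N)\to F(N')$ one treats it as a large set via Grothendieck's axiom of universes.
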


\begin{proof}  These are easy to check.  While $F(N)$ is actually a proper class in general, we can treat it as a `large' set using Grothendieck's axiom of universes, so no contradictions arise \cite{Fong}.  Note that this lemma actually says $F \maps \FinSet \to \Set$ is a functor, where $\Set$ is the category of large sets.
\end{proof}

The second construction says how to take a Markov process on $N$ and a
Markov process on $N'$ and get a Markov process on the disjoint union of $N$ and 
$N'$.   For this we need a bit of notation:

\begin{itemize}
\item Given two sets $X$ and $X'$, let $X + X'$ stand for the disjoint union of $X$ and
$X'$.
\item Given two functions $g \maps X \to Y$ and $g' \maps X' \to Y$, let $[g,g'] \maps X + X' \to Y$ be the unique function that restricts to $g$ on $X$ and $g'$ on $X'$.
\item Given two functions $g \maps X \to Y$ and $g' \maps X' \to Y'$, let $g+g' \maps X + X' \to Y + Y'$ be the unique function that restricts to $g$ on $X$ and $g'$ on $X'$.
\end{itemize}

\begin{lem} \label{lax}
For any pair of finite sets $N$ and $N'$, 
there is a map $\phi_{N,N'} \maps F(N) \times F(N') \to F(N + N')$ sending any pair consisting of a Markov process on $N$:
\[ \xymatrix{ (0,\infty) & E \ar[l]_-r \ar[r]<-.5ex>_t \ar[r] <.5ex>^s & N  } \]
and a Markov process on $N'$:
\[   \xymatrix{ (0,\infty) & E' \ar[l]_-{r'} \ar[r]<-.5ex>_{t'} \ar[r] <.5ex>^{s'} & N'  } 
\]
to this Markov process on $N+N'$:
\[    \xymatrix{ (0,\infty) & E+E' \ar[l]_{[r,r']} \ar[r]<-.5ex>_{t+t'} \ar[r] <.5ex>^{s+s'} & N+N' }.\] 
This map makes $F \maps \FinSet \to \Set$ into a lax monoidal functor.
\end{lem}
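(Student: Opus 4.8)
The plan is to take the map $\phi_{N,N'}$ as defined in the statement and verify that, together with the functor $F$ of Lemma \ref{functor}, it satisfies the axioms of a lax monoidal functor from $(\FinSet, +, \emptyset)$ to $(\Set, \times, 1)$, where $1$ denotes a chosen one-point set. Recall that this amounts to supplying three pieces of data — the functor $F$, the natural transformation with components $\phi_{N,N'}$, and a unit morphism $\phi_0 \maps 1 \to F(\emptyset)$ — and checking one associativity coherence diagram together with two unit coherence diagrams. The construction of $\phi_{N,N'}$ itself is already spelled out in the statement: it sends a pair of Markov processes, on $N$ and $N'$ respectively, to the Markov process on $N+N'$ with edge set $E + E'$, rate map $[r,r']$, and source and target maps $s+s'$ and $t+t'$. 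So the content of the lemma is entirely the verification of naturality and coherence.

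First I would check that $\phi_{N,N'}$ is natural in the pair $(N,N')$. Given maps $f \maps N \to M$ and $f' \maps N' \to M'$ in $\FinSet$, I must show that the square built from $\phi_{N,N'}$, $\phi_{M,M'}$, $F(f) \times F(f')$, and $F(f+f')$ commutes. This is routine: both composites act on the underlying edge set as the identity on $E+E'$, on the rate map as $[r,r']$, and on the source and target maps by postcomposition with $f + f'$, since pushing a Markov process forward along $f$ and $f'$ separately and then taking the disjoint union agrees with first forming the disjoint union and then pushing forward along $f+f'$.

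Next I would produce the unit morphism and dispatch the unit coherences. The set $F(\emptyset)$ of Markov processes on the empty set is a singleton, since its only element has empty edge set together with the unique empty source, target, and rate maps; I take $\phi_0 \maps 1 \to F(\emptyset)$ to pick out this element. The two unit coherence diagrams then reduce to the observation that adjoining the empty Markov process via $\phi$ leaves a Markov process on $N$ unchanged — its edge set $E + \emptyset$ is canonically $E$ — which is precisely compatibility with the left and right unitors of $+$ and of $\times$.

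Finally, the associativity coherence is the one step requiring genuine care. For finite sets $N, N', N''$, I must show that the two ways of reaching $F(N + N' + N'')$ from $F(N) \times F(N') \times F(N'')$ — one grouping the first two factors, the other the last two, each threaded through the relevant associators of $+$ and of $\times$ — give the same function. Both composites send a triple of Markov processes to the Markov process on the triple disjoint union, with edge set $E + E' + E''$, rate map $[r,r',r'']$, and source and target maps $s+s'+s''$ and $t+t'+t''$; the only thing to track is that the canonical associator isomorphisms for disjoint union of the edge sets match on the two sides, which they do because disjoint union is coherently associative. The main obstacle is thus purely bookkeeping — keeping the associators and unitors in $\FinSet$ and in $\Set$ aligned through the diagram chase. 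As in Lemma \ref{functor}, the fact that $F(N)$ is a proper class is handled by working in a Grothendieck universe, so that all the objects above are legitimately large sets and no size problems arise.
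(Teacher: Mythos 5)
Your proposal is correct and follows essentially the same route as the paper, which simply notes that the unit map $\phi_1 \maps 1 \to F(\emptyset)$ is forced (since $F(\emptyset)$ is a singleton) and that the three coherence diagrams of Definition \ref{defn.lmf} are easily checked. You have merely written out the naturality, unit, and associativity verifications that the paper leaves implicit.
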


\begin{proof}
Saying that $\phi$ makes $F$ into a lax monoidal functor means that together
with some map $\phi_1$ from the one-element set to $F(\emptyset)$, it makes
the three diagrams of Definition \ref{defn.lmf} commute. There is only one possible
choice of $\phi_1$, and it is easy to check that the diagrams commute.
\end{proof}

Now, suppose we have an open Markov process $M \maps X \to Y$ and 
and an open Markov process $M' \maps Y \to Z$.  Thus, we have cospans of
finite sets
\[ \xymatrix{ & N & & N' & \\ 
X \ar[ur]^{i} & &  Y \ar[ul]_{o} \ar[ur]^{i'} & & Z \ar[ul]_{o'} \\ } \] 
decorated with Markov processes $M \in F(N)$, $M' \in F(N')$.   To define the \define{composite} open Markov process $M' M \maps X \to Z$, we need to
decorate the composite cospan
\[ \xymatrix { & N+_{Y}N' & \\
 X \ar[ur]^{j \circ i}  & & Z. \ar[ul]_{j' \circ o'} \\ }\] 
with a Markov process.  Thus, we need to choose an element of $F(N+_Y N')$.  
To do this, we take $\phi_{N,N'}(M,M')$, which is a Markov process on $N + N'$, and
apply the map 
\[        F([j,j'])  \maps F(N + N') \to F(N+_Y N').  \]
This gives the required Markov process on $N +_Y N'$.

With this way of composing open Markov processes, we \emph{almost} get a category with finite sets as objects and open Markov processes $M \maps X \to Y$ as morphisms from $X$ to $Y$.  However, the disjoint union of open sets is associative only up to isomorphism, and thus so is the pushout, and so is composition of open Markov
processes.  Thus, we obtain a subtler structure, called a bicategory
\cite{Benabou,Leinster}.  While we expect this to be important eventually, bicategories 
are a bit distracting from our current goals.  Thus, we shall take \emph{isomorphism classes} of open Markov processes $M \maps X \to Y$ as morphisms from $X$ to $Y$, 
obtaining a category.   For the details of what kind
of `isomorphism class' we mean here, see Appendix \ref{sec:decorated}.

\begin{defn}
The category $\Mark$ is the decorated cospan category where an object is a finite set and a morphism is an isomorphism class of open Markov processes $M \maps X \to Y$. 
\end{defn}

In fact $\Mark$ is much better than a mere category.  First, it is a `symmetric monoidal'
category, meaning that it has a well-behaved tensor product, which describes our ability to take the `disjoint union' of open Markov processes $M \maps X \to Y$, $M' \maps X'
\to Y'$ and get an open Markov process $M+M' \maps X+X' \to Y+Y'$.  Second, it is a `dagger category', meaning that we can turn around an open Markov process $M \maps X \to Y$ and regard it as an open Markov process $M^\dagger \maps Y \to X$.  Third, these features fit together in a nice way, giving a `dagger compact' category.  For an introduction to these concepts see \cite{AC,BaezStay,Se}.   We will not need these facts about the category $\Mark$ here, but they mean that it fits into a general program relating categories to diagrams of networks. 

\begin{lem} \label{Mark_dagger}
The category $\Mark$ is a dagger compact category.
\end{lem}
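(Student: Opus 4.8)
The plan is to realize $\Mark$ as a decorated cospan category and to transport the dagger compact structure of $\Cospan(\FinSet)$ across this construction. First I would recall that $\Cospan(\FinSet)$, with monoidal structure given by disjoint union $+$ and monoidal unit the empty set $\emptyset$, is itself a dagger compact category: every object $X$ is its own dual, the dagger sends a cospan to the cospan obtained by interchanging its two legs while leaving the apex unchanged, and the cup and cap morphisms witnessing self-duality are built from the codiagonal $[\id_X,\id_X]\maps X+X \to X$ together with the unique map $\emptyset \to X$. The snake equations and the compatibility of the dagger with the compact structure are standard facts about cospan categories, following from the universal property of the pushout.

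Next I would upgrade this to the decorated setting. By Lemmas \ref{functor} and \ref{lax}, $F\maps \FinSet \to \Set$ is a lax monoidal functor, and the general theory of decorated cospans (see Appendix \ref{sec:decorated} and \cite{Fong}) then makes $\Mark$ into a symmetric monoidal category, with tensor product given on objects by $+$ and on morphisms by the operation $\phi_{N,N'}$ of Lemma \ref{lax}. So the work remaining is to lift the dagger and the compact structure and to check their axioms. The dagger lifts essentially for free: since interchanging the two legs of a cospan leaves the apex $N$ untouched, it leaves the decorating Markov process in $F(N)$ untouched as well, so I define $M^\dagger \maps Y \to X$ to be the open Markov process with the same decoration and with input and output maps swapped. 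That $(-)^\dagger$ is an identity-on-objects contravariant involution compatible with the tensor product then reduces to the corresponding statement in $\Cospan(\FinSet)$, together with the observation that composition of decorated cospans manipulates decorations only through the pushforward maps $F(f)$ and through $\phi_{N,N'}$, neither of which is sensitive to which legs we have designated as inputs versus outputs.

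For compactness, I would decorate the cup and cap of $\Cospan(\FinSet)$ with the \emph{empty} Markov process on the relevant node set (the one whose edge set is empty), obtained from the unit map $\phi_1\maps 1 \to F(\emptyset)$ of the lax monoidal structure together with the pushforward maps $F(f)$ of Lemma \ref{functor}. Finally I would verify that the triangle (zigzag) identities and the dagger-compactness axiom continue to hold in $\Mark$. The step I expect to require the most care is precisely this last one, because these identities are equations between \emph{composites} of the structure morphisms, and composition in $\Mark$ first combines decorations using $\phi_{N,N'}$ and then pushes them forward along the codiagonal maps. The key point is that, since every structure morphism carries an empty decoration, and since $\phi_{N,N'}$ sends a pair of empty processes to an empty process while each $F(f)$ preserves emptiness, every such composite is again decorated by an empty Markov process. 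The identities in $\Mark$ therefore collapse to the already-established identities in $\Cospan(\FinSet)$, modulo the bookkeeping of these trivial decorations, and assembling these verifications shows that $\Mark$ is dagger compact.
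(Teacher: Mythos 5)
Your proof is correct and takes essentially the same route as the paper: the paper likewise realizes $\Mark$ as the decorated cospan category $F\Cospan$ and invokes the general fact (Lemma \ref{dagger_lemma}, i.e.\ Fong's Theorem 3.4) that such categories are dagger compact whenever the decorating functor is lax symmetric monoidal, so your argument is just an unfolding of that general result in this special case. The one point you gloss over, and which the paper flags explicitly as a ``further fact'' beyond Lemmas \ref{functor} and \ref{lax}, is that the coherence maps $\phi_{N,N'}$ must be checked to be \emph{symmetric} (compatible with the braidings), since that is what lets the symmetric monoidal and compact structure of $\Cospan(\FinSet)$ lift to $\Mark$.
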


\begin{proof}
We use Lemmas \ref{functor} and \ref{lax}, which say that $F \maps
\FinSet \to \Set$ is a lax monoidal functor, together with the further fact that $F$ is a 
lax symmetric monoidal functor.  The conclusion then follows from Lemma \ref{dagger_lemma}. 
\end{proof}

\section{Composing open detailed balanced Markov processes}
\label{sec:composing_detailed}

Just as open Markov processes form a decorated cospan category, so too do open
detailed balanced Markov processes. The intuition is the same: we compose two
open detailed balanced Markov processes by identifying, or `gluing together', the outputs of the first and the inputs of the second. There is, however, one important difference: for a detailed balanced Markov processes, each state is equipped with an equilibrium population. Thus to glue together two such processes, we require that the equilibrium populations at the outputs of the first process match those at the inputs of the second.

Why do we require this?  If we take two open detailed balanced Markov processes, regard them as mere open Markov processes, and compose them as such, the resulting open Markov process may not have a detailed balanced equilibrium!  Avoiding this problem is one reason we defined a detailed balanced Markov process in the way we did.  It is not merely a Markov process for which a detailed balanced equilibium \emph{exists}, but a Markov process \emph{equipped with} a specific detailed balanced equilibrum.  This allows us to impose the matching condition just described---and then, open detailed balanced Markov processes are closed under composition.  They become the morphisms of a category, which we call $\DetBalMark$.

We now construct this category.  Let $\FinPopSet$ be the category of finite sets with populations, as in Definition \ref{defn:FinPopSet}. A morphism in $\DetBalMark$ will be a cospan in $\FinPopSet$ decorated by a detailed balanced Markov process.  To compose such cospans, we need $\FinPopSet$ to have pushouts.  Pushouts are a special case of
finite colimits.

\begin{lem}
  The category $\FinPopSet$ has finite colimits. Moreover, the forgetful
  functor $U\maps \FinPopSet \to \FinSet$, that maps each finite set with
  populations to its underlying set, preserves finite colimits.
\end{lem}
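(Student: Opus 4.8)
The plan is to reduce to generators of finite colimits and then to construct each generating colimit by equipping the underlying colimit in $\FinSet$ with a suitable population. Recall that a category has all finite colimits as soon as it has an initial object, binary coproducts, and coequalizers, and that a functor preserving these three preserves all finite colimits. So it suffices to exhibit these three constructions in $\FinPopSet$ and to check that $U$ carries each one to the corresponding colimit in $\FinSet$; preservation of arbitrary finite colimits then follows formally.

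For the initial object I would take $(\emptyset, !)$, the empty set with its empty population function: any cocone out of it is trivially unique, and $U(\emptyset,!) = \emptyset$ is initial in $\FinSet$. For the coproduct of $(X,q)$ and $(Y,r)$ I would take $(X+Y, [q,r])$ with the two coproduct inclusions, which are population-preserving by the very definition of the copairing $[q,r]$; its universal property is inherited from that of $X+Y$ in $\FinSet$, and $U$ returns $X+Y$.

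The one step with real content is the coequalizer. Given population-preserving maps $f,g \maps (X,q) \to (Y,r)$, I would form the coequalizer $c \maps Y \to Q$ in $\FinSet$, so that $Q$ is $Y$ quotiented by the equivalence relation generated by $f(x) \sim g(x)$, and equip $Q$ with the population $\bar r$ determined by $\bar r(c(y)) = r(y)$. The crucial point---the heart of the lemma---is that $\bar r$ is well defined: whenever $f(x)$ and $g(x)$ are identified, population preservation gives $r(f(x)) = q(x) = r(g(x))$, so $r$ is constant on each equivalence class and descends to a function $\bar r \maps Q \to (0,\infty)$; the values remain in $(0,\infty)$ because we merely \emph{select} an already-present population rather than combining several. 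With this, $c$ is a morphism in $\FinPopSet$ and inherits the universal property of the coequalizer in $\FinSet$, and once more $U$ returns exactly $Q$.

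Assembling these three, $\FinPopSet$ has all finite colimits, each built by taking the colimit of underlying finite sets and pushing the population function forward; and since every generating colimit is sent by $U$ to its counterpart in $\FinSet$, the functor $U$ preserves all finite colimits. Conceptually this is unsurprising: $\FinPopSet$ is the full subcategory of the slice $\Set/(0,\infty)$ on finite-domain objects, and the forgetful functor from a slice category is a left adjoint, hence cocontinuous; the only extra observation needed is that a finite colimit of finite sets is again finite, so the colimit computed in $\Set/(0,\infty)$ lands back inside $\FinPopSet$. I expect the sole genuine obstacle to be the well-definedness of $\bar r$ on the quotient, which is precisely the place where the population-preservation requirement in the definition of a morphism of $\FinPopSet$ is used.
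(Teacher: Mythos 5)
Your proof is correct, but it takes a genuinely different route from the paper's. You reduce to the generating finite colimits (initial object, binary coproducts, coequalizers), construct each explicitly in $\FinPopSet$, and observe that $U$ sends each construction to its counterpart in $\FinSet$; the only nontrivial point is that the population descends to the coequalizer quotient, which follows from population-preservation of the parallel pair. The paper instead handles an arbitrary finite diagram $D$ in one stroke: it forms the colimit $A$ of $U \circ D$ in $\FinSet$, notes that $A$ is also the colimit of the diagram viewed in $\Set$, and then observes that the population maps of the objects in the image of $D$ assemble into a cocone with vertex $(0,\infty)$ in $\Set$, so the universal property hands you the population $q \maps A \to (0,\infty)$ for free. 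The paper's argument is more uniform and avoids case analysis -- the well-definedness you verify by hand for coequalizers is absorbed into the statement that the populations form a cocone -- while your version makes visible exactly where the morphism condition $q = q'f$ does its work, and your closing remark about $\FinPopSet$ sitting inside the slice $\Set/(0,\infty)$ is essentially the ``properties of slice categories'' the paper alludes to before giving its direct proof. Both arguments are complete and correct.
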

\begin{proof}
  This follows from properties of so-called slice categories; details may be
  found in any basic category theory textbook, such as Mac Lane \cite{MacLane}.
  Here we provide a proof of this special case.

  Let $D\maps J \to \FinPopSet$ be a finite diagram in $\FinPopSet$. Composing
  with the forgetful functor $U\maps \FinPopSet \to \FinSet$, we have a finite
  diagram $U \circ D$ in $\FinSet$.  Write $A$ for the colimit of $U \circ D$;
  this exists as $\FinSet$ has finite colimits. 

  The inclusion functor $I\maps \FinSet \to \Set$ preserves colimits. 
  Thus the finite set $A$ is also the colimit of $I \circ U
  \circ D$ in $\Set$. The key to this lemma is then to note that each object in
  the image of $D$ is a finite set with population, and the population
  assignments form a cocone of the diagram $I \circ U \circ D$ in $\Set$. By the
  universal property of $A$ in $\Set$, we thus obtain a map $q\maps A \to
  (0,\infty)$. It is then easy to check that $(A,q)$ is the colimit of $D$ in
  $\FinPopSet$.

  Moreover, as $U$ maps $(A,q)$ to $A$, we see that $U\maps \FinPopSet \to
  \FinSet$ preserves finite colimits.
\end{proof}

Copying the argument for open Markov processes in Section \ref{sec:composing},
we can now define a lax monoidal functor $G \maps (\FinPopSet,+) \to
(\Set,\times)$ sending any finite set $X$ with populations to the set of
detailed balanced Markov processes on $X$.  Given any finite set with populations 
$(N,q)$, let $G(N,q)$ be the set of all detailed balanced Markov processes on $(N,q)$.

\begin{lem}
  Given any map of sets with populations $f \maps (N,q) \to (N',q')$, there is a
  function $G(f) \maps G(N,q) \to G(N',q')$ mapping any detailed balanced Markov
  process $M$ on $N$:
  \[ \xymatrix{ (0,\infty) & E \ar[l]_-r \ar[r]<-.5ex>_t \ar[r] <.5ex>^s & N
  \ar[r] <.5ex>^q & (0,\infty) } \]
  to this detailed balanced Markov process on $N'$:
  \[ \xymatrix{ (0,\infty) & E \ar[l]_-r \ar[r]<-.5ex>_{ft} \ar[r] <.5ex>^{fs} &
  N' \ar[r] <.5ex>^{q'} & (0,\infty). } \]
	
\end{lem}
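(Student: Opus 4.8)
The plan is to define $G(f)(M)$ using exactly the pushforward construction of Lemma \ref{functor}: given a detailed balanced Markov process $M \in G(N,q)$ with underlying graph $s,t \maps E \to N$ and rate constants $r \maps E \to (0,\infty)$, let $G(f)(M)$ be the Markov process on $N'$ with the same edge set $E$, the same rate map $r$, and source and target maps $fs, ft \maps E \to N'$, now regarded as living on the finite set with populations $(N',q')$. The underlying data is a genuine Markov process on $N'$ by Lemma \ref{functor}, and $(N',q')$ is a finite set with populations by hypothesis, so the only thing left to prove is that this process satisfies the detailed balance condition with respect to $q'$. Establishing this is precisely what guarantees $G(f)(M) \in G(N',q')$, and hence that $G(f)$ is a well-defined function $G(N,q) \to G(N',q')$.

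For the verification I would work with the edge form of detailed balance rather than with the Hamiltonian, since the pushforward is described directly in terms of the source and target maps. Fix $i',j' \in N'$. An edge $e$ runs from $i'$ to $j'$ in $G(f)(M)$ precisely when $f(s(e)) = i'$ and $f(t(e)) = j'$, so I can split the sum over such edges according to the fibers of $f$:
\[
  \sum_{e \maps i' \to j'} r_e\, q'_{i'}
  = \sum_{i \in f^{-1}(i')} \ \sum_{j \in f^{-1}(j')} \ \sum_{e \maps i \to j} r_e\, q'_{i'},
\]
where on the right the notation $e \maps i \to j$ refers to the original source and target maps. Because $f$ is a map of finite sets with populations we have $q = q' f$, hence $q'_{i'} = q_i$ for every $i \in f^{-1}(i')$; substituting this turns each summand into $r_e q_i$. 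Now apply the detailed balance condition for $M$ to each pair $(i,j)$, namely $\sum_{e \maps i \to j} r_e q_i = \sum_{e \maps j \to i} r_e q_j$, and then use $q_j = q'_{j'}$ for $j \in f^{-1}(j')$ to reassemble the fibers in the opposite order. This yields $\sum_{e \maps j' \to i'} r_e\, q'_{j'}$, which is exactly the detailed balance condition for $G(f)(M)$ at the pair $(i',j')$.

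I do not expect a genuine obstacle here: the content is the bookkeeping in the fiber decomposition above. The one point to watch is that $f$ need not be injective, so distinct nodes of $N$ may be identified in $N'$ and an edge that was not a loop in $M$ can become a loop in $G(f)(M)$; but the case $i' = j'$ of the detailed balance condition is symmetric in its two sides and so holds automatically, and the fiber computation above remains valid verbatim in that case. (If functoriality of $G$ is also wanted, the identities $G(fg) = G(f)G(g)$ and $G(1) = 1$ are immediate from the corresponding identities for $F$ in Lemma \ref{functor}, since $G(f)$ acts on the underlying Markov processes exactly as $F(f)$ does.)
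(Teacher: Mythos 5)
Your proposal is correct and matches the paper's own argument: the paper likewise verifies the edge form of the detailed balance condition for $G(f)(M)$ by decomposing the sum over edges $e \maps n \to m$ in $G(f)(M)$ into a double sum over the fibers $f^{-1}(n)$, $f^{-1}(m)$, substituting $q'_n = q_i$ via $q = q'f$, and applying detailed balance of $M$ fiberwise before reassembling. Your extra remarks about loops created by non-injective $f$ and about functoriality are consistent with, though not spelled out in, the paper's proof.
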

\begin{proof}
  We need to check that $G(f)(M)$ is detailed balanced. This requires that 
  \[      
    \sum_{e \maps n \to m} r_e q'_n = \sum_{e \maps m \to n} r_e q'_m 
  \] 
  for all $n,m \in N'$. To see this, we note that every edge between $n$ and $m$
  in $G(f)(M)$ is the image of an edge between $i$ and $j$ in $M$ for some $i$,
  $j$ such that $f(i)=n$ and $f(j)=m$. Moreover, as $M$ is detailed balanced at
  population $q = q'f$, the net flow from $i$ to $j$ is equal to the net flow
  from $j$ to $i$ in $M$.  Explicitly: 
  \begin{multline*}
    \sum_{e \maps n \to m} r_e q'_n = \sum_{\substack{i \in f^{-1}(n)\\j \in
    f^{-1}(m)}} \Bigg(\sum_{e \maps i \to j} r_e q'_n \Bigg)
= \sum_{\substack{i \in f^{-1}(n)\\j \in
f^{-1}(m)}} \Bigg(\sum_{e \maps i \to j} r_e q_i \Bigg)\\
= \sum_{\substack{i \in f^{-1}(n)\\j \in
f^{-1}(m)}} \Bigg(\sum_{e \maps j \to i} r_e q_j\Bigg)
= \sum_{e \maps m \to n} r_e q'_m
  \end{multline*} 
Here $e\maps i \to j$ means that $e$ is an edge from $i$ to $j$ in $M$, whereas $e\maps n \to m$ denotes that $e$ is an edge from $n$ to $m$ in $G(f)(M)$. 
\end{proof}

Adding coherence maps then gives a lax monoidal functor:

\begin{lem}
 For each pair $(N,q), (N',q')$ of finite sets with populations, there is a map from $G(N,q)
 \times G(N',q')$ to $G(N + N',[q,q'])$ sending any pair consisting of a detailed balanced
 Markov process on $(N,q)$:
\[ \xymatrix{ (0,\infty) & E \ar[l]_-r \ar[r]<-.5ex>_t \ar[r] <.5ex>^s & N \ar[r] <.5ex>^q & (0,\infty)  } \]
and a detailed balanced Markov process on $(N',q')$:
\[   \xymatrix{ (0,\infty) & E' \ar[l]_-{r'} \ar[r]<-.5ex>_{t'} \ar[r]
<.5ex>^{s'} & N' \ar[r] <.5ex>^{q'} & (0,\infty)  } 
\]
to this detailed balanced Markov process on $(N + N',[q,q'])$:
\[    \xymatrix{ (0,\infty) & E+E' \ar[l]_{[r,r']} \ar[r]<-.5ex>_{t+t'} \ar[r]
<.5ex>^{s+s'} & N+N' \ar[r] <.5ex>^{[q,q']} & (0,\infty) }.\] 
With this additional map $G$ is a lax monoidal functor.
\end{lem}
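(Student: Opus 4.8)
The plan is to treat this lemma as the population-decorated analogue of Lemma \ref{lax}, isolating the one place where the detailed balance condition genuinely needs checking and inheriting everything else from the treatment of $F$. Write $\psi_{(N,q),(N',q')}$ for the map in the statement. On underlying Markov processes it agrees with the coherence map $\phi_{N,N'}$ of Lemma \ref{lax}: it forms the disjoint union of edge sets, sends source and target maps to $s+s'$ and $t+t'$, rate constants to $[r,r']$, and equips $N+N'$ with the population $[q,q']$. The only new thing to verify is that this target is \emph{detailed balanced}. Here I would use that no edge of $E+E'$ runs between the two summands: every edge of $E$ has source and target in $N$, and every edge of $E'$ has source and target in $N'$. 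Hence for nodes $n,m \in N+N'$ the detailed balance condition
\[ \sum_{e \maps n \to m} r_e \, [q,q']_n = \sum_{e \maps m \to n} r_e \, [q,q']_m \]
reduces to the condition of the first process when $n,m \in N$, to that of the second when $n,m \in N'$, and to the trivial identity $0=0$ (both sides empty sums) when $n$ and $m$ lie in different summands. Since both summands were assumed detailed balanced, $\psi_{(N,q),(N',q')}$ indeed lands in $G(N+N',[q,q'])$.

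Next I would supply the data making $G$ lax monoidal and check the coherence axioms of Definition \ref{defn.lmf}. The unit coherence map is a function $\phi_1 \maps 1 \to G(\emptyset)$ from a one-element set to the set of detailed balanced Markov processes on the empty set with its vacuous population; as $\emptyset$ is the monoidal unit of $(\FinPopSet,+)$ and carries a unique (vacuously detailed balanced) empty Markov process, $G(\emptyset)$ is a one-element set and $\phi_1$ is forced. Naturality of $\psi$ in each argument amounts to the equation $G(f+g)\circ \psi = \psi \circ (G(f)\times G(g))$ for maps $f,g$ of finite sets with populations; since by the previous lemma $G(f)$ merely post-composes the source and target maps with $f$ while leaving the edge set and rate constants intact, both composites relabel the nodes of $E+E'$ by $f+g$ and carry the same rate constants, so they agree.

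Finally, I would verify the three diagrams of Definition \ref{defn.lmf} — the associativity coherence and the two unit coherences — by reduction to the ordinary case. Applying the forgetful functor $U \maps \FinPopSet \to \FinSet$, the underlying Markov-process data of $G$ and of all its coherence maps become precisely those of $F$ from Lemma \ref{lax}, whose diagrams already commute; the remaining bookkeeping concerns only the populations, which combine via $[q,q']$ strictly associatively and unitally up to the canonical isomorphisms of disjoint union, and those isomorphisms in $\FinPopSet$ lie over the corresponding ones in $\FinSet$ precisely because $U$ preserves finite colimits. I expect the one real point of content to be the preservation of detailed balance under $\psi$, established above by the vanishing of the cross-summand terms; the coherence diagrams are the most bookkeeping-heavy step but carry no new difficulty, being inherited essentially verbatim from the treatment of $F$.
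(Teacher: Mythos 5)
Your proposal is correct and takes essentially the same route as the paper, whose proof simply says the argument is ``similar to the arguments for the functor $F$; we omit the details.'' You have in fact supplied those details, correctly isolating the one genuinely new point --- that the disjoint union is still detailed balanced because no edges cross between the summands --- and inheriting the coherence checks from the treatment of $F$.
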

\begin{proof}
  We must check the functoriality of $G$, the naturality of the coherence maps, 
  and the coherence axioms. This is similar to the arguments for the
  functor $F$; we omit the details.
\end{proof}

This allows us to make the following definition.

\begin{defn}
The category $\DetBalMark$ is the decorated cospan category where an object is a
finite set with populations and a morphism is an isomorphism class of open
detailed balanced Markov processes $M \maps X \to Y$.  
\end{defn}

As is the case for all decorated cospan categories, we have:

\begin{prop} \label{DetBalMark_dagger}
The category $\DetBalMark$ is a dagger compact category.
\end{prop}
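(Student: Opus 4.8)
The plan is to follow exactly the strategy used to prove Lemma~\ref{Mark_dagger} for $\Mark$: appeal to the general fact (Lemma~\ref{dagger_lemma}) that any decorated cospan category is dagger compact provided its base category has finite colimits and its decorating functor is lax symmetric monoidal. So the work reduces to checking that $\FinPopSet$, together with the decorating functor $G \maps \FinPopSet \to \Set$ constructed above, satisfies these two hypotheses; the dagger compact structure is then produced by the same general machinery that gave it for $\Mark$.

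The first hypothesis is already in hand: the lemma preceding this proposition shows that $\FinPopSet$ has all finite colimits, with the forgetful functor $U \maps \FinPopSet \to \FinSet$ preserving them. In particular $\FinPopSet$ has an initial object and pushouts, which is exactly what is needed to form and compose the cospans underlying morphisms of $\DetBalMark$. For the symmetric monoidal structure required by the decorated cospan construction I would take the coproduct $+$ on $\FinPopSet$; its existence follows from the same colimit lemma, and the symmetry isomorphisms $(N,q)+(N',q') \cong (N',q')+(N,q)$ are inherited from $\FinSet$ precisely because $U$ preserves colimits.

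The second hypothesis requires $G$ to be lax \emph{symmetric} monoidal. The two lemmas just above already supply the coherence maps $G(N,q) \times G(N',q') \to G(N+N',[q,q'])$ and establish that $G$ is lax monoidal. What remains is to verify compatibility with the symmetries: that the square relating the coherence map for $\big((N,q),(N',q')\big)$ to the one for $\big((N',q'),(N,q)\big)$, through the braidings on $\Set$ and on $\FinPopSet$, commutes. Concretely, swapping the two decorating detailed balanced Markov processes and then relabeling along $N+N' \cong N'+N$ agrees with relabeling first and swapping afterwards, since on underlying edge sets both routes amount to the canonical iso $E+E' \cong E'+E$ together with the matching rearrangement of the source, target, rate, and population data. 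This is the identical computation carried out for $F$ in the unpopulated case; because $U$ preserves the relevant colimits and the population labels are simply transported along, the diagram commutes for the same reason it does downstairs in $\FinSet$.

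With both hypotheses verified, Lemma~\ref{dagger_lemma} applies verbatim and yields that $\DetBalMark$ is dagger compact. The dagger sends an open detailed balanced Markov process $M \maps (X,u) \to (Y,v)$ to the same decorated cospan read backwards, $M^\dagger \maps (Y,v) \to (X,u)$, and compactness comes from the cup and cap cospans built from the fold map $(X,u)+(X,u) \to (X,u)$ decorated by the empty Markov process; the only point beyond the $\Mark$ case is that populations must agree across every identification, which the matching condition built into composition in $\DetBalMark$ guarantees automatically. I therefore expect the main obstacle to be not conceptual but bookkeeping: confirming the symmetric-monoidal compatibility of $G$, which I anticipate reducing—exactly as for $F$—to the routine observation that the population labels ride along passively under the braiding.
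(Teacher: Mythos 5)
Your proposal is correct and matches the paper's (implicit) argument exactly: the paper states this proposition without an explicit proof, relying—just as in the proof of Lemma \ref{Mark_dagger}—on the fact that $\FinPopSet$ has finite colimits and that $G$ is lax symmetric monoidal, so that Lemma \ref{dagger_lemma} applies. Your additional remarks on verifying the symmetry compatibility of $G$ fill in details the paper explicitly omits ("we omit the details"), but do not constitute a different route.
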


Forgetting the detailed balanced part of a detailed balanced Markov process is
functorial:

\begin{prop}
  There is a `forgetful' faithful symmetric monoidal dagger functor $\DetBalMark \to
  \Mark$ mapping each finite set with populations to its underlying set, and
  each open detailed balanced Markov process to its underlying open Markov
  process.
\end{prop}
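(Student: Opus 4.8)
The plan is to construct this functor using the general machinery of decorated cospans, exactly as $\Mark$ and $\DetBalMark$ were themselves built. Recall that $\Mark$ arises from the base category $\FinSet$ together with the lax monoidal functor $F \maps \FinSet \to \Set$ of Lemmas \ref{functor} and \ref{lax}, while $\DetBalMark$ arises from the base $\FinPopSet$ together with its lax monoidal functor $G$. A functor between two such decorated cospan categories is determined by two pieces of data: a finite-colimit-preserving functor between the base categories, and a monoidal natural transformation comparing the decorating functors. For the first I would take the forgetful functor $U \maps \FinPopSet \to \FinSet$, which we have already shown preserves finite colimits; this is precisely what guarantees that $U$ carries composites, built from pushouts, to composites. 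For the second I would define a natural transformation $\theta \maps G \Rightarrow F \circ U$ whose component $\theta_{(N,q)} \maps G(N,q) \to F(N)$ simply discards the equilibrium population, sending a detailed balanced Markov process on $(N,q)$ to its underlying Markov process on $N$. The induced functor then acts by $U$ on objects and cospan legs, and by $\theta$ on decorations.

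First I would check that $\theta$ is well defined and natural. Well-definedness is immediate: forgetting $q$ leaves a genuine Markov process on $N = U(N,q)$. Naturality is the assertion that, for a map $f \maps (N,q) \to (N',q')$ in $\FinPopSet$, the square relating $G(f)$ and $F(Uf)$ commutes; but both push the rate data forward along $f$ while leaving the edge set $E$ and the rate function $r$ untouched, so forgetting $q$ before or after pushing forward produces the same Markov process $(E,r,fs,ft)$ on $N'$.

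Next I would verify that $\theta$ is monoidal, i.e. compatible with the coherence maps that form disjoint unions of decorations and with the unit maps. Since $U$ is strict symmetric monoidal, sending the coproduct in $\FinPopSet$, namely $(N,q)+(N',q') = (N+N',[q,q'])$, to the coproduct $N+N'$ in $\FinSet$, and since disjoint union of Markov processes visibly commutes with forgetting populations, both coherence squares commute on the nose and the symmetry is respected. By the standard decorated-cospan functoriality \cite{Fong}, the pair $(U,\theta)$ therefore induces a symmetric monoidal functor $\DetBalMark \to \Mark$; and because the dagger on each category is implemented by swapping the two legs of the cospan, a purely cospan-level operation that $\theta$ never touches, this functor is automatically a dagger functor, just as in the proofs of Lemma \ref{Mark_dagger} and Proposition \ref{DetBalMark_dagger}.

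The step I expect to be the real work is faithfulness. Each component $\theta_{(N,q)}$ is injective, since once the apex $(N,q)$ is fixed a detailed balanced Markov process on it is nothing more than its underlying rate data. The subtlety is that equality of morphisms in a decorated cospan category is equality only up to an isomorphism of apices, and $U$ is not injective on objects: an isomorphism $h$ of underlying open Markov processes witnessing that two morphisms of $\DetBalMark$ coincide in $\Mark$ need not, a priori, be an isomorphism in $\FinPopSet$. To close this gap I would show that any such $h$ automatically preserves equilibrium populations. The constraints $u = qi$ and $v = qo$ built into the source and target objects $(X,u)$ and $(Y,v)$ already pin the populations down on the terminals, and I would then argue that the detailed balance equations propagate this agreement from the terminals into the rest of $N$, so that $h$ lifts to an isomorphism in $\FinPopSet$ and the two morphisms were already equal in $\DetBalMark$. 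Verifying that detailed balance genuinely determines $q$ away from the terminals is the crux of the argument, and the one place where the specific structure of detailed balanced Markov processes, rather than formal decorated-cospan nonsense, must be invoked.
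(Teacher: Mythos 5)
Your construction of the functor itself is correct and is essentially the paper's own argument: the paper likewise applies Lemma \ref{lemma:decoratedfunctors} to the colimit-preserving forgetful functor $U \maps \FinPopSet \to \FinSet$ together with the monoidal natural transformation $\theta$ that discards the equilibrium population, and the naturality, monoidality and dagger checks you list are the right ones (the paper records them only briefly). Where you depart from the paper is on faithfulness: the paper dismisses this with ``by inspection'', whereas you correctly observe that equality of morphisms in a decorated cospan category is equality up to isomorphism of apices, so one must show that an isomorphism of the underlying open Markov processes lifts to an isomorphism in $\FinPopSet$. That is a genuine subtlety and you are right to flag it as the crux.

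However, the resolution you propose --- that detailed balance propagates the agreement of populations from the terminals into the rest of $N$ --- fails, and with it your proof of faithfulness. Detailed balance only fixes the ratios $q_j/q_i = H_{ji}/H_{ij}$ along edges, so it determines $q$ on a connected component of the underlying graph only once $q$ is known somewhere on that component. On a connected component containing no terminal the populations are determined at best up to an overall positive scale, and on an isolated internal node they are not constrained at all. Concretely, take $X = Y = \emptyset$ and $N = \{\ast\}$ with no edges: the populations $q(\ast)=1$ and $q(\ast)=2$ give decorated cospans that are not isomorphic in $\FinPopSet$ (the unique self-map of $\{\ast\}$ does not preserve populations), hence distinct morphisms $\emptyset \to \emptyset$ in $\DetBalMark$, yet both have the same underlying open Markov process and so the same image in $\Mark$. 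So the lifting claim you need is false in general; your argument does establish faithfulness on the subclass of processes in which every connected component meets $i(X)\cup o(Y)$, but not beyond it, and the example above suggests the unrestricted faithfulness assertion itself deserves more scrutiny than either you or the paper gives it.
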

\begin{proof}
  This is a straightforward application of Lemma \ref{lemma:decoratedfunctors}.
  Indeed, it is easy to check that we may define a monoidal natural transformation
  \[ 
    \xymatrix{ 
      (\FinPopSet, + ) \ar[dd]_{U} \ar[drr]^{G}
      \ddrtwocell<\omit>{<0>_{\theta}} && \\
      && (\Set, \times) \\
      (\FinSet, + ) \ar[urr]_{F} &&
    } 
  \]
  mapping a detailed balanced Markov process   
\[ \xymatrix{ (0,\infty) & E \ar[l]_-r \ar[r]<-.5ex>_t \ar[r] <.5ex>^s & N \ar[r] <.5ex>^q & (0,\infty)  } 
\]
  in $G(N,q)$ to the Markov process
\[ \xymatrix{ (0,\infty) & E \ar[l]_-r \ar[r]<-.5ex>_t \ar[r] <.5ex>^s & N} 
\]
  in $F(U(N,q)) = F(N)$. Lemma \ref{lemma:decoratedfunctors} then allows us to
  build the desired symmetric monoidal dagger functor. By inspection the
  functor is faithful.
\end{proof}

\section{Black boxing}
\label{sec:black_boxing}

In this section we explain a `black box functor' that sends any open detailed balanced Markov process to a description of its steady-state behavior.

The first point of order is to define the category in which this steady-state
behavior lives. This is the category of linear relations. Already we have seen
that the steady states for an open detailed balanced Markov
process---that is, the solutions to the open master equation---form a linear
subspace of the vector space $\R^X \oplus \R^X \oplus \R^Y \oplus \R^Y$ of input 
and output populations and flows. To compose the steady states of two open Markov
processes is easy: we simply require that the populations and flows at the
outputs of our first Markov process are equal to the populations and flows at
the corresponding inputs of the second. It is also intuitive: it simply means
that we require any states we identify to have identical populations, and
require that at each output state all the outflow from the first Markov process 
flows into the second Markov process.

Luckily, this notion of composition for linear subspaces is already
well known: it is composition of linear relations.  We thus define the following
category:
\begin{defn}
  The category of linear relations, $\LinRel$, has finite-dimensional real
  vector spaces as objects and \define{linear relations} $L \maps U \leadsto V$,
  that is, linear subspaces $L \subseteq U \oplus V$, as morphisms from $U$ to $V$.
  The composite of linear relations $L \maps U \leadsto V$ and $L' \maps V \leadsto
  W$ is given by
  \[   L'L = \{ (u,w) : \exists v \in V \;\, (u,v) \in L \textrm{ and } (v,w) \in L' \}. \]
\end{defn}

\begin{prop}
  The category $\LinRel$ is a dagger compact category.
\end{prop}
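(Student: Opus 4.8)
The plan is to exhibit all of the required structure on $\LinRel$ explicitly and then observe that each axiom reduces to routine linear algebra. I first confirm that $\LinRel$ is a well-defined category: given linear relations $L \maps U \leadsto V$ and $L' \maps V \leadsto W$, the composite $L'L$ is the image of the linear subspace $(L \oplus W) \cap (U \oplus L') \subseteq U \oplus V \oplus W$ under the projection to $U \oplus W$, hence is again a linear subspace; associativity is the usual associativity of relational composition; and the identity on $U$ is the diagonal $\Delta_U = \{(u,u) : u \in U\}$.

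Next I equip $\LinRel$ with its symmetric monoidal structure. On objects the tensor product is the direct sum $\oplus$, with unit the zero space $0$; on morphisms $L \maps U \leadsto V$ and $L' \maps U' \leadsto V'$ the tensor $L \oplus L'$ is the evident subspace of $(U \oplus U') \oplus (V \oplus V')$ obtained after reordering summands. The associator, unitors and symmetry are taken to be the graphs of the corresponding linear isomorphisms. The coherence conditions (pentagon, triangle, hexagon) then transport for free along the faithful symmetric monoidal functor $\Vect \to \LinRel$ sending each linear map to its graph, since every coherence isomorphism lies in its image and so the relevant diagrams commute in $\LinRel$ exactly because they commute in $\Vect$.

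For the dagger I set $L^\dagger = \{(v,u) : (u,v) \in L\}$. That this is identity-on-objects and involutive is immediate, while contravariant functoriality $(L'L)^\dagger = L^\dagger L'^\dagger$ together with $\Delta_U^\dagger = \Delta_U$ follows by inspecting the existential in the definition of composition. Compatibility with the monoidal coherence isomorphisms---that is, their unitarity---holds because the dagger of the graph of an invertible linear map $f$ is the graph of $f^{-1}$. Finally, each object is self-dual, with cup $\eta_U = \{(0,(u,u)) : u \in U\} \maps 0 \leadsto U \oplus U$ and cap $\epsilon_U = \{((u,u),0) : u \in U\} \maps U \oplus U \leadsto 0$, and one checks $\epsilon_U = \eta_U^\dagger$, which yields dagger compactness once the symmetry is confirmed unitary.

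The only step requiring genuine care, rather than symbol-pushing, is the verification of the two snake (triangle) identities, because relational composition combines an intersection with a projection and one must rule out the composite acquiring spurious vectors. I would discharge this by computing $(1_U \oplus \epsilon_U)(\eta_U \oplus 1_U)$ elementwise and checking both inclusions against $\Delta_U$, and similarly for the mirror identity. A cleaner alternative that sidesteps all the coherence bookkeeping is to invoke the general fact that the category of relations internal to an abelian category is dagger compact; applying this to $\Vect$ yields $\LinRel$ directly, leaving only the identification of the self-duality and the explicit cup and cap above.
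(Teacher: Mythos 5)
Your proposal is correct and exhibits exactly the structure the paper uses—direct sum as the tensor product and relational transpose as the dagger—differing only in that you carry out the coherence and snake-identity verifications that the paper outsources to its citation of Baez--Erbele. The explicit cups and caps and the zig-zag check are the standard ones and go through as you describe.
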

\begin{proof}
  This is well known \cite{BE}. The tensor product is given by direct sum: if $L \maps U 
  \leadsto V$ and $L' \maps U' \leadsto V'$, then $L \oplus L' \maps U \oplus U' \leadsto
  V \oplus V'$ is the direct sum of the subspaces $L$ and $L'$.   The dagger is given 
  by relational transpose: if $L \maps U \leadsto V$, then 
  \[   L^\dagger = \{(v,u) : \; (u,v) \in L \} .  \qedhere \]
\end{proof}

\begin{defn}
The \define{black box functor} for detailed balanced Markov processes
\[   \square \maps \DetBalMark \to \LinRel  \]
maps each finite set with populations $(N,q)$ to the vector space
 \[\square(N,q)=  \R^N \oplus \R^N\]
of boundary populations and boundary flows, and each open detailed balanced
Markov process $M \maps X \to Y$ to its behavior
\[ 
  \square(M) = S[i,o](\mathrm{Graph}(\nabla D)) \maps \R^X \oplus
  \R^X \leadsto \R^Y \oplus \R^Y,
\]
where $D$ is the dissipation functional of $M$. 
\end{defn}

We still need to prove that this construction actually gives a functor.  We do this
in Theorem \ref{thm:commuting_triangle} by relating this construction to 
the black box functor for circuits
\[  \blacksquare \maps \Circ \to \LinRel  ,\]
which we studied in a previous paper \cite{BaezFong}.

To define the functor $\blacksquare$, we first construct a decorated cospan category
$\Circ$ in which the morphisms are open circuits.  In brief, let 
\[
  H\maps (\FinSet,+) \longrightarrow (\Set,\times)
\]
map each finite set $N$ to the set of circuits 
\[ \xymatrix{  (0,\infty) & E \ar[l]_-c \ar[r]<-.5ex>_t  \ar[r] <.5ex>^s & N }  \]
on $N$. This can be equipped with coherence maps to form a lax monoidal functor
in the same manner as Markov processes. Using this lax monoidal functor $H$, we
make the following definition.

\begin{defn}
The category $\Circ$ is the decorated cospan category where an object is a
finite set and a morphism is an isomorphism class of open circuits $C \maps X
\to Y$. 
\end{defn}

Again, we will often refer to a morphism as simply an open circuit;
we mean as usual the isomorphism class of the open circuit.

\begin{cor}
The category $\Circ$ is a dagger compact category.
\end{cor}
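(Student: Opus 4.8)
The plan is to run exactly the argument used to prove Lemma~\ref{Mark_dagger}, with the lax monoidal functor $H$ playing the role that $F$ played for Markov processes. Since $\Circ$ is defined as the decorated cospan category associated to $H \maps (\FinSet,+) \to (\Set,\times)$, the general result Lemma~\ref{dagger_lemma} will apply the moment we know that $H$ is not merely lax monoidal but lax \emph{symmetric} monoidal. So the whole task reduces to verifying that $H$ carries a lax symmetric monoidal structure, and then invoking Lemma~\ref{dagger_lemma} to conclude that the resulting decorated cospan category is dagger compact.

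First I would observe that the definition of a circuit is structurally identical to that of a Markov process: in both cases the data on a finite set $N$ is a finite set $E$ of edges with source and target maps $s,t \maps E \to N$ together with a labelling of the edges by elements of $(0,\infty)$ --- rate constants in one case, conductances in the other. Consequently the pushforward $H(f) \maps H(N) \to H(N')$ along a function $f \maps N \to N'$ is defined exactly as $F(f)$ was in Lemma~\ref{functor}, by post-composing the source and target maps with $f$ and leaving the edge set and its labels untouched; functoriality $H(fg)=H(f)H(g)$ and $H(1_N)=1_{H(N)}$ then follow by the same one-line checks. Likewise the coherence map $H(N)\times H(N') \to H(N+N')$ is given by disjoint union of edge sets, with $[c,c']$ as the combined conductance labelling, precisely as $\phi_{N,N'}$ was built in Lemma~\ref{lax}. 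The same (essentially trivial) diagram chases establish the lax monoidal coherence axioms, and the symmetry of the construction --- the fact that swapping the two factors commutes with the disjoint-union coherence map --- gives the lax \emph{symmetric} monoidal structure.

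The main point to watch is that nothing in the passage from Markov processes to circuits introduces a new condition that the pushforward or disjoint-union constructions could violate. For circuits there is no analogue of detailed balance to preserve: a circuit is simply a positively-labelled graph, with no constraint relating the labels to one another or to any vertex data. This is in fact why $\Circ$ is easier than $\DetBalMark$ --- the delicate part of the analogous argument for $\DetBalMark$ was checking that $G(f)$ preserves the detailed balance condition, and here there is no such thing to check. With $H$ established as a lax symmetric monoidal functor, Lemma~\ref{dagger_lemma} immediately yields that $\Circ$ is a dagger compact category, completing the proof.
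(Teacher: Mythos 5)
Your proposal is correct and matches the paper's intended argument: the paper states this corollary without proof precisely because, as it notes just before, ``This can be equipped with coherence maps to form a lax monoidal functor in the same manner as Markov processes,'' so that Lemma~\ref{dagger_lemma} applies exactly as it did for $\Mark$ in Lemma~\ref{Mark_dagger}. Your observation that the circuit case is strictly easier than $\DetBalMark$ (no detailed balance condition to preserve under pushforward) is accurate and consistent with the paper's treatment.
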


The main result of our previous paper was this:

\begin{lem} \label{lem:blackbox}
  There exists a symmetric monoidal dagger functor, the \define{black box functor}  
for circuits: 
  \[ \blacksquare\maps \Circ \to \LinRel, \]
   mapping any finite set $X$ to the vector space
\[  \blacksquare(X) = \R^X \oplus \R^X, \] 
and any open circuit $C \maps X \to Y$ to its \define{behavior}, the linear relation 
  \[
   \blacksquare(C) = S[i,o](\mathrm{Graph}(\nabla Q))\maps \R^X \oplus \R^X
   \leadsto \R^Y \oplus \R^Y
  \]
where $Q$ is the power functional of $C$.
\end{lem}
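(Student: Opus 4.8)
The plan is to verify directly that the assignment $\blacksquare$ described in the statement is a well-defined symmetric monoidal dagger functor; this is the main result of \cite{BaezFong}, so what follows is a sketch of the structure of that argument. First I would confirm well-definedness. On objects there is nothing to check. On morphisms, since the extended power functional $P$ is a positive semidefinite quadratic form, the power functional $Q(\psi) = \min_{\phi|_T = \psi} P(\phi)$ is again a quadratic form, so $\nabla Q$ is linear and $\Graph(\nabla Q)$ is a genuine linear subspace of $\R^T \oplus \R^T$; applying the linear relation $S[i,o]$ then yields a linear subspace of $\R^X \oplus \R^X \oplus \R^Y \oplus \R^Y$, so $\blacksquare(C)$ really is a morphism of $\LinRel$. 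I would also check that isomorphic open circuits, which differ only by a relabeling of nodes compatible with the cospan legs, have equal behaviors, so that $\blacksquare$ descends to isomorphism classes.

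Next I would dispatch the identity together with the monoidal, symmetry, and dagger structure, all of which amount to bookkeeping. For the identity on $X$ the underlying circuit has no edges and $N = T = X$ with $i = o = \id$, so $Q = 0$, $\Graph(\nabla Q) = \{(\psi,0)\}$, and unwinding $S[\id,\id]$---matching potentials across each node and setting the output current equal to the input current---returns exactly the identity relation on $\R^X \oplus \R^X$. For the tensor product, the disjoint union of circuits has extended power functional equal to the sum of the two functionals on disjoint sets of variables, so $Q$, $\Graph(\nabla Q)$, and $S[i,o]$ all split as direct sums, giving $\blacksquare(C \oplus C') = \blacksquare(C) \oplus \blacksquare(C')$ and $\blacksquare(\emptyset) = 0$; the braiding is preserved by inspection. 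For the dagger, the key observation is that reversing an open circuit leaves the underlying circuit, and hence $Q$, unchanged, and that the sign convention built into $S[i,o]$---net current \emph{inflow} at inputs versus net \emph{outflow} at outputs---is arranged precisely so that swapping the input and output legs realizes the relational transpose, giving $\blacksquare(C^\dagger) = \blacksquare(C)^\dagger$.

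The main obstacle is functoriality under composition: proving $\blacksquare(C' \circ C) = \blacksquare(C')\,\blacksquare(C)$ when $C \maps X \to Y$ and $C' \maps Y \to Z$ are glued along $Y$. The physical content is transparent---a potential--current pair at $X$ and $Z$ extends over the composite circuit if and only if there is a potential--current pair at the shared nodes $Y$ under which it extends over $C$ and over $C'$ simultaneously---and this is exactly the existential quantifier in the definition of relational composition. Making it rigorous requires two ingredients. First, an additivity lemma: the extended power functional of the glued circuit is the sum of those of $C$ and $C'$ on the pushout $N +_Y N'$, so that minimizing over the interior of the composite can be carried out by first minimizing over the interiors of $C$ and $C'$ and then over the newly internalized $Y$-nodes; this final minimization is what produces the witness $(\phi_Y, \iota_Y)$ and enforces Kirchhoff's law at $Y$, namely that the outflow of $C$ at each shared node equals the inflow of $C'$. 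Second, careful bookkeeping of $S[i,o]$ at nodes where several inputs or outputs are identified, since there the current is \emph{split} among the legs rather than merely equated; the delicate point is checking that this current-splitting, together with the sign conventions, makes the composite of the relations $S[i,o]$ and $S[i',o']$ along the gluing agree with the single relation $S$ of the composite cospan. I would isolate this as a lemma relating the composite of the two $S$-relations to the $S$-relation of the glued cospan, after which functoriality reduces to the additivity of power functionals. This is precisely the analysis performed in \cite{BaezFong}, to which I would ultimately appeal.
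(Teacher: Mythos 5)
Your proposal is correct and takes essentially the same approach as the paper: the paper's entire proof of this lemma is a citation of \cite[Theorem 1.1]{BaezFong}, together with the remark that the behavior subspace of Definition \ref{defn:behavior} is now being read as a linear relation from $\blacksquare(X)$ to $\blacksquare(Y)$, and you likewise defer to that result. Your added sketch of the cited argument (additivity of the extended power functional under gluing, minimization over the internalized nodes yielding Kirchhoff's law, and the bookkeeping of $S[i,o]$) is a faithful outline of what that reference proves, but it is supplementary to, not a departure from, the paper's one-line proof.
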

\begin{proof}
This is a simplified version of \cite[Theorem 1.1]{BaezFong}.  Note that in 
Definition \ref{defn:behavior} we defined the behavior of $C$ to be the subspace
\[    S[i,o](\mathrm{Graph}(\nabla Q))  \subseteq 
\R^X \oplus \R^X \oplus \R^Y \oplus \R^Y .  \]
Now we are treating this subspace as a linear relation from $\blacksquare(X)$
to $\blacksquare(Y)$.
\end{proof}

\section{The functor from detailed balanced Markov processes to circuits}
\label{sec:reduction_2}

In Section \ref{sec:reduction} we described a way to model an open detailed
balanced Markov process using an open circuit, motivated by similarities between
dissipation and power. We now show that the analogy between these two structures
runs even deeper: first, this modelling process is functorial, and second, the
behaviors of corresponding Markov processes and circuits are naturally
isomorphic.

\begin{lem} \label{lem:K}
There is a symmetric monoidal dagger functor
\[ K \maps \DetBalMark \to \Circ 
\]
which maps a finite set with populations $(N,q)$ to the underlying finite set
$N$, and an open detailed balanced Markov process 
\[ 
  \xymatrix{ 
  && X \ar[d]^{i} \\
(0,\infty) & E \ar[l]_-r \ar[r]<-.5ex>_t \ar[r] <.5ex>^s & N \ar[r] <.5ex>^q &
(0,\infty) \\ 
&& Y \ar[u]_{o}} 
\]
to the open circuit
\[ 
  \xymatrix{  
    && X \ar[d]^{i} \\
    (0,\infty) & E \ar[l]_-{c} \ar[r]<-.5ex>_t  \ar[r] <.5ex>^s & N
    &\mbox{\phantom{$(0,\infty)$}} \\
    && Y. \ar[u]_{o}
  }
\]
where 
\[  c_e = \frac{1}{2} r_e q_{s(e)}. \]
\end{lem}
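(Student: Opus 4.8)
The plan is to build $K$ as a functor between decorated cospan categories, in exact parallel with the forgetful functor $\DetBalMark \to \Mark$ constructed in the previous section. Recall that $\DetBalMark$ decorates cospans in $\FinPopSet$ using the lax monoidal functor $G \maps (\FinPopSet,+) \to (\Set,\times)$, while $\Circ$ decorates cospans in $\FinSet$ using a lax monoidal functor $H \maps (\FinSet,+) \to (\Set,\times)$. By Lemma \ref{lemma:decoratedfunctors}, to obtain a symmetric monoidal dagger functor $K$ it suffices to exhibit (i) a finite-colimit-preserving symmetric monoidal functor between the base categories, and (ii) a monoidal natural transformation relating the two decorating functors along this base functor.

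For (i) I would take the forgetful functor $U \maps \FinPopSet \to \FinSet$, $(N,q) \mapsto N$, which by the lemma established above preserves finite colimits (hence pushouts) and is plainly strict symmetric monoidal with respect to $+$. For (ii) I define, for each finite set with populations $(N,q)$, a map
\[ \gamma_{(N,q)} \maps G(N,q) \longrightarrow H(U(N,q)) = H(N) \]
sending a detailed balanced Markov process on $(N,q)$ with edges $E$, maps $s,t \maps E \to N$ and rate constants $r$ to the circuit on $N$ with the same edges and maps but with conductances $c_e = \tfrac12 r_e q_{s(e)}$. This is well defined: since $r_e > 0$ and $q_{s(e)} > 0$, each $c_e > 0$, so the output is a genuine circuit, and the detailed balance condition on the input is simply discarded, as circuits carry no such constraint.

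The crux is the naturality of $\gamma$. Given a map $f \maps (N,q) \to (N',q')$ in $\FinPopSet$—so $q = q'f$—I would check that the square
\[ \xymatrix{ G(N,q) \ar[r]^{G(f)} \ar[d]_{\gamma_{(N,q)}} & G(N',q') \ar[d]^{\gamma_{(N',q')}} \\ H(N) \ar[r]_{H(f)} & H(N') } \]
commutes. Chasing a detailed balanced Markov process both ways, the two resulting circuits share the same edge set $E$ and the same source and target maps $fs, ft$; their conductances are $\tfrac12 r_e q'_{f(s(e))}$ going top-then-right and $\tfrac12 r_e q_{s(e)}$ going left-then-bottom. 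These agree precisely because $q = q'f$ forces $q'_{f(s(e))} = q_{s(e)}$. This is the one place where the structure of $\FinPopSet$—specifically, that its morphisms preserve populations—is essential, and I expect it to be the main (though not technically difficult) point of the argument.

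It then remains to verify that $\gamma$ is monoidal, i.e.\ compatible with the coherence maps of $G$ and $H$ and with the unit. Under disjoint union the edge sets, rate constants, and populations combine component-wise, and the conductance $c_e = \tfrac12 r_e q_{s(e)}$ is computed edge by edge, so $\gamma$ intertwines the two coherence maps on the nose; the unit condition over $\emptyset$ is vacuous. With (i) and (ii) in hand, Lemma \ref{lemma:decoratedfunctors} produces the symmetric monoidal dagger functor $K \maps \DetBalMark \to \Circ$ with the stated effect on objects and morphisms.
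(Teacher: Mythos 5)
Your proposal is correct and follows exactly the paper's route: the paper also obtains $K$ by applying Lemma \ref{lemma:decoratedfunctors} to the forgetful functor $U \maps \FinPopSet \to \FinSet$ together with a monoidal natural transformation from $G$ to $H \circ U$ given by $c_e = \tfrac12 r_e q_{s(e)}$, leaving the verification as ``easy to check.'' Your write-up simply supplies the details the paper omits, correctly identifying the population-preservation condition $q = q'f$ as the point where naturality is secured.
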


\begin{proof}  
  This is another simple application of Lemma \ref{lemma:decoratedfunctors}.
  To see that this gives a functor between the decorated cospan categories we need
  only check that the above function from detailed balanced Markov processes to
  circuits defines a monoidal natural transformation
  \[ 
    \xymatrix{ (\FinPopSet, + ) \ar[dd]_{(U,\upsilon)} \ar[drr]^{(G,\phi)} \ddrtwocell<\omit>{<0>_{\theta}} && \\
    && (\Set, \times) \\
    (\FinSet, + ) \ar[urr]_{(H,\phi')} && 
    } 
  \]
This is easy to check.
\end{proof}

In the above we have described two maps sending an open detailed balanced Markov process to a linear relation:
\[ \blacksquare \circ K \maps \DetBalMark \to \LinRel \]
and 
\[ \square \maps \DetBalMark \to \LinRel .\]
We know the first is a functor; for this second this remains to be proved.  We do this in the process of proving that these two maps are naturally isomorphic:

\begin{thm}
\label{thm:commuting_triangle}
There is a triangle of symmetric monoidal dagger functors \break between dagger compact categories: 
\[
   \xy
   (-20,20)*+{\DetBalMark}="1";
  (20,20)*+{\Circ}="2";
   (0,-20)*+{\LinRel}="5";
        {\ar^{K} "1";"2"};
        {\ar_{\square} "1";"5"};
        {\ar^{\blacksquare} "2";"5"};
        {\ar@{=>}^<<{\scriptstyle \alpha} (1,11); (-3,8)};
\endxy
\]
which commutes up to a monoidal natural isomorphism $\alpha$.  This natural
isomorphism assigns to each finite set with populations $(X,q)$ 
the linear relation $\alpha_{X,q}$ given by the linear map
\begin{align*}
  \alpha_{X,q}\maps \R^X \oplus \R^X &\longrightarrow \R^X \oplus \R^X \\
(\phi,\iota) &\longmapsto (q\phi,\iota)
\end{align*}
where $q\phi \in \R^X$ is the pointwise product of $q$ and $\phi$.
\end{thm}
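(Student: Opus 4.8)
The plan is to bootstrap both the functoriality of $\square$ and the naturality of $\alpha$ from the already-established functors $K$ and $\blacksquare$ together with Theorem \ref{thm.behaviors}, so that essentially no new analysis is required. The crucial preliminary observation is that each $\alpha_{X,q}$ is a linear isomorphism, with inverse $(\psi,\iota) \mapsto (\psi/q,\iota)$, so its graph is an invertible morphism in $\LinRel$. Moreover, since the dagger in $\LinRel$ is relational transpose, the dagger of the graph of \emph{any} invertible linear map is the graph of its inverse; hence $\alpha_{X,q}^\dagger = \alpha_{X,q}^{-1}$. Thus, although $\alpha_{X,q}$ is not orthogonal on the underlying inner-product spaces, it is \emph{unitary} in the sense relevant to the dagger structure of $\LinRel$, and this is exactly what will let $\square$ be a dagger functor.

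First I would reinterpret Theorem \ref{thm.behaviors} as the statement that the naturality square commutes. Unwinding the definition of composition in $\LinRel$, the identity $S[i,o]\,\Graph(\nabla D_M) = (\alpha_{X,qi} \oplus \alpha_{Y,qo})\, S[i,o]\,\Graph(\nabla Q_{K(M)})$ is precisely the equality of relations $\alpha_{Y,v}\circ \blacksquare K(M) = \square(M)\circ \alpha_{X,u}$ for any $M\maps (X,u)\to(Y,v)$, where $u=qi$ and $v=qo$. Since each $\alpha$ is invertible, this rearranges to the conjugation formula
\[
  \square(M) = \alpha_{Y,v}\circ \blacksquare K(M)\circ \alpha_{X,u}^{-1}.
\]
Because $\blacksquare K$ descends to isomorphism classes of open detailed balanced Markov processes, and $\alpha$ is defined on objects, this formula shows that $\square$ is well defined on morphisms.

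Next I would use this formula to prove that $\square$ is a symmetric monoidal dagger functor. As $K$ and $\blacksquare$ are such functors (Lemmas \ref{lem:K} and \ref{lem:blackbox}), their composite $\blacksquare K$ is one, and conjugating a functor by an invertible natural family preserves functoriality: preservation of identities is immediate, while for composable $M,M'$ the inner factors $\alpha_{Y,v}^{-1}\alpha_{Y,v}$ cancel to give $\square(M'M)=\square(M')\square(M)$. For the monoidal structure I would verify the pointwise identity $\alpha_{X+X',[q,q']} = \alpha_{X,q}\oplus\alpha_{X',q'}$ (together with the trivial compatibility on the monoidal unit), so that conjugation by $\alpha$ respects direct sums and transports the symmetric monoidal structure of $\blacksquare K$ to $\square$. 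For the dagger, I would compute $\square(M)^\dagger = (\alpha_{X,u}^{-1})^\dagger (\blacksquare K(M))^\dagger (\alpha_{Y,v})^\dagger$ and substitute $\alpha^\dagger=\alpha^{-1}$ from the preliminary observation; this collapses to $\alpha_{X,u}(\blacksquare K(M))^\dagger \alpha_{Y,v}^{-1} = \square(M^\dagger)$, using that $\blacksquare K$ is itself a dagger functor.

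Finally, the natural isomorphism $\alpha$ is assembled from these ingredients: naturality is the commuting square, each component is invertible in $\LinRel$, and the identity $\alpha_{X+X',[q,q']}=\alpha_{X,q}\oplus\alpha_{X',q'}$ makes it a \emph{monoidal} natural isomorphism. The step I expect to be the main obstacle is the dagger compatibility: at first glance $\square$ threatens to fail to be a dagger functor, since $\alpha_{X,q}$ rescales populations and is not an isometry. The resolution—and the conceptual heart of the argument—is that the dagger of $\LinRel$ is relational transpose rather than matrix adjoint, so $\alpha_{X,q}$ is automatically unitary and the conjugation formula is dagger-compatible after all.
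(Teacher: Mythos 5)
Your proposal is correct and follows essentially the same route as the paper: derive the conjugation formula $\square(M)=\alpha_{Y,v}\circ\blacksquare K(M)\circ\alpha_{X,u}^{-1}$ from Theorem \ref{thm.behaviors}, read it as the naturality square, and obtain functoriality of $\square$ by cancelling the inner $\alpha$'s using the functoriality of $\blacksquare\circ K$. Your explicit verification that $\alpha_{X,q}^{\dagger}=\alpha_{X,q}^{-1}$ in $\LinRel$ (relational transpose of an invertible graph) is a welcome filling-in of a step the paper dispatches with ``it is easily seen,'' but it is not a different argument.
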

\begin{proof}
We begin by simultaneously proving the functoriality of $\square$ and the naturality of
$\alpha$. The key observation is that we have the equality
\[   \square(M) = \alpha_{Y,r} \circ \blacksquare K(M) \circ \alpha_{X,q}^{-1}  \]
of linear relations $\R^X \oplus \R^X \to \R^Y \oplus \R^Y$. This is an
immediate consequence of Theorem \ref{thm.behaviors}, which relates the behavior
of the Markov process and the circuit:
\begin{align*}
  \square(M) &= S[i,o]\big(\Graph(\nabla D_M)\big) \\ 
  &= \big(\alpha_{X,q}\oplus \alpha_{Y,r}\big)S[i,o]\big(\Graph(\nabla Q_{K(M)})\big) \\
  &= \alpha_{Y,r} \circ S[i,o]\big(\Graph(\nabla Q_{K(M)})\big) \circ
  \alpha_{X,q}^{-1} \tag{$\ast$}\\
  &= \alpha_{Y,r} \circ \blacksquare K(M) \circ \alpha_{X,q}^{-1}
\end{align*}
The equation ($\ast$) may look a little unfamiliar, but is simply a switch
between two points of view: in the line above we apply the functions $\alpha$ to
the behavior, in the line below we compose the relations $\alpha$ with the
behavior. In either case the same subspace is obtained.

Another way of stating this `key observation' is as the commutativity of the
naturality square
\[
  \xymatrix{
    \R^X \oplus \R^X \ar[rr]^{\blacksquare K(M)} \ar[dd]_{\alpha_{X,q}} &&  \R^Y \oplus \R^Y 
    \ar[dd]^{\alpha_{Y,r}} \\ \\ 
    \R^X \oplus \R^X \ar[rr]^{\square(M)} &&  \R^Y \oplus \R^Y 
  }
\]
for $\alpha$. Thus if $\square$ is truly a functor, then $\alpha$ is a natural
transformation.

But the functoriality of $\square$ is now a consequence of the functoriality of
$\blacksquare$ and $K$. Indeed, for $M \maps (X,q) \to (Y,r)$ and $M' \maps (Y,r) \to (Z,s)$, we
have
\begin{align*}
 \square(M')\circ \square(M) 
 &=  \alpha_{Z,s} \circ \blacksquare K(M') \circ \alpha_{Y,r}^{-1} \circ \alpha_{Y,r} \circ
 \blacksquare K(M) \circ \alpha_{X,q}^{-1} \\
 &= \alpha_{Z,s} \circ \blacksquare K(M') \circ \blacksquare K(M) \circ
 \alpha_{X,q}^{-1} \\
 &= \alpha_{Z,s} \circ \blacksquare K(M' \circ M) \circ
 \alpha_{X,q}^{-1} \\
 &= \square(M'\circ M).
 \end{align*}
Thus $\alpha$ is a natural transformation. It is easily seen that $\alpha$ is
furthermore monoidal, and an isomorphism.  

As a consequence, the functor $\square$ can be given the structure of a symmetric 
monoidal dagger functor, in a way that makes the triangle commute up to $\alpha$.
\end{proof}

\section{Geometrical aspects}
\label{sec:symplectic}

For a physical system whose behavior is described by a variational principle, the
relation between inputs and outputs is typically a Lagrangian relation between symplectic manifolds \cite{Weinstein}.  For example, in a classical system of particles, the positions and momenta of all the particles determine a point in a symplectic manifold.  Thanks to the principle of least action, the relation between position--momentum pairs at one time and another time is a Lagrangian relation.  In our previous work we have seen that because circuits obey the principle of minimum power, black boxing such a circuit gives a Lagrangian relation between potential--current pairs \cite{BaezFong}.  Since detailed balanced Markov processes obey the principle of minimum dissipation, we expect an analogous result for these.  This is what we prove now.

For a review of linear Lagrangian relations, see our previous paper \cite[Section 6]{BaezFong}.  We recall the ideas briefly here.  For us a \define{symplectic vector space} will be a finite-dimensional real vector space $V$ equipped with a nondegenerate antisymmetric bilinear form $\omega \maps V \times V \to \R$.   Given a symplectic vector space $(V,\omega)$, a linear subspace $L \subseteq V$ is \define{Lagrangian} if $L$ is a maximal linear subspace on which $\omega$ vanishes.   Any symplectic vector space $(V,\omega)$ has a \define{conjugate}, the same vector space equipped with the form $\overline{\omega} = - \omega$.    We usually denote a symplectic vector space by a single letter such as $V$,  and denote its conjugate by $\overline{V}$.  Given two symplectic vector spaces their direct sum is naturally a symplectic vector space.  Using all this, we define a \define{Lagrangian relation} $L \maps V \leadsto W$ between symplectic vector spaces to be a linear relation such that $L$ is a Lagrangian subspace of $\overline{V} \oplus W$.  Lagrangian relations are closed under composition.  The category with symplectic vector spaces as objects and Lagrangian relations as morphisms is a dagger compact category, which we call $\LagRel$.

Recall that an object of $\Circ$ is just a finite set $X$, and the black box functor maps this to the space of potential--current pairs:
\[   \blacksquare(X) = \R^X \oplus \R^X . \]
We make this into a symplectic vector space as follows:
\[    \omega\big((\phi, \iota), (\phi', \iota')\big) = \langle \iota', \phi \rangle - \langle \iota , \phi' \rangle ,\]
where the angle brackets denote the standard inner product on $\R^X$:
\[   \langle v, w \rangle = \sum_{i \in X} v_i w_i .\]

We then have:

\begin{lem} \label{lem:symplectic1}
 The black box functor $ \blacksquare\maps \Circ \to \LinRel$ maps any open
circuit $C \maps X \to Y$ to a Lagrangian relation
\[    \blacksquare(C) \maps \blacksquare(X) \leadsto \blacksquare(Y) .
\]
\end{lem}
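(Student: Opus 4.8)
The plan is to exhibit the behavior $\blacksquare(C) = S[i,o]\big(\Graph(\nabla Q)\big)$ as a composite of two Lagrangian relations and then invoke the fact, recalled just above the statement, that Lagrangian relations are closed under composition. First I would observe that a Lagrangian \emph{subspace} $L \subseteq V$ is the same thing as a Lagrangian \emph{relation} from the zero symplectic space to $V$. It therefore suffices to prove two separate claims: that $\Graph(\nabla Q) \subseteq \R^T \oplus \R^T$ is a Lagrangian subspace, and that
\[ S[i,o]\maps \R^T \oplus \R^T \leadsto \overline{\blacksquare(X)} \oplus \blacksquare(Y) \]
is a Lagrangian relation. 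Composing the Lagrangian relation $S[i,o]$ with the Lagrangian subspace $\Graph(\nabla Q)$ then yields a Lagrangian subspace of $\overline{\blacksquare(X)} \oplus \blacksquare(Y)$, which is exactly the assertion that $\blacksquare(C)$ is a Lagrangian relation from $\blacksquare(X)$ to $\blacksquare(Y)$.

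For the first claim the key point is that $Q$ is a quadratic form. By the Proposition defining the power functional, $Q(\psi) = \min_{\phi|_T = \psi} P(\phi)$ is the partial minimization of the positive semidefinite quadratic form $P$ over an affinely parametrized subspace, so $Q(\psi) = \tfrac12\langle \psi, B\psi\rangle$ for a symmetric linear map $B\maps \R^T \to \R^T$, and $\nabla Q_\psi = B\psi$. Hence $\Graph(\nabla Q) = \{(\psi, B\psi) : \psi \in \R^T\}$ has dimension $|T|$, exactly half of $\dim(\R^T\oplus\R^T)$, and the symplectic form vanishes on it:
\[ \omega\big((\psi, B\psi),(\psi', B\psi')\big) = \langle B\psi', \psi\rangle - \langle B\psi, \psi'\rangle = 0 \]
by symmetry of $B$. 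An isotropic subspace of half dimension is Lagrangian, so $\Graph(\nabla Q)$ is Lagrangian.

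For the second claim I would unwind $S[i,o]$ into an adjoint pair of linear maps. On potentials it acts by pullback along $[i,o]\maps X+Y \to T$, sending $\psi \mapsto \psi\circ[i,o] = [i,o]^*\psi$; on currents it acts by the signed pushforward, sending $\iota'$ to the boundary current $n \mapsto \sum_{x\in o^{-1}(n)}\iota'(x) - \sum_{x\in i^{-1}(n)}\iota'(x)$, that is, to $[i,o]_*(\epsilon\iota')$ where $\epsilon$ is $-1$ on $X$ and $+1$ on $Y$. The dimension count is automatic and insensitive to injectivity: fixing $\psi \in \R^T$ and $\iota' \in \R^{X+Y}$ freely determines $\psi'$ and $\iota$, so $\dim S[i,o] = |T| + |X| + |Y|$, which is half of $\dim\big(\overline{\R^T\oplus\R^T}\oplus\overline{\blacksquare(X)}\oplus\blacksquare(Y)\big)$. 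The content is isotropy, and this is where the adjunction $\langle [i,o]_*c,\psi\rangle_T = \langle c, [i,o]^*\psi\rangle_{X+Y}$ enters: taking $c = \epsilon\iota'$ gives $\langle \iota,\psi\rangle_T = -\langle\iota'_X,\psi'_X\rangle + \langle\iota'_Y,\psi'_Y\rangle$, so that the $T$-form splits as $\omega_T = -\omega_X + \omega_Y$ on elements of $S[i,o]$; feeding this into the ambient form $-\omega_T - \omega_X + \omega_Y$ makes everything cancel.

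I expect the isotropy of $S[i,o]$ to be the main obstacle, not because it is long but because the signs and conjugations must line up exactly. The crucial structural fact is that the inflow-versus-outflow convention, encoded by $\epsilon = -1$ on inputs, is precisely what is absorbed by passing to the conjugate symplectic space $\overline{\blacksquare(X)}$ on the input summand; get this sign wrong and $S[i,o]$ is merely a linear relation rather than a Lagrangian one. Once both pieces are known to be Lagrangian, the conclusion is immediate, and the compositional framing handles the possible non-injectivity of $i$ and $o$ transparently. As an alternative one could verify directly that $\blacksquare(C)$ itself is isotropic — a reciprocity (Tellegen-type) identity asserting that the boundary response of a resistor network is symmetric — together with a half-dimension count, but this merely reorganizes the same facts, and the machinery of \cite[Section 6]{BaezFong} could be cited in its place.
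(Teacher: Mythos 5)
Your proposal is correct, but it is doing much more work than the paper does: the paper's entire proof of this lemma is a citation of \cite[Theorem 1.1]{BaezFong} (with a remark that $\blacksquare(X)$ is written there as $\R^X \oplus (\R^X)^*$ to avoid choosing an inner product). What you have written is essentially a self-contained reconstruction of the argument that the paper outsources, and it follows the same structural strategy as the cited source: factor $\blacksquare(C) = S[i,o]\big(\Graph(\nabla Q)\big)$ as a composite of the Lagrangian subspace $\Graph(\nabla Q) \subseteq \R^T \oplus \R^T$ (a Lagrangian relation out of the zero space) with the Lagrangian relation $S[i,o]$, and invoke closure of Lagrangian relations under composition. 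Your two verifications are sound: the graph of the gradient of a quadratic form is half-dimensional and isotropic by symmetry of the Hessian, and your sign bookkeeping for $S[i,o]$ is right --- the identity $\langle \tilde\iota, \psi\rangle_T = \langle \tilde\iota'_Y, \psi'_Y\rangle - \langle \tilde\iota'_X, \psi'_X\rangle$ gives $\omega_T = \omega_Y - \omega_X$ on the relation, which kills the ambient form $-\omega_T - \omega_X + \omega_Y$, and the count $|T|+|X|+|Y|$ is indeed insensitive to whether $i$ and $o$ are injective. The one point you pass over quickly is that $Q$ really is a quadratic form with a well-defined gradient, i.e.\ that the partial minimum of $P$ over each fiber $\{\phi : \phi|_T = \psi\}$ is attained and depends quadratically on $\psi$; this is exactly the content of \cite[Propositions 2.6 and 2.10]{BaezFong}, which the paper has already quoted, so you may cite those rather than re-deriving the Schur-complement fact. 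What your route buys is transparency about where the conjugation $\overline{\blacksquare(X)}$ and the inflow sign $\epsilon = -1$ on inputs interact --- a point the paper's citation hides entirely; what the paper's route buys is brevity and consistency with the earlier framework.
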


\begin{proof} This is \cite[Theorem 1.1]{BaezFong}, expressed in slightly different notation: there we write $\blacksquare(X)$ as $\mathbb{R}^X \oplus (\mathbb{R}^X)^*$, which allows us to avoid using an inner product to define the symplectic structure on $\blacksquare(X)$.
\end{proof}

This result implies an analogous result for detailed balanced Markov processes.  An object of $\DetBalMark$ is a finite set with populations $(X,q)$.  The black box functor maps this to a space of population-flow pairs:
\[   \square(X,q) = \R^X \oplus \R^X . \]
We make this into a symplectic vector space as follows:
\[    \omega_q\big((p, j), (p', j')\big) = \langle j', p \rangle_q - \langle j , p' \rangle_q \]
where the angle brackets with a subscript $q$ denote a modified inner product on $\R^X$:
\[   \langle v, w \rangle_q = \sum_{i \in X} q_i^{-1} v_i w_i .\]
Note that 
\begin{equation}
    \omega_q\big((p, j), (p', j')\big) = \omega\big((p/q,j), (p'/q,j')\big).
\label{two_symplectic_structures}
\end{equation}
This relation between symplectic structures, together with the relation between
the two black box functors, yields the following result:

\begin{thm} \label{lem:symplectic2}
The black box functor $\square\maps \DetBalMark \to \LinRel$ maps any open
detailed balanced Markov process $M \maps (X,q) \to (Y,v)$ to a Lagrangian relation
\[    \square(M) \maps \square(X,q) \leadsto \square(Y,v) . 
\]
\end{thm}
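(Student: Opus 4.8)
The plan is to deduce the result directly from its circuit counterpart, Lemma \ref{lem:symplectic1}, by transporting Lagrangian subspaces across the isomorphisms $\alpha$. The essential point is that each $\alpha_{X,q}$ is not merely an invertible linear map but a \emph{symplectic} isomorphism, once the source and target are equipped with the correct forms. Everything else is then a formal consequence of the factorization of $\square(M)$ established in Theorem \ref{thm:commuting_triangle}.

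First I would check that $\alpha_{X,q}\maps (\R^X \oplus \R^X, \omega) \to (\R^X \oplus \R^X, \omega_q)$ is a symplectomorphism. This is immediate from equation (\ref{two_symplectic_structures}): writing $\alpha_{X,q}(\phi,\iota) = (q\phi,\iota)$ and noting that $(q\phi)/q = \phi$, we get
\[
  \omega_q\big(\alpha_{X,q}(\phi,\iota),\, \alpha_{X,q}(\phi',\iota')\big)
  = \omega\big((\phi,\iota),(\phi',\iota')\big),
\]
so $\alpha_{X,q}$ pulls $\omega_q$ back to $\omega$. The same computation shows $\alpha_{Y,v}$ relates $\omega$ to $\omega_v$ on the $Y$ side.

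Next I would invoke the key identity from the proof of Theorem \ref{thm:commuting_triangle}, namely $\square(M) = \alpha_{Y,v} \circ \blacksquare K(M) \circ \alpha_{X,q}^{-1}$. Since each $\alpha$ is the graph of an invertible linear map, unwinding this composition of relations shows that $\square(M)$ is exactly the image of $\blacksquare K(M)$ under $\alpha_{X,q} \oplus \alpha_{Y,v}$ applied factorwise. By Lemma \ref{lem:symplectic1}, $\blacksquare K(M)$ is a Lagrangian subspace of $\overline{(\R^X \oplus \R^X, \omega)} \oplus (\R^Y \oplus \R^Y, \omega)$. A symplectomorphism intertwines the conjugate forms as well as the forms themselves, so $\alpha_{X,q} \oplus \alpha_{Y,v}$ is a symplectomorphism onto $\overline{\square(X,q)} \oplus \square(Y,v)$; as symplectomorphisms carry Lagrangian subspaces to Lagrangian subspaces (both the vanishing of the form and maximality are preserved), $\square(M)$ is Lagrangian, which is precisely the claim.

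The main obstacle is bookkeeping rather than substance: one must verify carefully that composing $\blacksquare K(M)$ with the graphs of $\alpha_{X,q}^{-1}$ and $\alpha_{Y,v}$ really yields the factorwise image $(\alpha_{X,q} \oplus \alpha_{Y,v})(\blacksquare K(M))$, and that passing to the conjugate space on the input side introduces no sign discrepancy in identifying $\alpha_{X,q}$ as a symplectomorphism of the conjugates. Both checks are routine, so the argument is short once the symplectomorphism property of $\alpha$ is established.
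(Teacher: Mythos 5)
Your proposal is correct and follows essentially the same route as the paper: both rest on the factorization $\square(M) = \alpha_{Y,v} \circ \blacksquare K(M) \circ \alpha_{X,q}^{-1}$ from Theorem \ref{thm:commuting_triangle}, on Lemma \ref{lem:symplectic1}, and on the observation via Equation (\ref{two_symplectic_structures}) that $\alpha_{X,q}$ preserves the symplectic structures. The only cosmetic difference is that the paper concludes by noting that the graph of a symplectomorphism is a Lagrangian relation and that Lagrangian relations are closed under composition, whereas you pass to the factorwise image under $\alpha_{X,q}\oplus\alpha_{Y,v}$; these amount to the same argument.
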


\begin{proof}
The key is to use Theorem \ref{thm:commuting_triangle}, which says that
\[   \square(M) = \alpha_{Y,r} \circ \blacksquare K(M) \circ \alpha_{X,q}^{-1}. \]
By Lemma \ref{lem:symplectic1} we know that 
\[   \blacksquare K(M) \maps \blacksquare(X) \leadsto \blacksquare(Y)  \]
is a Lagrangian relation.   Thus, to show $\square(M)$ is a Lagrangian relation,
it suffices to show that for any finite set with populations $(X,q)$, the linear
relation
\[   \alpha_{X,q} \maps \blacksquare(X) \leadsto \square(X,q)  \]
is Lagrangian.  Then the same will be true for $\alpha_{Y,r}$, and $\square(M)$
will be a composite of Lagrangian relations, hence Lagrangian itself.

In fact $\alpha_{X,q}$ is an isomorphism of vector spaces.  It is well
known that an isomorphism of symplectic vector spaces, preserving the symplectic 
structure, defines a Lagrangian relation.  So, it suffices to show that $\alpha_{X,q}$ preserves the symplectic structure:
\[     \omega_q(\alpha_{X,q}(\phi,\iota), \alpha_{X,q}(\phi', \iota')\big) = 
\omega\big((\phi, \iota), (\phi', \iota')\big) \]
for any choice of $(\phi, \iota), (\phi', \iota') \in \R^X \oplus \R^X$.  Recall that
\[  \alpha_{X,q}(\phi, \iota) = (q \phi, \iota).  
\]
Thus, it suffices to show that
\[   \omega_q\big((q \phi, \iota), (q \phi', \iota')\big) = \omega\big((\phi, \iota), (\phi, \iota)\big) .
\]
This follows from Equation \ref{two_symplectic_structures}.
\end{proof}

The modified inner product $\langle v, w \rangle_q$ seems to be a fundamental geometric structure associated to a set with populations.  For example, suppose we have a detailed balanced Markov process on a set with populations $(N,q)$.  Then we saw in Theorem \ref{thm:main} that the master
equation can be written as
\[   \frac{dp_n}{dt} = -q_n \frac{\partial C}{\partial p_n}  \]
where $n$ ranges over $N$.  We can rewrite this as a gradient flow equation using
the modified inner product on $\R^N$.  To do this, define the metric tensor
\[    g_{mn} = \langle e_m, e_n \rangle_q   \]
where $e_m \in \R^N$ is the standard basis vector taking the value $1$ at $m \in N$
and zero elsewhere, and similarly for $e_n$.  Then concretely we have
\[   g_{mn} = \left\{ \begin{array}{cl}
q_n^{-1} & m = n \\ 0 & m \ne n. \end{array} \right.  \]
Using this metric on $\R^N$ we can convert any 1-form on $\R^N$ into a vector field.
Thus, we can convert the differential $df$ of any function $f \maps \R^n \to \R$ into a vector field $\nabla f$.   The master equation then becomes simply
\[       \frac{d p}{d t} = - \nabla C(p)   .\]
This says that as time passes, the population $p$ moves `downhill', opposite to the gradient of $C$.  But here we need the gradient of $C$ defined using the
metric $g$, which is different than the `naive' gradient used in most of this paper.  

\section{Conclusions}
\label{sec:conclusions}

We can summarize the last section by saying that in Theorem \ref{thm:commuting_triangle} we may replace $\LinRel$ by $\LagRel$, the category of symplectic vector spaces and linear relations:
\[
   \xy
   (-20,20)*+{\DetBalMark}="1";
  (20,20)*+{\Circ}="2";
   (0,-20)*+{\LagRel}="5";
        {\ar^{K} "1";"2"};
        {\ar_{\square} "1";"5"};
        {\ar^{\blacksquare} "2";"5"};
        {\ar@{=>}^<<{\scriptstyle \alpha} (1,11); (-3,8)};
\endxy
\]

\noindent
We can also sharpen the analogy chart in the introduction:

\vskip 1em
\begin{center}
\begin{tabular}[h]{|c|c|} \hline
\ \bf Circuits & \bf Detailed balanced Markov processes \\ \hline
Potential: $\phi_i$ & Deviation: $x_i = p_i/q_i$  
\\ \hline
Current: $I_e$  & Flow: $J_e$ 
\\ \hline
Conductance: $c_e$  & Rate constant: $r_e $ 
\\ \hline
Ohm's law: $I_e = c_e (\phi_{s(e)} - \phi_{t(e)})$ &   
Flow law: $J_e = r_e p_{s(e)}$  
\\ \hline
Extended power functional: & Extended dissipation functional: 
\\
 $P(\phi) = \displaystyle{ {\textstyle \frac{1}{2}} \sum_{e_{\;}} c_e \left( \phi_{s(e)} - \phi_{t(e)} \right)^2 } $ & 
  $C(x) = \displaystyle{ {\textstyle \frac{1}{4}} \sum_{e} r_e q_e \left(x_{s(e)} - x_{t(e)}\right)^2 } $ 
\\ \hline
\end{tabular}
\end{center}

\vskip 1em
\noindent
Here we have expressed the dissipation as a function of deviations, rather than
populations.   There are some curious features in this analogy.  They seem to
arise from two facts.  First, in a circuit, the current along an edge depends on
the potential at both the source and target of that edge, while in a Markov
process the flow along an edge depends only on the population at its source.
This can be seen not only in the difference between Ohm's law and the flow law,
but also in the extra factor of one half that appears in the extended
dissipation functional. Indeed, since each edge in a circuit does the job of
two in a Markov process, we must halve the rate when converting it to a
conductance.  Thus, the extended dissipation functional contains a factor of 
$\frac{1}{4}$, while the extended power functional has a factor of $\frac{1}{2}$.

Second, in a circuit, equilibrium is attained when all the potentials $\phi_i$
are equal, while in a detailed balanced Markov process it is not the populations
$p_i$ but the deviations $x_i = p_i/q_i$ that become equal.   Nonetheless the
analogy is close enough that we can make it into a functor, namely $K$, together
with a natural isomorphism relating the black boxing of circuits and that of
detailed balanced Markov processes.

\subsection*{Acknowledgements}

We thank Jason Erbele for giving this paper a careful reading.  BF would like to thank Hertford College, the Centre for Quantum Technologies, the Clarendon Fund and Santander for their support.  BP would like to thank the Centre for Quantum Technologies and the NSF's East Asia and Pacific Summer Institutes program.

\appendix

\section{Decorated cospan categories}
\label{sec:decorated}

This is a brief introduction to decorated cospan categories and functors
between them, reviewing material in \cite{Fong}.

Decorated cospan categories combine information from two monoidal categories.
The first is a category $\CC$ with finite colimits, where the tensor product
is given by the categorical coproduct. From this category we draw the cospans
that we decorate. The second is a monoidal category $(\D,\otimes)$. The objects
in this category represent collections of possible decorations. We construct the
decorated cospan category from a lax monoidal functor between these categories.

\begin{defn} \label{defn.lmf}
  Let $(\CC,\boxtimes)$, $(\D,\otimes)$ be monoidal categories. A \define{lax
  monoidal functor} 
  \[
    (F,\varphi): (\CC,\boxtimes) \to (\D,\otimes)
  \]
  comprises a functor $F: \CC \to D$ and natural transformations 
  \[
    \varphi_{-,-}: F(-)\otimes F(-) \Rightarrow F(-\boxtimes -),
  \]
  \[
    \varphi_1: 1_\D \Rightarrow F1_{\CC}
  \]
  such that three so-called coherence diagrams commute. These diagrams are 
 \[
   \xymatrix{ F(A) \otimes (F(B) \otimes F(C)) \ar[d]_{\mathrm{id} \otimes \varphi_{B,C}} 
                    \ar[rr]^{\sim} &&
                    (F(A) \otimes F(B)) \otimes F(C) \ar[d]^{\varphi_{A,B}
		    \otimes \mathrm{id}}  \\
                    F(A) \otimes F(B \boxtimes C) \ar[d]_{\varphi_{A,B\boxtimes C}} &&
                     F(A \boxtimes B) \otimes F(C) \ar[d]^{\varphi_{A\boxtimes B,C}} \\
                     F(A \boxtimes (B\boxtimes C)) \ar[rr]^{\sim} && 
                     F((A \boxtimes B) \boxtimes C)                    
    }
  \]
where the horizontal arrows come from the associators for $\otimes$ and
$\boxtimes$, and
\[ 
  \xymatrix{
    1_\D \otimes F(A) \ar[r]^{\varphi_1 \otimes \mathrm{id}} \ar[d]_{\sim} &
    F(1_{\CC}) \otimes F(A) \ar[d]^{\varphi_{1_\CC,A}} \\
    F(A) & F(1_\CC \boxtimes A) \ar[l]_{\sim}
  }
\]
and
\[ 
  \xymatrix{
    F(A) \otimes 1_\D \ar[r]^{\mathrm{id} \otimes \varphi_1} \ar[d]_{\sim} &
    F(A) \otimes F(1_\CC) \ar[d]^{\varphi_{A,1_\CC}} \\
    F(A) & F(A \boxtimes 1_\CC) \ar[l]_{\sim}
  }
\]
where the isomorphisms come from the unitors for $\otimes$ and $\boxtimes$.
\end{defn}

The decorated cospan construction is then as follows:

\begin{lem} \label{lemma:fcospans}
  Suppose $\CC$ is a category with finite colimits and let
  \[
    (F,\varphi)\maps (\CC,+) \longrightarrow (\D, \otimes)
  \]
  be a lax monoidal functor, where $+$ stands for the coproduct in $\CC$. 
   We may define a category $F\Cospan$, the category of
   \define{$F$-decorated cospans}, whose objects are those of $\CC$ and whose
   morphisms are equivalence classes of pairs 
  \[
    (X \stackrel{i}\longrightarrow N \stackrel{o}\longleftarrow Y,\;\; 1
    \stackrel{s}\longrightarrow FN)
  \]
  comprising a cospan $X \stackrel{i}\rightarrow N \stackrel{o}\leftarrow Y$ in
  $\CC$ together with a morphism $1 \stackrel{s}\rightarrow F(N)$.  We call $s$ the    
   \define{decoration} of the decorated
  cospan. The equivalence relation arises from isomorphism of cospans; an 
  isomorphism of cospans induces a one-to-one correspondence between their   decorations.

  Composition in this category is given by pushout of cospans in $\CC$: 
  \[
    \xymatrix{
      && N+_YM \\
      & N \ar[ur]^{j_N} && M \ar[ul]_{j_M} \\
      \quad X\quad \ar[ur]^{i_X} && Y \ar[ul]_{o_Y} \ar[ur]^{i_Y} &&\quad Z \quad \ar[ul]_{o_Z}
    }
  \]
  paired with the pushforward 
  \[
    1 \stackrel{\lambda^{-1}}\longrightarrow 1 \otimes 1 \stackrel{s \otimes
    t}\longrightarrow FN \otimes FM \stackrel{\varphi_{N,M}}\longrightarrow
    F(N+M) \stackrel{F[j_N,j_M]}\longrightarrow F(N+_YM)
  \]
  of the tensor product of the decorations along the coproduct of the pushout maps.
\end{lem}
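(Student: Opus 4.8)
The plan is to verify directly the three things that make $F\Cospan$ a category: that composition is well defined on equivalence classes, that it is associative, and that the proposed identities satisfy the unit laws. Throughout I would lean on two organizing facts: pushouts in $\CC$ are functorial and associative up to canonical isomorphism, and the pushforward of decorations is controlled entirely by naturality of $\varphi$ together with the three coherence diagrams of Definition \ref{defn.lmf}. The identity morphism on an object $X$ I would take to be the identity cospan $X \xrightarrow{1_X} X \xleftarrow{1_X} X$ decorated by the \emph{empty decoration} $e_X = F(!_X)\circ \varphi_1 \maps 1 \to F(1_\CC) \to F(X)$, where $!_X \maps 1_\CC \to X$ is the unique map out of the initial object $1_\CC$ (the unit for the coproduct $+$).

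First I would check well-definedness. Given alternative representatives of each of the two factors, related to the originals by isomorphisms of cospans $\theta \maps N \to N'$ and $\theta' \maps M \to M'$ compatible with the decorations (so $F(\theta)\circ s = s'$ and $F(\theta')\circ t = t'$), the universal property of the pushout supplies a unique isomorphism $\theta'' \maps N +_Y M \to N' +_Y M'$ commuting with the coprojections, hence with the legs of the composite cospans. It then remains to see that $F(\theta'')$ carries one composite decoration to the other. This reduces to the identity $\theta''\circ [j_N,j_M] = [j'_N,j'_M]\circ(\theta+\theta')$; applying $F$, invoking functoriality of $F$ and the naturality square $F(\theta+\theta')\circ\varphi_{N,M} = \varphi_{N',M'}\circ(F(\theta)\otimes F(\theta'))$, and finally using $F(\theta)s=s'$, $F(\theta')t=t'$, gives the claim. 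The same manipulation shows that an isomorphism of cospans induces a bijection on decorations, so the equivalence relation is well posed.

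Next I would establish associativity. For three composable decorated cospans the two bracketings produce the iterated pushouts $(N+_Y M)+_Z P$ and $N+_Y(M+_Z P)$, which are canonically isomorphic as objects under cospans; since morphisms are taken up to isomorphism of cospans, this part is automatic. The content is that the two induced composite decorations agree under this isomorphism, and this is precisely what the associativity coherence diagram of Definition \ref{defn.lmf} delivers, once one rewrites both pushforwards along the comparison maps using naturality of $\varphi$ and functoriality of $F$. For the unit laws I would compute the composite with $\mathrm{id}_Y$ on the right: the pushout $N+_Y Y$ is canonically $N$ (gluing along an identity leg), and under this identification the composite cospan becomes $(X \xrightarrow{i} N \xleftarrow{o} Y)$ while the composite decoration becomes $F(\rho)\circ\varphi_{N,1_\CC}\circ(s\otimes e_Y)\circ\lambda^{-1}$, where $\rho\maps N+1_\CC \to N$ is the unitor. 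Expanding $e_Y = F(!_Y)\circ\varphi_1$ via naturality of $\varphi$, so that $[1_N,o]\circ(1_N + !_Y)$ collapses to $\rho$, and then invoking the right-unit coherence diagram together with naturality of the unitor in $\D$, returns exactly $s$. The left unit law is symmetric.

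The main obstacle is not any single step but the bookkeeping in associativity and the unit laws: one must track each decoration simultaneously through the coproduct, the comparison maps $\varphi$, and the pushout maps, and it is exactly the three coherence diagrams of the lax monoidal functor that do the real work in forcing these to agree, everything else being a formal consequence of the universal property of pushouts. A clean way to present the argument is to observe that it is precisely the content of Fong's decorated cospan theorem \cite{Fong}, to which one may appeal for the full coherence verification.
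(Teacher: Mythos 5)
Your proposal is correct in substance, but it takes a different route from the paper: the paper's entire proof of this lemma is a citation to Fong's Proposition 3.2 in \cite{Fong}, whereas you reconstruct the verification directly. What you have written is essentially an outline of the proof of that cited proposition: the identity decoration $F(!_X)\circ\varphi_1$ is exactly the one used there, the well-definedness argument via the induced isomorphism of pushouts plus naturality of $\varphi$ is the standard one, and you correctly identify that the associativity and unit laws for decorations are carried by the three coherence diagrams of the lax monoidal functor together with the universal property of pushouts. The one place your sketch is slightly garbled is the right unit law, where you write the composite decoration with $\varphi_{N,1_\CC}$ one step too early (before expanding $e_Y=F(!_Y)\circ\varphi_1$ and applying naturality of $\varphi$ in the second variable, the comparison map appearing is $\varphi_{N,Y}$); the subsequent sentence makes clear you intend the correct manipulation, so this is a notational slip rather than a gap. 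The trade-off between the two approaches is the usual one: the paper's citation keeps the exposition short and defers all coherence bookkeeping to \cite{Fong}, while your direct argument makes the lemma self-contained at the cost of exactly the bookkeeping you flag as the main obstacle; your closing remark that one may simply appeal to Fong's theorem is precisely what the paper does.
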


\begin{proof}
This is \cite[Proposition 3.2]{Fong}.
\end{proof}

When the functor $F$ is lax symmetric monoidal, the category $F\Cospan$
becomes a dagger compact category.  To be more precise, we define the tensor product of objects $X$ and $Y$ of $F\Cospan$ to be their coproduct $X+Y$, and define the tensor product of decorated cospans $(X \stackrel{i_X}\longrightarrow N
\stackrel{o_Y}\longleftarrow Y,\;\; 1 \stackrel{s}\longrightarrow FN)$ and
$(X' \stackrel{i_{X'}}\longrightarrow N' \stackrel{o_{Y'}}\longleftarrow
Y',\;\; 1 \stackrel{t}\longrightarrow FN')$ to be 
\[
  \left(
  \begin{aligned}
    \xymatrix{
      & N+N' \\  
      X+X' \ar[ur]^{i_X+i_{X'}} && Y+Y' \ar[ul]_{o_Y+o_{Y'}}
    }
  \end{aligned}
  ,
  \qquad
  \begin{aligned}
    \xymatrix{
      F(N+N') \\
      1 \ar[u]_{\varphi_{N,N'} \circ (s \otimes t) \circ \lambda^{-1}}
    }
  \end{aligned}
  \right).
\]
We also write $+$ for the tensor product in $F\Cospan$. 

The dagger structure for $F\Cospan$ reflects the cospan part of a decorated cospan, 
while keeping the same decoration: 
\[
  \dagger(X \stackrel{i}\longrightarrow N \stackrel{o}\longleftarrow Y,\;\;
  1 \stackrel{s}\longrightarrow FN) = (Y \stackrel{o}\longrightarrow N
  \stackrel{i}\longleftarrow X,\;\; 1 \stackrel{s}\longrightarrow FN).
\]
This gives the following fact:

\begin{lem} \label{dagger_lemma}
  Let $F$ be a lax symmetric monoidal functor. Then with the above structure, the
  category $F\Cospan$ is a dagger compact category.
\end{lem}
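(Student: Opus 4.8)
The plan is to reduce everything to the well-understood case of undecorated cospans and then check that the decorations ride along coherently. Recall that for a category $\CC$ with finite colimits the category of cospans $\Cospan(\CC)$ is already a dagger compact category: the monoidal product is the coproduct, the dagger reverses the two legs of a cospan, and every object is self-dual with cup and cap given by folding cospans. The content of the lemma is that the lax symmetric monoidal functor $F$ equips the decorated version $F\Cospan$ with exactly the same structure, the decorations being tracked by the coherence maps $\varphi_{-,-}$ and $\varphi_1$. For the symmetric monoidal structure itself, Lemma \ref{lemma:fcospans} already supplies the category, and the tensor product on objects and morphisms is the one spelled out in the discussion preceding the statement. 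The associators, unitors, and braiding are inherited from $\CC$; they are cospan isomorphisms whose naturality and coherence reduce to the three coherence diagrams of Definition \ref{defn.lmf}. Here the symmetry of $F$ is essential: it is precisely what makes the braiding cospan an isomorphism of \emph{decorated} cospans rather than merely of underlying cospans.

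Next I would treat the dagger. Define $\dagger$ to reverse the cospan while keeping the decoration fixed, as in the displayed formula above the statement. That $\dagger$ is the identity on objects and involutive is immediate, since reversing the two legs twice returns the original cospan and the decoration is untouched. Contravariant functoriality, $(gf)^\dagger = f^\dagger g^\dagger$, follows from the symmetry of the pushout in its two inputs: the apex $N+_Y M$ of a composite is canonically isomorphic to the apex obtained by pushing out in the reversed order, and under this isomorphism the pushforward decoration $F[j_N,j_M]\circ\varphi_{N,M}$ is carried to the corresponding decoration of the reversed composite. Compatibility of $\dagger$ with the tensor product is a similar check.

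Finally I would supply the compact structure. Each object $X$ is its own dual; the unit and counit are the folding cospans $\emptyset \to X \leftarrow X+X$ and $X+X \to X \leftarrow \emptyset$, built from the codiagonal $[\id,\id]\maps X+X \to X$ and the unique map out of the initial object, each decorated by the image of the empty decoration under $\varphi_1$. The zigzag (snake) identities hold at the level of underlying cospans by the standard calculation in $\Cospan(\CC)$, and the decorations on the two sides agree because the coherence axioms for $(F,\varphi)$ force the relevant composites of $\varphi_{-,-}$ and $\varphi_1$ to match. Dagger compatibility is then automatic, since the cup is the dagger of the cap at the cospan level while the decorations coincide.

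I expect the only genuine work — and hence the main obstacle — to be this decoration bookkeeping: verifying that each cospan-level identity (contravariant functoriality of $\dagger$, the snake equations, and well-definedness on isomorphism classes) lifts to a strict equality of decorations. Each such verification is a diagram chase that ultimately rests on the coherence diagrams of Definition \ref{defn.lmf} and the naturality of $\varphi$, but keeping track of the pushforwards through the coherence maps is where care is required. In the interest of brevity one may instead simply invoke the general theorem of \cite{Fong}, of which this lemma is the stated instance.
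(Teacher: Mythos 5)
Your proposal is correct, and its closing sentence is in fact exactly what the paper does: the proof given there is a one-line citation of Theorem 3.4 of Fong's decorated cospans paper, of which this lemma is a special case. Your preceding sketch (monoidal structure inherited from cospans, dagger by leg reversal with the decoration fixed, self-duality via folding cospans, with the real work being the decoration bookkeeping through $\varphi$ and the coherence diagrams) is a faithful outline of the argument behind that citation rather than a different route.
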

\begin{proof}
This is a special case of \cite[Theorem 3.4]{Fong}.
\end{proof}

Decorated cospans allow us to understand the diagrammatic nature of structures
on finite sets, such as Markov processes.  We also need the ability to map one kind of
structure to another. This is provided by monoidal natural transformations
between our lax monoidal functors.

\begin{defn}
  A \define{monoidal natural transformation} $\alpha$ from a lax monoidal
  functor
  \[    (F,\varphi)\maps (\CC,\boxtimes) \longrightarrow (\D,\otimes)
  \]
  to a lax monoidal functor
  \[
    (G,\gamma)\maps (\CC,\boxtimes) \longrightarrow (\D,\otimes)
  \]
  is a natural transformation $\alpha\maps F \Rightarrow G$ such that
  \[
    \xymatrix{
      F(A) \boxtimes F(B) \ar[r]^{\varphi_{A,B}} \ar[d]_{\alpha_A \boxtimes
      \alpha_B} & F(A \otimes B) \ar[d]^{\alpha_{A\otimes B}} \\
      G(A) \boxtimes G(B) \ar[r]^{\gamma_{A,B}} & G(A \otimes B)
    }
  \]
  commutes.
\end{defn}

We then construct functors between decorated cospan categories as follows:

\begin{lem} \label{lemma:decoratedfunctors}
  Let $\CC$ and $\CC'$ be categories with finite colimits, abusing
  notation to write the coproduct in both categories as $+$, and let $(\D,
  \otimes)$ and $(\D',\boxtimes)$ be symmetric monoidal categories. Further let
  \[
    (F,\varphi) \maps (\CC,+) \longrightarrow (\D,\otimes)
  \]
  and
  \[
    (G,\gamma) \maps (\CC',+) \longrightarrow (\D',\boxtimes)
  \]
  be lax symmetric monoidal functors. This gives rise to decorated cospan
  categories $F\Cospan$ and $G\Cospan$. 

  Suppose then that we have a finite colimit-preserving functor $A \maps \CC \to
  \CC'$ with accommopanying natural isomorphism $\alpha \maps A(-)+A(-)
  \Rightarrow A(-+-)$, a lax monoidal functor $(B,\beta) \maps (\D, \otimes) \to
  (\D', \boxtimes)$, and a monoidal natural transformation $\theta \maps (B
  \circ F, B\varphi\circ\beta) \Rightarrow (G \circ A, G\alpha\circ\gamma)$.
  This may be depicted by the diagram:
  \[
    \xymatrixcolsep{3pc}
    \xymatrixrowsep{3pc}
    \xymatrix{
      (\CC,+) \ar^{(F,\varphi)}[r] \ar_{(A,\alpha)}[d] \drtwocell
      \omit{_\:\theta} & (\D,\otimes) \ar^{(B,\beta)}[d]  \\
      (\CC',+) \ar_{(G,\gamma)}[r] & (\D',\boxtimes).
    }
  \]

  Then we may construct a symmetric monoidal dagger functor 
  \[
    (T, \tau): F\Cospan \longrightarrow G\Cospan
  \]
  mapping each object $X \in F\Cospan$ to $AX \in G\Cospan$, and
  each morphism
  \[
    (X \stackrel{i}\longrightarrow N \stackrel{o}\longleftarrow Y,\quad 1_{\D}
    \stackrel{s}\longrightarrow FN)
  \]
  to
  \[
    (AX \stackrel{Ai}\longrightarrow AN \stackrel{Ao}\longleftarrow AY,\quad 1_{\D'}
    \stackrel{\theta_N\circ Bs\circ\beta_1}\longrightarrow GAN).
  \]
\end{lem}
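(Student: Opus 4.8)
The plan is to verify directly that the prescribed data $(T,\tau)$ obey the axioms of a symmetric monoidal dagger functor, handling the underlying cospan and its decoration in parallel. Two inputs do all the work: $A$ preserves finite colimits, so $A$ applied to a pushout cospan is canonically (via $\alpha$) the pushout of the $A$-images; and $\theta$ is a \emph{monoidal} natural transformation, so its laxator-compatibility square commutes. First I would check that $T$ is well defined on the isomorphism classes that constitute the morphisms of $F\Cospan$. An isomorphism of cospans $h\maps N\to N'$ transports a decoration $s\maps 1\to FN$ to $Fh\circ s$; applying $A$ sends $h$ to an isomorphism $Ah$, and the naturality of $\theta$ together with the functoriality of $B$ gives $\theta_{N'}\circ B(Fh\circ s)\circ\beta_1 = GAh\circ(\theta_N\circ Bs\circ\beta_1)$, so isomorphic decorated cospans have isomorphic images and $T$ descends to equivalence classes.

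The heart of the argument is preservation of composition, which I expect to be the main obstacle. Given $(X\to N\leftarrow Y, s)$ and $(Y\to M\leftarrow Z, t)$, their composite in $F\Cospan$ has apex the pushout $N+_Y M$ and decoration $F[j_N,j_M]\circ\varphi_{N,M}\circ(s\otimes t)\circ\lambda^{-1}$. Since $A$ preserves this pushout, the underlying cospan of $T$ of the composite agrees with the composite of the $T$-images through the canonical comparison $A(N+_Y M)\cong AN+_{AY}AM$ built from $\alpha$. The real content is that the decorations agree: transported across that comparison,
\[ \theta_{N+_Y M}\circ B\big(F[j_N,j_M]\circ\varphi_{N,M}\circ(s\otimes t)\circ\lambda^{-1}\big)\circ\beta_1 \]
must equal the $G\Cospan$-composite decoration
\[ G[Aj_N,Aj_M]\circ\gamma_{AN,AM}\circ\big((\theta_N\circ Bs\circ\beta_1)\boxtimes(\theta_M\circ Bt\circ\beta_1)\big)\circ\lambda^{-1}. \]
I would assemble this equality from four commuting pieces: the naturality of $\theta$ along $[j_N,j_M]$; the monoidal-naturality square of $\theta$, relating the laxator $B\varphi_{N,M}\circ\beta$ of $B\circ F$ to the laxator $G\alpha_{N,M}\circ\gamma_{AN,AM}$ of $G\circ A$; the lax coherence of $(B,\beta)$, which lets $B$ commute past the tensor product and the unitor $\lambda^{-1}$; and the coherence identity $A[j_N,j_M]\circ\alpha_{N,M}=[Aj_N,Aj_M]$ for $\alpha$. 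Chasing these together yields the claim. Preservation of identities is the same computation in miniature: the empty decoration on $X$ is $F(!)\circ\varphi_1$, and $\theta_X\circ B(F(!)\circ\varphi_1)\circ\beta_1$ reduces, by the unit triangles of $\theta$ and $\beta$ and the fact that $A$ preserves the initial object, to the empty decoration on $AX$.

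It then remains to supply and check the monoidal comparison and the dagger. On objects the structure isomorphism $\tau$ is induced by $\alpha\maps AX+AY\Rightarrow A(X+Y)$; its compatibility with tensor-product decorations, and the associativity and unit coherence diagrams, follow from the corresponding coherences of $\alpha$ and the monoidal naturality of $\theta$, while symmetry is inherited from the symmetries of the coproducts and of $\otimes$, $\boxtimes$. The dagger is the cheapest step: $T$ applies $A$ to both legs of a cospan symmetrically and assigns the decoration $\theta_N\circ Bs\circ\beta_1$, which depends only on the apex $N$ and not on which leg is deemed input or output; hence $T$ commutes on the nose with the leg-swapping dagger, so $T(f^\dagger)=T(f)^\dagger$. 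Combining these verifications produces the desired symmetric monoidal dagger functor $(T,\tau)\maps F\Cospan\to G\Cospan$.
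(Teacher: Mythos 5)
Your verification is correct, but it is worth noting how it relates to the paper: the paper does not prove this lemma at all, it simply records it as a special case of Theorem 4.1 of Fong's \emph{Decorated cospans}, and your argument is essentially a self-contained reconstruction of the proof of that cited theorem. The decomposition you give is the right one, and the four ingredients you isolate for preservation of composition --- naturality of $\theta$ along $[j_N,j_M]$, the monoidal-naturality square of $\theta$ relating the laxator $B\varphi\circ\beta$ of $B\circ F$ to the laxator $G\alpha\circ\gamma$ of $G\circ A$, the lax coherence and naturality of $\beta$, and the compatibility of $\alpha$ with the pushout comparison $A(N+_YM)\cong AN+_{AY}AM$ --- are exactly what is needed; the treatment of well-definedness on isomorphism classes and of the dagger is also right, the latter being immediate because the decoration depends only on the apex. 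One small point deserves flagging: your identity-preservation step invokes the ``unit triangles of $\theta$,'' i.e.\ the compatibility of $\theta$ with the unit constraints $\varphi_1$, $\gamma_1$, $\beta_1$. The paper's Appendix definition of monoidal natural transformation states only the binary square and omits the unit axiom, so strictly speaking you are using a condition the paper never writes down (though it is part of the standard definition and does hold in all the applications in the paper). Apart from making that hypothesis explicit, the direct proof goes through, and it buys the reader a self-contained argument where the paper offers only a pointer to the literature.
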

\begin{proof}
This is a special case of \cite[Theorem 4.1]{Fong}.
\end{proof}

\end{document}